\documentclass[11pt, oneside]{article}   	% use "amsart" instead of "article" for AMSLaTeX format
% Load packages
\usepackage{amsmath,amssymb,amsthm,color,url,framed}
\usepackage{amscd}
\usepackage{enumerate}
\usepackage[margin=2cm]{geometry}                		% See geometry.pdf to learn the layout options. There are lots.
\geometry{letterpaper}                   		% ... or a4paper or a5paper or ... 
\usepackage{graphicx}				% Use pdf, png, jpg, or eps?? with pdflatex; use eps in DVI mode
								% TeX will automatically convert eps --> pdf in pdflatex		
\usepackage[color=yellow]{todonotes}
\usepackage{hyperref}
\hypersetup{
colorlinks=true,
}

\usepackage{fancyhdr}
\pagestyle{fancy}
\headheight=15pt
\fancyhead{}
\fancyfoot{}

\fancyhead[LO,RE]{\MakeUppercase \rightmark}
\fancyhead[LE,RO]{\textbf{\thepage}}

%SetFonts

\setlength{\parskip}{0.5em}
\setlength{\parindent}{0em}

% Assign theorem environments

\numberwithin{equation}{section}
\theoremstyle{plain}
\newtheorem{theorem}{Theorem}[section]     
\newtheorem{corollary}[theorem]{Corollary}             
\newtheorem{lemma}[theorem]{Lemma}              
\newtheorem{proposition}[theorem]{Proposition}

\theoremstyle{definition}
\newtheorem{definition}[theorem]{Definition}             
\theoremstyle{remark}
\newtheorem{remark}[theorem]{Remark}              
\newtheorem{example}[theorem]{Example}                    

% Assign custom commands

\def\p{{\partial}}
\def\pa{{\partial}}
\def\bx{{\bf x}}

\def\bk{{\bf k}}
\def\bm{{\bf m}}
\def\bp{{\bf p}}

\def\bR{{\bf R}}
\def\bu{\mathbf{u}}

\def\bv{{\bf v}}

\def\bxi{{\boldsymbol \xi}}

\newcommand{\scp}[2]{{\big\langle {#1}\, , \, {#2}\big\rangle}}
\newcommand{\Scp}[2]{{\Big\langle {#1}\, , \, {#2}\Big\rangle}}

\newcommand{\ob}[1]{\overline{#1}}

\newcommand{\mb}[1]{\mbox{\boldmath{$#1$}}}
\newcommand{\sym}[1]{\boldsymbol{#1}}
\newcommand{\mbb}[1]{\mathbb{#1}}

\newcommand*\mean[1]{\overline{#1}}
\DeclareMathOperator{\curl}{curl}

\newcommand{\E}[1]{\mathbb{E}\left[{#1}\right]}

\usepackage[scr=boondoxo, scrscaled=1]{mathalfa}

\title{Stochastic Variational Formulations of Fluid Wave-Current Interaction}
\author{Darryl D Holm\\
Imperial College London\\
email: d.holm@ic.ac.uk}
\date{}							% Activate to display a given date or no date

\begin{document}
\maketitle
%\section{}
%\subsection{}

%\tableofcontents

\begin{abstract}
We are modelling multi-scale, multi-physics uncertainty in wave-current interaction (WCI). To model uncertainty in WCI, we introduce stochasticity into the wave dynamics of two classic models of WCI; namely,  the Generalised Lagrangian Mean (GLM) model and the Craik--Leibovich (CL) model.  \smallskip

The key idea for the GLM approach is the separation of the Lagrangian (fluid) and Eulerian (wave) degrees of freedom in Hamilton's principle. This is done by coupling an Euler--Poincar\'e {\it reduced Lagrangian} for the current flow and a {\it phase-space Lagrangian} for the wave field. WCI in the GLM model involves the nonlinear Doppler shift in frequency of the Hamiltonian wave subsystem, which arises because the waves propagate in the frame of motion of the Lagrangian-mean velocity of the current. In contrast, WCI in the CL model arises because the fluid velocity is defined relative to the frame of motion of the Stokes mean drift velocity, which is usually taken to be prescribed, time independent and driven externally. 

We compare the GLM and CL theories by placing them both into the general framework of a stochastic Hamilton's principle for a 3D Euler--Boussinesq (EB) fluid in a rotating frame. In other examples, we also apply the GLM and CL methods to add wave physics and stochasticity to the familiar 1D and 2D shallow water flow models.

 The differences in the types of stochasticity which arise for GLM and CL models can be seen by comparing the Kelvin circulation theorems for the two models. The GLM model acquires stochasticity in its Lagrangian transport velocity for the currents and also in its group velocity for the waves.  However, the CL model is based on defining the Eulerian velocity in the integrand of the Kelvin circulation relative to the Stokes drift velocity induced by waves driven externally. Thus, the Kelvin theorem for the stochastic CL model can accept stochasticity in its both its integrand and in the Lagrangian transport velocity of its circulation loop. 
 
In an appendix, we also discuss dynamical systems analogues of WCI.
\end{abstract}

%\tableofcontents

\section{Introduction}

The first objective of this paper is to build a consistent variational theory of the interactions of the wave and current degrees of freedom for two quite different approaches to mean wave-current interaction (WCI). The two different approaches are the generalised Lagrangian mean (GLM) model  \cite{AM1978} for wave motion in the ocean or atmosphere, and the Craik--Leibovich (CL) model \cite{CraikLeibovich1976} for air-sea interaction due to wind and waves on the sea surface.

After this first objective has been achieved, we will introduce several types of noise into these deterministic variational formulations, and develop a new theoretical basis for modelling uncertainty in a theory of WCI which \emph{combines} aspects of both GLM and CL. 
%The present paper provides the theoretical basis for this future work, possibly along the lines of \cite{CCHPS2018,CCHPS2019a,CCHPS2019b,CCHPS2020}. However, no numerical simulations of the equations derived here are presented and consequently no attempts at actually quantifying uncertainty are presented in this paper.

\paragraph{The generalised Lagrangian mean (GLM) model.}
In the GLM model of WCI, the current is interpreted as the Lagrangian-mean flow velocity, $\bu^L(\bx,t)$, while the wave phase, $\phi(\bx,t)$ and the wave action density, $N(\bx,t)$ are interpreted as \emph{Eulerian-mean fields}. This dual interpretation is intuitively clear, because the waves would propagate through the fluid even if it were not moving. It is also clear in Kelvin's circulation integral, in which the loop is moving in a Lagrangian sense and the integrand is an Eulerian quantity in fixed spatial coordinates. Thus, waves and currents would naturally be treated separately in applying Hamilton's variational principle, namely $\delta S = 0$ for an action integral $S = \int_{t_1}^{t_2} \ell(\bu^L,N,\phi)dt$, to generate coupled WCI dynamics in the GLM model. 

The key idea we use for deriving the Hamilton's principle for WCI analysis in the GLM model is the introduction of a {\it phase-space Lagrangian} (PSL) written as a Legendre transform $L(\phi,\partial_t{\phi})=\langle N,\partial_t\phi \rangle - H_W(N,\nabla\phi)$ for the canonically conjugate wave degrees of freedom $(N,\phi)$. Here, the brackets $\scp{\,\cdot\,}{\,\cdot\,}$ denote $L^2$ pairing of dual variables. This PSL is manifestly invariant under translations in the phase $\phi$. Noether's theorem then implies conservation of the volume integral of the conjugate momentum $N$ (the wave action density). Our approach follows the PSL formulation of quantum mechanics introduced in 1934 by Frenkel and Dirac \cite{FrenkelDirac-QM1934}. This approach has also become a mainstay of plasma physics, where it has been used to model the time-mean (ponderomotive) forces exerted \emph{externally} by rapid electromagnetic oscillations (e.g. microwaves) on the slow dynamics of a fluid plasma \cite{Dewar1970,Dewar1973,RGL1981,SKH1986,ANK-DH1984}.
For a modern application of the Frenkel--Dirac phase-space Lagrangian in the classical-quantum interaction for non-adiabatic electron dynamics in molecular chemistry, see \cite{FHT2019}. For a recent treatment of phase-space Lagrangians for fast-slow WKB dynamics of high-frequency acoustic waves interacting with a larger-scale compressible isothermal flow, see \cite{BurbyRuiz2019}.

\paragraph{GLM main result.}
For GLM with fluid variables denoted $(u^L,a,b)$ and canonically conjugate wave variables $(q,p)$, the main result of the paper is Theorem \ref{SALT-SNWP-Thm}. This theorem derives a \emph{closed} dynamical GLM theory of WCI which can be extended into \emph{stochastic} wave-current dynamics from Hamilton's principle with action integral given by the following sum of Lebesque and Stratonovich time integrals of a fluid Lagrangian with stochastic advection constraints and a phase-space Lagrangian for the wave variables with a stochastic Hamiltonian,
\begin{align}
\begin{split}
S(u^L,a,b,q,p)  =
\int_{t_1}^{t_2} & \hspace{-2mm}
\underbrace{\
\ell(u^L,a) dt\ 
}_{\hbox{Fluid Lagrangian}}
+ \int_{t_1}^{t_2} \hspace{-2mm}
\underbrace{\
\Scp{  b }{{{\color{red}\rm d} a} + \mathcal{L}_{{\color{red}\rm d}x_t} a}_V\ 
}_{\hbox{Advection Constraint}} 
\\&+ \int_{t_1}^{t_2} 
\underbrace{\ 
\Scp{  p }{ {\color{red}\rm d} q + \mathcal{L}_{{\color{red}\rm d} x_t} q }_V 
-  
\Big(\mathcal{H}(q,p) \,dt + \mathcal{K}(q,p) \circ dB_t \Big)
}_{\hbox{Legendre Transformation in Stochastic Fluid Frame}}
\,,\end{split}
\label{SVP2.2intro}
\end{align}
in which the differential ${\color{red}\rm d}$ appearing,  for example, in the stochastic Eulerian fluid velocity ${\color{red}\rm d} x_t(x)$ denotes the Stratonovich temporal integral. That is, the time integral of the semimartingale vector field 
\begin{align}
{\color{red}\rm d} x_t(x) = u^L(x,t)\,dt + \sum_i \xi_i(x)\circ dW_i(t)\,,
\label{StochVelIntro}
\end{align}
generates the stochastic Lagrangian fluid flow
\begin{align}
x_t(x) -  x_0(x) = \int_0^t u^L(x,s)\,ds + \sum_i \int_0^t \xi_i(x_s(x))\circ dW_i(s)\,,
\label{StochVelIntro}
\end{align}

with spatially dependent correlation eigenvectors $\xi_i(x)$, $i=1,2,\dots,N$, fluid drift velocity vector field $u^L$, advected fluid quantities $(a,b)$, and wave phase-space mean fields $(q,p)$.
Here, the distinct fluid and wave Stratonovich Brownian motions are denoted, respectively, by $\circ dW_i(t)$ and $\circ dB_t$. Note that the wave Hamiltonian in the last term of the action integral in \eqref{SVP2.2intro} is also a semimartingale. 

Mean quantities in this stochastic GLM model are defined as averages over the rapid phase of the wave component of the flow \emph{at fixed Lagrangian coordinate}, as done, e.g., in \cite{GH1996,ANK-DH1984}. Consequently, the PSL basis for the WCI closure derived here is natural in the GLM  approach \cite{AM1978}. In the PSL approach, the WCI closure depends on the dispersion relation, $\omega(\bk)$, which connects the wave-frequency scalar field, $\omega(\bx,t)$, with the wave-number covector field, $\bk(\bx,t)=\nabla\phi(\bx,t)$. The dispersion relation, $\omega(\bk)$, identifies the type of wave being considered in the WCI. It also will determine the Hamiltonian dynamics of the canonically conjugate variables of the wave field,  $(\phi,N)$. The wave variables $(\phi,N)$ evolve in the local reference frame moving with the GLM transport velocity of the mean current, $\bu^L(\bx,t)$. Thus, the wave and current momentum dynamics each contribute independently to the total circulation around every material loop, as interpenetrating fluid degrees of freedom. The independence of these contributions to the total circulation represent the well known non-acceleration result for GLM \cite{AM1978}.  

The GLM closure introduced here is flexible enough to treat a variety of different types of WCI, and it also allows the wave and current components of the flow to be made stochastic independently.  

\paragraph{The Craik--Leibovich (CL) model.} 

Craik and Leibovich \cite{CraikLeibovich1976} derived an expression for the wave-current interaction called the Stokes vortex force (SVF) and showed that the SVF induces roll structures similar to the Langmuir circulations (LCs) observed in the oceanic surface boundary layer driven by the wind. Today, the SVF representation of the wave-current interaction in the momentum equation is often used for numerically modelling the effects of LCs on mixed layer turbulence by using large-eddy simulations (LES), although the theoretical issues are by no means settled \cite{Fujiwara-etal2018,Fujiwara-MellorReply2019,Mellor-Fujiwara2019,Tejada-Martinez2020}. The main discussion of the CL model in this paper is treated in sections \ref{sec:det-CL} and \ref{sec:stoch-CL}.

\paragraph{CL main result.}
The main result of the paper for CL is Theorem \ref{SALT-OU-CL-Thm} which introduces Ornstein-Uhlenbeck (OU) wave dynamics into the CL equations for EB fluid flow as a system of Euler-Poincar\'e equations obtained from Hamilton's principle with the action integral \eqref{Lag-det-CL}. 
The corresponding extension of the CL model for the 3D flow of EB fluid is obtained as an Euler-Poincar\'e equation for Hamilton's principle $\delta S=0$ with action integral given by, cf. equation \eqref{Lag-det-CL},
\begin{equation}
S = \int_{t_1}^{t_2} \int  \left[ \frac12 D |\bu|^2 
 - D \bu \cdot \bu^S(\bx)N_t - g b D  z
 - p(D-1)\right]\,d^3 x\,dt\,.
\label{Lag-OU-CLintro}
\end{equation}
Here, the function $N_t $ is the solution of the Ornstein-Uhlenbeck (OU) stochastic process \cite{OU-ref}
\begin{align}
{\rm \textcolor{red}d}N_t = \theta(\mean{N} - N_t)dt + \sigma dW_t\,,
\label{OU-1}
\end{align}
with long-term mean $\overline{N}$, and real-valued constants $\theta$ and $\sigma$. 
Namely, $N_t $ is the scalar function of time, 
\begin{equation}
N_t = e^{-\theta t}N_0 + (1 - e^{-\theta t})\overline{N}
+ e^{-\theta t} \sigma \int_0^t e^{\theta s}dW_s
\,,
\label{OUsoln-1}
\end{equation}
in which one may assume an initially normal distribution, 
$N_0\approx {\cal N}(\overline{N},\sigma^2/(2\theta))$, with mean $\overline{N}$ and variance $\sigma^2/(2\theta)$. 
Thus, uncertainty in the prescribed Stokes mean drift velocity $\bu^S(\bx)$ may be modelled probabilistically in the  CL equations, by introducing $\bu^S(\bx)N_t$ in the Lagrangian of Hamilton's principle for the wave dynamics of the classic CL equations for EB fluid flow.

Having been derived as a system of Euler-Poincar\'e equations, the probabilistic CL model with OU wave dynamics (called the OU CL model) preserves all of the geometric mechanics properties of the original CL model, including its vorticity dynamics and preservation of potential vorticity. The uncertainty in the wave field in the OU CL model in Theorem \ref{SALT-OU-CL-Thm} appears as an OU term in the \emph{circulation integrand} rather than in the transport velocity of the \emph{circulation loop} as occurs in GLM. In equation \eqref{CL-circ}, the OU term in the circulation integrand  represents large scale effects through which rapidly oscillating forces of wind and waves at the surface of the domain can produce the Stokes mean drift velocity which, in turn, transmits a mean force on the current flow as a momentum shift associated with the moving reference frame. As for the GLM model, the OU CL model also admits the introduction of stochasticity in the Lagrangian transport velocity of its Kelvin circulation loop, as in equation \eqref{CL-circ}.  Thus, we  regard the OU term in the circulation integrand as the slow ponderomotive average effect of the Stokes mean drift due to the random wind and wave oscillations on the air-sea surface. At the same time, we regard the stochastic transport velocity of the circulation loop as the result of rapid, small-scale effects which perturb the Lagrangian trajectories of the CL model. This dual viewpoint is consistent with the stochastic modelling approach to Richardson's metaphor of 3-way interactions among Big, Little and Lesser Whorls whose stochastic theory was developed in \cite{Holm-RichTriple2019}.

Although our efforts here are directed to stochastic variational models of uncertainty in WCI, the fundamental ideas are based on variational derivations of the Navier--Stokes equations from stochastic equations.

\begin{remark}[Variational derivations of the Navier--Stokes equations from stochastic equations] \rm
The derivation of the Navier--Stokes equations in the context of stochastic processes has a long and well-known history. See. e.g., Constantin and Iyer \cite{CoIy2008}, Eyink \cite{Ey2010}, and references therein. Previous specifically variational treatments of stochastic fluid equations generally started from the famous remark by Arnold \cite{Arnold1966} (about Euler's equations for the incompressible flow of an ideal fluid being geodesic for kinetic energy given by the $L^2$ norm of fluid velocity) and they have mainly treated It\^o noise in this context. For more discussion of these variational derivations of stochastic fluid equations and their relation to the Navier-Stokes equations, one should consult original sources such as, in chronological order, Inoue and Funaki \cite{InFu1979}, Rapoport \cite{Ra2000,Ra2002}, Gomes \cite{Go2005}, Cipriano and Cruzeiro \cite{CiCr2007}, Constantin and Iyer \cite{CoIy2008}, Eyink \cite{Ey2010}, Gliklikh \cite{Gl2010}, Arnaudon, Chen and Cruzeiro \cite{ArChCr2012}. We emphasise that the goal of the present work is limited to the derivation of SPDEs for WCI by following the stochastic variational strategy outlined above.  For additional information, review and background references for random perturbations of PDEs and fluid dynamic models, viewed from complementary viewpoints to the present paper, see also Flandoli et al. \cite{Fl2011,FlMaNe2014}. In particular, Flandoli et al. \cite{Fl2011,FlMaNe2014} study the interesting possibility that adding stochasticity can have a regularising effect on fluid equations which might otherwise be ill-posed. 
\end{remark}

\paragraph{Plan of the paper and main content.}$\,$

Section \ref{det-GLM-bkgrnd} introduces the ideas behind \emph{phase-space Lagrangians} (PSLs) and formulates the closed set of deterministic GLM equations for WCI in \eqref{SVP3-det} which will be the basis for the introduction of stochasticity into the GLM model in section \ref{SALTrev-sec}. The closure of the deterministic GLM theory for a given fluid Lagrangian depends on the choice of wave dispersion relation, $\omega(\bk)$, for frequency as a function of wave vector, which appears in the PSL Hamiltonian for the wave field dynamics. This feature is what makes the present GLM closure flexible enough to treat a variety of interactions of the current with different types of waves. For example, upon choosing the dispersion relation for internal waves in equation \eqref{disp_exp}, the WCI equations \eqref{SVP3-det} yield the Generalised Lagrangian Mean (GLM) equations for stratified, rotating, incompressible Euler--Boussinesq (EB) fluid motion in three dimensions \cite{AM1978}. 

Section \ref{SALTrev-sec} reviews the stochastic variational principle underlying the SALT  (Stochastic Advection by Lie Transport) approach to the derivation of stochastic fluid equations which preserve the geometric structure of fluid dynamics \cite{Holm2015}. The  motivations and recent applications of the SALT approach for uncertainty quantification and data assimilation are also briefly discussed \cite{CCHPS2018,CCHPS2019a,CCHPS2019b}. 

Section \ref{SNWP-sec} combines the ideas in the first two sections to extend the SALT approach to \emph{stochastic nonlinear wave propagation} (SNWP) in deriving a new stochastic theory of Wave-Current Interaction (WCI) in which the dynamics of either or both the waves and the currents can be made stochastic. The stochastic WCI model is formulated and its main geometrical mechanics properties are established.

Section \ref{StochGLM-sec} applies the WCI model formulated in section \ref{SNWP-sec} to derive the SALT and SNWP terms for GLM in 3D stratified EB fluids, while Section \ref{SW-WCI-sec} derives  the SALT and SNWP terms for 1D and 2D shallow water WCI  (SW-WCI) equations. Section \ref{SW-WCI-sec} also derives the Hamiltonian structure for SW-WCI, which turns out to recover a type of non-canonical Lie--Poisson bracket which was first discovered for superfluid $^4He$ and $^3He$ in \cite{HK1982} and was later formulated in more general terms by Krishnaprasad and Marsden in \cite{KM1987} who applied the formulation to the dynamics of a rigid body with a flexible attachment. It seems fitting that the WCI should have such deep roots in geometric mechanics. 

Section \ref{concl-sec} summarises the paper's main results for the WCI model derived here. Namely, the WCI model derived here enables the exchange of energy through the coupling between the two different kinds of motion: Lagrangian flows and Eulerian waves. In the WCI approach which we have implemented here, the GLM fluid transport velocity in the rotating frame is determined by taking the difference of the total momentum and the wave momentum in the frame of the transport velocity. The flow velocity is measured relative to the rotation of the Earth and the wave group velocity is measured relative to the flow velocity. This is reminiscent of L. F. Richardson's well-known metaphor of ``whorls within whorls'' for fluid turbulence. For a recent discussion Richardson's metaphor in the context of stochastic parametrisation for geophysical flows, see \cite{Holm-RichTriple2019}. 

\paragraph{Non-acceleration result for wave mean-flow interaction (WMFI).}
In the WCI model derived here, the wave dynamics may create circulation, but only within the wave subsystem of the incompressible EB flow which is transported by the GLM fluid flow.  This is particularly clear in the Kelvin circulation theorem representation of WCI for the example of EB flow, when equation \eqref{Kelvin-GLM} is compared with equations \eqref{SALT-SNWP-GLM-wave} and \eqref{SALT-SNWP-GLM-total}. Thus, equation \eqref{SALT-SNWP-GLM-total} represents a dynamical version of the famous \emph{non-acceleration theorem} for GLM \cite{Vallis2017}. Namely, in the absence of dissipation, equation \eqref{SALT-SNWP-GLM-total} shows that the  presence of waves has no net effect on the mean-flow equations of the GLM WCI model.

Section \ref{sec:det-CL}  compares the deterministic features of the Craik-Leibovich (CL) model with the corresponding deterministic results for GLM discussed in the main text.

Section \ref{sec:stoch-CL}  derives a new stochastic version of CL which differs from the GLM approach both in the type and location of its probabilistic features.  In particular, the probabilistic features are Lagrangian in GLM while they are Eulerian in the CL model. However, the CL model is not subject to a non-acceleration result. This is because wave forcing in the CL model is external, while for GLM the waves represent an internal degree of freedom. The introduction of uncertainty in both the Stokes velocity and in the Lagrangian mean velocity of the CL equations reinforces the concept of \emph{multiscale uncertainty} for the WCI.

Section \ref{concl-sec} reviews the main geometric mechanics ideas underlying our approach to WCI and suggests other open problems which may be treated via this approach. 

Appendix \ref{appendix-A} discusses the gyrostat as a potential dynamical systems analogue of WCI. (The gyrostat is a rigid body with a flywheel attached along its intermediate axis.) The dynamics of the gyrostat system can also be formulated in the presence of gravity, as a heavy top with a flywheel attached. The solution behaviour of the gyrostat is close to GLM behaviour. Namely, the effect of the flywheel on the rigid body is small except near the unstable equilibria of the rigid body. However, on the slow time scale this weak effect can accumulate over time for motion along the unstable manifold of the perturbed equilibrium. The gyrostat example may even suggest some ideas about dealing with tipping points (bifurcations) in perturbed GLM systems.

Appendix \ref{appendix-A} also discusses the swinging spring, or elastic spherical pendulum, as a potential dynamical systems analogue of WCI. When a rigid spherical pendulum is made radially elastic, the possibility opens for the new oscillation degree of freedom to interact with the rotational degrees of freedom. The resulting exchange of energy can be quite dramatic if resonances between the two dynamical modes can occur \cite{HolmLynch2002}. However, the non-acceleration result for GLM implies that no such exchange of energy is generally available for GLM. 

Both the gyrostat and swinging spring also have structural similarities with WCI from the GLM viewpoint, because the Hamiltonian matrix operator in all three Hamiltonian formulations is block-diagonal.

\section{Deterministic GLM background for waves in the ocean}\label{det-GLM-bkgrnd}

Wave trains in the compound wave-current ocean flow can be excited by external forces such as the tides, as well as by the mean ponderomotive force of winds blowing along the sea-surface, or by the restoring force of buoyancy due to gravity for flows over bathymetry, or even along outcroppings in the horizontal boundaries. Then, once excited, these wave trains can propagate through the ocean, even if the ocean currents are still and calm.  This observation argues for regarding ocean wave excitations as a degree of freedom which can be distinguished from ocean currents. 

The statement of the Wave-Current Interaction (WCI) problem involves a hybrid, or compound, description in which the wave field is regarded as a separate Eulerian degree of freedom in the decomposition of the Lagrangian fluid-parcel trajectory into fast and slow components.  The generalised Lagrangian mean (GLM) fluid description \cite{AM1978} is a natural approach in this regard, because it also arises from a fast-slow dynamic decomposition of hybrid wave and current degrees of freedom which itself goes back to averaged Lagrangian methods formulated by Whitham \cite{Whitham2011}. The fast-slow averaging approach is also familiar in many other branches of physics. For example, in the guiding center and oscillation center models in plasma physics, averaging over the fast degrees of freedom for oscillation leads to ponderomotive forces of the wave envelope on the mean flow. See, e.g., \cite{Dewar1973, RGL1981, ANK-DH1984, Brizzard2009}. All of these theories in continuum mechanics separate the full flow into a composition of a slow mean flow from which rapid fluctuations depart. Their objective is to model the combined mean dynamics of the full flow at the slow time scale.

%\centerline{\includegraphics[width=0.5\textwidth]{./Figures/3-CardsGLM.pdf}}

\paragraph{Comparing the GLM and CL equations.}
The Generalised Lagrangian Mean (GLM) flow theory of Andrews and McIntyre \cite{AM1978} is in principle an exact theory of nonlinear waves on a Lagrangian mean flow, within an Eulerian framework. Its potential universality has made GLM the canonical theory for investigating wave mean flow interaction (WMFI) \cite{AM1978,GH1996} or, equivalently, wave-current interaction (WCI)  \cite{Leibovich1980}.  
In certain asymptotic regimes, the GLM equations can be reduced to the Craik--Leibovich (CL) equations \cite{CraikLeibovich1976}, in particular when the wave  field is irrotational and the shear is weak \cite{Leibovich1980}. The GLM theory also affords an extension of Craik--Leibovich (CL) instability theory to admit rotational wave fields and strong shear \cite{Craik1982a,Craik1982b,Craik1985}. The CL theory of linear instability of the wave-mean flow interaction in the latter case produces longitudinal vortices which are generally expected to develop nonlinearly into Langmuir circulations \cite{Craik1982b,Thorpe2004}. 

However, regardless of these formal similarities in certain asymptotic regimes, the nonlinear  mathematical structures of the GLM equations and the CL equations derived in this paper will turn out to be quite different. This difference is not unexpected, because  the CL equations have an external forcing term which is absent in the GLM equations. Moreover, a non-acceleration result exists for the GLM equations under which the wave degree of freedom cannot influence the circulation of the current degree of freedom. These differences will emerge when the CL model is discussed in detail in sections \ref{sec:det-CL} and \ref{sec:stoch-CL}.
In fact, the differences between CL and GLM fluid dynamics reside largely in how the choice between Eulerian and Lagrangian velocity averaging affects the Kelvin circulation theorem. Eulerian averaging in the CL approach affects the Eulerian velocity 1-form in the \emph{circulation integrand} in Kelvin's theorem, while Lagrangian averaging in the GLM approach affects the material velocity of the Kelvin \emph{circulation loop}. This profound difference in how the two approaches affect circulation dynamics means that the physics of the two approaches  can differ widely.  

%
%The main operational feature of GLM is that is describes WMFI in the sense that a new propagating wave degree of freedom can be introduced into fluid dynamics via an asymptotic expansion in fast-slow time and space dependence away from zero wave amplitude in a way which preserves the variational derivation of the equations of motion \cite{GH1996}. 

\paragraph{GLM is based on slow-fast decomposition.}
The GLM equations are based on defining fluid quantities at a
displaced, rapidly fluctuating position $\mathbf{x}^\xi := \mathbf{x}+\xi(\mathbf{x},t)$. 
In the GLM description, $\overline{\chi}$
denotes the Eulerian mean of a fluid quantity
$\chi=\overline{\chi}+\chi^{\,\prime}$ while $\overline{\chi}^L$ denotes the
\emph{Lagrangian mean} of the same quantity, defined by
\begin{equation}
\overline{\chi}^L(\mathbf{x})
\equiv
\overline{\chi^\xi(\mathbf{x})}
\,,\quad\hbox{with}\quad
\chi^\xi(\mathbf{x})
\equiv
\chi(\mathbf{x}+\xi(\mathbf{x},t))
\,.
\label{LM-def-rel}
\end{equation}
Here $\mathbf{x}^\xi\equiv\mathbf{x}+\xi(\mathbf{x},t)$ is the current
position of a Lagrangian fluid trajectory whose current mean position is 
$\mathbf{x}$. Thus, $\xi(\mathbf{x},t)$ denotes the fluctuating displacement 
 with vanishing Eulerian mean $\overline{\xi}=0$ of a Lagrangian
particle trajectory about its current mean position $\mathbf{x}$. 

GLM defines the fluid velocity at the displaced oscillating position as $ \mathbf{u}^\xi(\mathbf{x},t) := \mathbf{u} (\mathbf{x} + {\bxi}(\mathbf{x},t))$ 
where $\mathbf{x}$ is evaluated as the current position on a Lagrangian mean path and 
\begin{equation} 
\mathbf{u}^\xi := \frac{D^L}{Dt}\Big(\bx+\bxi(\bx,t)\Big) 
= \bu^L(\bx,t) + \mathbf{u}^\ell(\bx,t)
\quad\hbox{with}\quad
\frac{D^L}{Dt} = \frac{\p}{\p t}  + \mathbf{u}^L\cdot \frac{\p}{\p \bx}
\quad\hbox{and}\quad
\mathbf{u}^\ell := \frac{D^L{\bxi}}{Dt}
\,.
\label{u(l)}
\end{equation} 
One then defines the Lagrangian mean velocity as $\ob{\mathbf{u}^\xi}(\bx,t) = \bu^L(\bx,t)$, 
where $\overline{(\,\cdot\,)}$ is a time, or phase average at fixed \emph{Eulerian} coordinate $\mathbf{x}$. 

Thus, the GLM approach decomposes the Lagrangian trajectory as $\mathbf{x}^\xi := \mathbf{x}+\xi(\mathbf{x},t)$ into its current mean position, $\mathbf{x}$, plus a rapidly fluctuating displacement $\xi(\mathbf{x},t)$, then GLM investigates the Lagrangian mean dynamical implications of this decomposition. Postulating the unknown fluctuating displacement vector field $\xi(\mathbf{x},t)$ introduces an additional degree of freedom of the Lagrangian mean fluid description whose effects must be modelled. 

\paragraph{A quick derivation of the GLM motion equation by time averaging Kelvin's theorem.}
One may derive the GLM motion equation by applying Lagrangian-mean time averaging to the Kelvin circulation theorem for the ideal fluids in the form of Newton's law,  which equates the rate of change of the momentum to the force on a distribution of mass on a material loop, 
\begin{equation}
\frac{d}{dt} \overline{\oint_{c(u^\xi)} \bu(\bx^\xi,t) \cdot d\bx^\xi }^{\,L}
= \overline{\oint_{c(u^\xi)} \Big( \cdot \mathbf{f} \cdot \Big)^\xi\cdot d\bx^\xi}^{\,L}
\,,\label{Lag-det-GLM1}
\end{equation}
where $ \mathbf{f}$ denotes the sum over whatever prescribed forces per unit mass are present. 
In Kelvin's theorem \eqref{Lag-det-GLM1}, the loop moves with the flow, so the loop is a Lagrangian quantity. The integrand is fixed in space, so the integrand is Eulerian. Thus, after taking averages, the loop velocity will be the Lagrangian mean velocity, $\bu^L$, and the integrand will be given by its Eulerian mean $\overline{(\,\cdot\,)}$ at the displaced location of the Lagrangian trajectory $\mathbf{x}^\xi := \mathbf{x}+\xi(\mathbf{x},t)$. Namely, 
\begin{equation}
\frac{d}{dt}\oint_{c(u^L)}  \overline{  \bu(\bx^\xi,t)  \big)\cdot d\bx^\xi }
=
\frac{d}{dt}\oint_{c(u^L)}  \overline{ \bu^L(\bx,t) + \mathbf{u}^\ell(\bx,t) \big)\cdot d(\mathbf{x}+\xi(\mathbf{x},t)) }
=
\oint_{c(u^L)}  \overline{ \Big( \cdot \mathbf{f} \cdot  \Big)^\xi\cdot d\bx^\xi}
\label{Lag-det-GLM1}
\end{equation}
Equation \eqref{u(l)} implies an evolution equation for the fluctuating displacement $\bxi(\bx,t)$,
\begin{equation}
\frac{D^L{\xi}}{Dt} =: \bu^\ell(\bx^\xi,t)  = (\bxi\cdot\nabla) \bu^L(\bx,t) 
\Longrightarrow
\frac{\p\bxi}{\p t}  + (\mathbf{u}^L\cdot\nabla)\bxi = (\bxi\cdot\nabla) \bu^L(\bx,t) \,.
\label{Lag-det-GLM2}
\end{equation}
The last equation means the displacement vector field $\xi:=\bxi(\bx,t)\cdot\nabla$ is \emph{advected} by the Lagrangian mean velocity $\mathbf{u}^L$. That is, $\pa_t\xi + [u^L,\xi] = 0 = \pa_t\xi - {\rm ad}_{u^L} \xi $. This, in turn, means that the fluctuating vector field $\xi$ is pushed forward by the time-dependent flow $\phi^L_t$, which itself is generated by the vector field $u^L$. That is, $\xi(t)={\phi^L_t}_*\xi(0)$. Furthermore, $\pa_t \xi(t) = \pa_t {\phi^L_t}_*\xi(0) = -\, {\phi^L_t}_*\mathcal{L}_{u^L} \xi(0)=-\mathcal{L}_{u^L}\xi(t)$. For more discussion of the geometric properties of the GLM theory, see \cite{GV2018,Holm2019}.

To quadratic order in the displacement $\bxi(\bx,t)$ with zero mean $\overline{\bxi(\bx,t)}=0$ equation \eqref{Lag-det-GLM1} implies 
\begin{equation}
\frac{d}{dt}\oint_{c(u^L)}   \Big( \bu^L(\bx,t) + \overline{u^\ell_k\nabla \xi^k} \,\Big)\cdot d\bx
=
\oint_{c(u^L)}  \overline{ \Big( \cdot \mathbf{f} \cdot  \Big)^\xi\cdot d\bx^\xi}
\,.
\label{Lag-det-GLM3}
\end{equation}

At this point the Kelvin circulation theorem for the standard GLM equations may be derived by defining the \emph{pseudovelocity}, $\widetilde{\bv}(\bx,t)$, and the \emph{pseudomomentum}, $\bp(\bx,t)$, as follows
\begin{equation}
\widetilde{\bv}(\bx,t) = -\, {\overline{u^\ell_k\nabla \xi^k} } = \bp(\bx,t)/\widetilde{D}\,,
\label{Lag-det-GLM4}
\end{equation}
where $\widetilde{D}$ is the Lagrangian mean volume element. 

\paragraph{Pseudomomentum is a momentum map.}
Pseudomomentum is a 1-form density (dual to vector fields under $L^2$ pairing) which can be written in a variety of ways.
For example, pseudomomentum can be written as \cite{Holm2019}
\begin{equation}
p(\bx,t) = \bp(\bx,t)\cdot d\bx\otimes d^3x
=
\overline{u^\ell_k \,d \xi^k} \otimes d^3x
=
\overline{u^\ell_k \,\partial_\phi \xi^k}\,d\phi (\bx,t) \otimes d^3x
=:
N \,d\phi(\bx,t) \otimes d^3x\,.
\label{Lag-det-GLM5}
\end{equation}
In the next section, the last variant in \eqref{Lag-det-GLM5} will allow us to consider $\phi(\bx,t)$ and $N(\bx,t)d^3x$ as canonically conjugate wave field variables. It will follow that $p = N\,d\phi\otimes d^3x$ is a momentum map \cite{HMR1998}. This recognition will allow us to use $L^2$ pairing to couple the wave field to the Lagrangian mean fluid velocity vector field in Hamilton's principle. Upon applying this momentum map coupling, we will introduce a Hamiltonian for the wave dynamics in a phase-space Lagrangian in Hamilton's principle. This procedure will allow us close the GLM equations explicitly in terms of physically identifiable wave and fluid quantities. 

\paragraph{The approximations which reduce GLM to the CL (Craik-Leibovich) equations.}
If one introduces the Stokes mean drift velocity $\bu^S(\bx,t)$ as the difference between the Lagrangian and Eulerian mean velocities, which is defined as $\bu^S(\bx,t)=\bu^L-\bu=\overline{\xi^k\pa_k\bu^\ell}$, then equation \eqref{Lag-det-GLM3} is expressible as
\begin{equation}
\frac{d}{dt}\oint_{c(u^L)}   \Big( \big( \bu^L(\bx,t)-\bu^S(\bx,t)\big) \cdot d\bx +\overline{\mathcal{L}_\xi (\bu^\ell \cdot d\bx)}\, \Big)
=
\oint_{c(u^L)}  \overline{ \Big( \cdot \mathbf{f} \cdot  \Big)^\xi\cdot d\bx^\xi}
\,,
\label{Lag-det-CL1}
\end{equation}
where the expression
\begin{align}
\begin{split}
\overline{\mathcal{L}_\xi (\bu^\ell \cdot d\bx)} &= (\,\overline{\xi^k\pa_k\bu^\ell} + \overline{u^\ell_k\nabla \xi^k}\,)\cdot d\bx  
\\&
= \big(\,\overline{ - \bxi\times {\rm curl}\bu^\ell} + \overline{\nabla(\bxi\cdot \bu^\ell \,)} \big)\cdot d\bx  
= (\bu^S(\bx,t) - \widetilde{\bv}(\bx,t))\cdot d\bx
\end{split}
\label{Lag-det-CL2}
\end{align}
denotes the Eulerian mean $\overline{(\,\cdot\,)}$ applied to the Lie derivative  along the fluctuation vector field, $\xi=\bxi\cdot\nabla$, of the circulation 1-form of the fluctuating velocity, $\bu^\ell \cdot d\bx$. 

Upon neglecting the entire Eulerian mean fluctuation term $\overline{\mathcal{L}_\xi (\bu^\ell \cdot d\bx)}\to 0$  in equation \eqref{Lag-det-CL2} and also neglecting the time dependence of the Stokes mean drift velocity, $\bu^S(\bx,t)\to\bu^S(\bx)$, one finds the CL (Craik-Leibovich) equations. This is the sense mentioned earlier in which the GLM equations may be ``reduced'' to the CL equations. A stochastic version of the CL equations is formulated  below in sections \ref{sec:det-CL} and \ref{{sec:stoch-CL}} for the purpose of quantifying the uncertainty of their solutions.

\paragraph{Finite-dimensional GLM analogues.}
As a finite-dimensional example which has some close parallels with the GLM decomposition into currents and waves in the GLM theory one can consider the gyrostat, comprising a rigid body coupled to a flywheel. The 2D rotational effects of the flywheel on the 3D rotations of the rigid body are discussed in \ref{appendix-A}. 

In Appendix \ref{appendix-A} we also consider the finite-dimensional rotations and oscillations of an elastic spherical pendulum.  If the pendulum is only slightly elastic, then very rapid oscillations can take place, which may be negligible for small amplitude and in the absence of resonances. However, as the pendulum becomes more elastic and behaves more a like a radial spring, its oscillations and the resulting \emph{oscillation-rotation interaction} (ORI) can become an important feature of the dynamics \cite{HolmLynch2002}. For example, when resonances occur in the system, one may see regular exchanges between springing motion (oscillation) and swinging motion (rotation). Analogously, GLM introduces a new degree of wave freedom and assesses what mean effects it may have on the full fluid solution. However, the non-acceleration result for GLM discussed in section \ref{det-GLM-bkgrnd} precludes any resonant exchanges of energy between the GLM waves and currents.
  
%%%%%%%%%%%%%%%%%%%%%%%%%%%%%%%%%%%%%%%%%%%%%%%%%%%%%%%%%%%
\begin{remark}[Relation of GLM to mainstream stability methods for fluid equilibria]\rm
Fortunately, the GLM notation is also {\it standard} in the
stability analysis of fluid equilibria in the Lagrangian picture.
See, e.g., the classic works of Bernstein \cite{Bernstein-etal1958}, Frieman \&
Rotenberg \cite{FriemanRotenberg1960} and Newcomb \cite{Newcomb1962}. See Jeffrey \& Taniuti \cite{JT1964}
for a collection of reprints showing applications of this approach
in the course of controlled thermonuclear fusion research. For insightful reviews,
see Bernstein \cite{Bernstein1983}, Chandrasekhar \cite{Chandra1987} and, more recently, Hameiri
\cite{Hameiri1998}. Rather than causing confusion, this confluence of notation
encourages the transfer of ideas between traditional Lagrangian
stability analysis for fluids under perturbation and the GLM theory. Sometimes, as in the case of the 
elliptic instability, the GLM theory actually yields the nonlinear time dependent motion equations
resulting from the perturbation of the Lagrangian path, $\mathbf{x}^\xi\equiv\mathbf{x}+\xi(\mathbf{x},t)$, 
rather than merely producing the linear spectrum \cite{GH1996}. 
\end{remark}

\paragraph{Refinements of GLM.}
The GLM theory has inspired many refinements. These refinements include determination of the higher-order correction terms in the ratio of the time scales for currents and waves from a phase-averaged Hamilton's principle in Lagrangian coordinates \cite{GH1996}. This particular refinement established the noncanonical Lie-Poisson Hamiltonian formulation of GLM as the dynamics of two interpenetrating flows, with two different types of momentum, just as in \cite{HK1982} for Landau's 2-fluid theory of superfluids, \cite{London1950,Putterman1974}. 

The GLM equations at second order in an asymptotic expansion in the ratio of the time scales are called the $\mathfrak{glm}$ equations \cite{Holm1999,Holm2002a,Holm2002b}. Perhaps not surprisingly, the expression in equation \eqref{Lag-det-CL2} for the mean deviation of the GLM model from the CL model at second order is the same as the Eulerian mean of the circulation 1-form for the second order $\mathfrak{glm}$ model of fluctuation dynamics in fluid flows, as found already in equation (5.2) of \cite{Holm2002b}. Certain closures of the $\mathfrak{glm}$ models have led to a class of models of interest as computational Large-Eddy Simulations (LES) for turbulent flows \cite{FHT2001,FHT2002}. 

When a closure based on the Taylor hypothesis is imposed on the $\mathfrak{glm}$ equations for the incompressible ideal Euler fluid flow, as in equation  in equation \eqref{Lag-det-CL2}, one obtains the \emph{Euler-alpha model}, originally known as the $N$-dimensional Camassa--Holm equation \cite{HMR1998,Chen-etalPRL1998}. In the Taylor hypothesis closure of $\mathfrak{glm}$ for the Euler-alpha equations, the length-scale (alpha) is the mean correlation length of the fluctuations of the Lagrangian trajectory away from its mean. When viscosity is added in the form of momentum diffusion, one obtains the Lagrangian Averaged Navier-Stokes-alpha (LANS-alpha) turbulence model. The LANS-alpha turbulence model has been analysed deeply mathematically \cite{FHT2001,FHT2002} and its primitive equation version has been implemented successfully for global ocean circulation \cite{HHPW2008a,HHPW2008b}.

For more in-depth discussions of recent developments of GLM, see \cite{BuhlerMEM1998,Buhler2010,Buhler2014,GV2018,Holm2019,Thomas2017}. 
In particular, recent refinements of GLM include its formulation for flows on manifolds \cite{GV2018}, and its extension from deterministic to stochastic dynamics, within the Euler--Poincar'e variational framework of geometric mechanics \cite{Holm2019}. 

The present paper continues these refinements in reformulating the GLM variational principle derived in \cite{Holm2019} by introducing a phase-space Lagrangian in Hamilton's principle for the mean description of the wave field. The present result is a closed Hamiltonian theory which is shown to recover the GLM equations, and to implement the wave dynamics required for each specific application. Namely, the details of the closure for the wave physics of a given fluid application are governed by the dispersion relation for the type of wave field involved, which explicitly determines the proper Hamiltonian. Consequently, the present reformulation of GLM is potentially flexible enough to allow application to a variety of different types of waves. This flexibility is demonstrated by deriving the GLM equations explicitly for two applications. These are: internal waves in 3D rotating stratified incompressible Euler--Boussinesq flows in section \ref{StochGLM-sec}; and 1D shallow-water waves in section \ref{SW-WCI-sec}. 

\subsection{WCI for stratified EB fluids: the Generalised Lagrangian Mean (GLM)}\label{subsec-WCI-EB}

We introduce the idea of a phase-space Lagrangian for the wave components of fluid flows by applying it to study WCI in the familiar example of 3D Euler--Boussinesq (EB) fluid.  The EB fluid is a stratified, rotating, incompressible flow governed by the Euler fluid equations in the Boussinesq approximation. Here, we propose a variational formulation of WCI, with Hamilton's principle parameterised by a phase space Lagrangian which includes a wave Hamiltonian depending on the canonically conjugate phase-space variables $(q,p)=(\phi,N)$ for the collective degrees of freedom known as the wave phase field, $\phi(\bx,t)$, and its canonically conjugate momentum density, $N(\bx,t)$, which is the familiar GLM wave action density.

After computing the variational equations, we will show that choosing the wave Hamiltonian to be $H_W=-\int_\mathcal{D}N\omega(\bk)d^3x$ for $\bk=\nabla\phi(\bx,t)$ closes the GLM equations of  \cite{AM1978} and recovers the usual physical interpretations of their wave properties, including the phase dynamics in the local reference frame of the moving flow.  

\begin{remark}
One recalls that $\bp:=N\nabla \phi=: N\bk$ is called the \emph{pseudomomentum density} in the GLM theory \cite{AM1978}. Consider the functional $M_\xi(\phi,N)$ defined by the following $L^2$ pairing of the 1-form density $N\nabla \phi$ with a vector field $\bxi(\bx)$
\begin{align}
M_\xi(\phi,N)=\scp{\bxi(\bx)}{N\nabla\phi}
=\int_{\cal{D}} \bxi(\bx)\cdot N\nabla\phi\,d^3x\,.
\label{M-xi}
\end{align}
For the canonical Poisson bracket, the functional $M_\xi(\phi,N)$ in \eqref{M-xi} generates translations in space of the wave variables $\phi$ and $N$ along the characteristic curves of the vector field $\bxi(\bx)$. This can be seen by computing the canonical Poisson brackets,
\begin{align}
\begin{bmatrix}
\{\phi \,,\,M_\xi(\phi,N) \} \\
\{N\,d^3x \,,\,M_\xi(\phi,N) \}
\end{bmatrix}
=
\begin{bmatrix}
0 & 1 \\
-1 & 0
\end{bmatrix}
\begin{bmatrix}
\delta M_\xi /\delta \phi  \\
\delta M_\xi /\delta N  
\end{bmatrix}
= 
\begin{bmatrix}
\bxi\cdot\nabla \phi \\
{\rm div}(N\bxi)d^3x
\end{bmatrix}
=
\begin{bmatrix}
\mathcal{L}_\xi \phi  \\
\mathcal{L}_\xi (N\,d^3x)
\end{bmatrix}
,\label{Lie-trans}
\end{align}
\end{remark}
where $\mathcal{L}_\xi$ denotes Lie derivative with respect to the vector field $\bxi(\bx)$, which is defined as the infinitesimal transformation along the flow generated by $\bxi(\bx)$. Thus, under the Poisson bracket for the canonically conjugate, time-dependent, wave fields $\phi(\bx,t)$ and $N(\bx,t)$, the functional $M_\xi(\phi,N)$ generates a flow of the wave fields $\phi(\bx,t)$ and $N(\bx,t)$ along the characteristic curves of the vector field $\bxi(\bx)$.

\paragraph{Phase-space Lagrangian derivation of the GLM equations}
We write the WCI action integral for Hamilton's principle as the sum of the known deterministic Lagrangian for EB fluids \cite{HMR1998} coupled to a \emph{phase-space Lagrangian} (PSL) for the wave-field dynamics, as follows,
\begin{align}
\begin{split}
S = \int_{t_1}^{t_2}\ell(\bu^L,D,b,N,\phi:p)\,dt
&= \int_{t_1}^{t_2}\!\!\int_\mathcal{D} \bigg[
\frac{D}{2}\big| \bu^L \big|^2 + D\bu^L\cdot \bR(\bx) - gDbz - p(D-1) \bigg]d^3x
\\&\hspace{2cm}
- \int_{t_1}^{t_2}\!\!\int_\mathcal{D}  N(\partial_t\phi  + \bu^L\cdot\nabla \phi)\,d^3x + \int_{t_1}^{t_2} H_W(N,\bk) \,.
\end{split}
\label{Lag-det}
\end{align}
The first line of the Lagrangian in \eqref{Lag-det} is the fluid Lagrangian for EB fluids in standard vector form \cite{HMR1998}. The second line contains the PSL for the wave degrees of freedom, obtained by a partial Legendre transform $L(\phi,\partial_t{\phi})=\langle N,\partial_t\phi \rangle - H(N,\nabla\phi)$ for the canonically conjugate wave degrees of freedom $(N,\phi)$. Note that the PSL for the wave variables is manifestly invariant under translations in the phase $\phi$. This means the PSL would keep its form under phase averaging at fixed Lagrangian coordinate. Hence, one may regard the PSL as having \emph{resulted} from such an averaging process. 

The term $-\int_\mathcal{D} N\nabla \phi\cdot \bu^L\,d^3x$ in the second line has both wave and fluid components. This term serves to couple the EB Lagrangian for the fluid variables with the phase-space Lagrangian for the wave variables by pairing the wave momentum density with the fluid velocity. Equation \eqref{Lie-trans} shows that variations of this term in $(\phi,N)$ translate the wave variables along the Lagrangian trajectories of the current flow velocity $\bu^L(\bx,t)$. 

The variation of the Lagrangian in \eqref{Lag-det} with respect to the transport velocity $\bu^L(\bx,t)$ produces the total Eulerian momentum density for GLM in the presence of the wave field,
\begin{align}
\bm(\bx,t) := \frac{\delta \ell}{\delta \bu^L} = D( \bu^L + \bR(\bx)) - N\nabla \phi\,,
\label{Lag-var}
\end{align}
in which $\bp=N\nabla \phi=N\bk$ is the GLM pseudomomentum density. As we shall see, the Hamiltonian dynamics for the momentum density in equation \eqref{Lag-var} will recover the GLM velocity equation for the Lagrangian mean transport velocity $\bu^L(\bx,t)$.   

The Hamiltonian corresponding to the Lagrangian in \eqref{Lag-det} is given by the Legendre transform,
\begin{align}
\begin{split}
H(\bm,N,\phi,D,b)) &= \!\!\int_\mathcal{D}\bm\cdot\bu^L - N\p_t{\phi}\,d^3x - \ell(\bu^L,D,b,N,\phi:p)
\\&= \!\!\int_\mathcal{D} \bigg[
\frac{1}{2D}\big| \bm - N\bk - D\bR \big|^2 +  gDbz + p(D-1)  \bigg]d^3x + H_W(\phi,N)\,.
\end{split}
\label{Ham-det}
\end{align}
This is simply the sum of the material and wave energies. The variational derivatives are given by
\begin{align}
\begin{split}
\delta H(\bm,N,\phi,D,b)) &= \!\!\int_\mathcal{D} 
\bu^L\cdot \delta\bm + \delta D\Big(gbz +p - \frac12|\bu^L|^2 - \bu^L\cdot\bR(\bx)\Big) + (gDz)\delta b
\\&\hspace{2cm} - {\rm div}\bigg( N\Big(\bu^L + \frac{\delta H_W}{\delta \bk}\Big)\bigg)\delta\phi 
+ \bigg(\frac{\delta H_W}{\delta N} + \bk\cdot\bu^L\bigg)\delta N
\,d^3x\,,
\end{split}
\label{Ham-var}
\end{align}
where we have used the variational  identity $\delta H_W/\delta \phi = -\,{\rm div}(\delta H_W/\delta \bk)$ at constant $N$, which follows from  $\bk:=\nabla\phi$. 

We may write equations of motion in Hamiltonian form by 
using a block-diagonal Poisson matrix operator, as  
 \begin{align}
\partial_t
\begin{bmatrix}
m_i \\ D \\ b \\  \phi \\ N
\end{bmatrix}
= -
\begin{bmatrix}
\partial_j m_i + m_j \partial_i & D\partial_i & -b_{,i} & 0 & 0
\\
\partial_jD & 0  & 0 & 0 & 0
\\
b_{,j} & 0  & 0 & 0 & 0
\\
0 & 0 & 0 & 0 & -1
\\
0 & 0 & 0  & 1 & 0
\end{bmatrix}
\begin{bmatrix}
\delta H / \delta m_j 
\\
\delta H / \delta D  
\\
\delta H / \delta b 
\\
\delta H / \delta \phi 
\\
\delta H / \delta N 
\end{bmatrix}.
\label{m+D+b-eqns}
\end{align}

\begin{remark}[Next steps]
The key idea in proposing the action for WCI dynamics in equation \eqref{Lag-det} is the separation of the Lagrangian (current) and Eulerian (wave) degrees of freedom in Hamilton's principle. This has been accomplished by introducing a standard {\it Euler--Poincar\'e Lagrangian} for the current flow \cite{HMR1998} and a {\it phase-space Lagrangian} for the wave field. The two Lagrangians are coupled by the \emph{mechanical connection} term $-\int_\mathcal{D} N\nabla \phi\cdot \bu^L\,d^3x$ in the Lagrangian in \eqref{Lag-det} obtained by pairing the velocity of the current flow with the momentum map of the Hamiltonian wave system. Coupling by this pairing has the effect that the waves propagate in the local reference frame of the current flow. Without this connection, the current and wave degrees of freedom would evolve separately. The result is a closed dynamical theory whose wave-current dynamics can be made stochastic. To demonstrate the applicability of this hybrid approach, we first verify that our Hamilton principle does recover the \emph{deterministic} GLM equations for an Euler--Boussinesq (EB) fluid. We then add stochasticity and recover the stochastic version of the GLM equations for EB fluid derived earlier in \cite{Holm2019} which also introduces stochasticity into the GLM wave field. 
\end{remark}

\paragraph{Deterministic wave-current interaction for EB fluid flow.}
Upon expanding out the Hamiltonian equations in \eqref{m+D+b-eqns}, the dynamics of the EB fluid with these additional wave variables is found to obey the following system of equations,%
\footnote{We will derive the stochastic versions of equations \eqref{SVP3-det} in proving Theorem \ref{SALT-SNWP-Thm}. There, equations \eqref{SVP3-det} will re-emerge when the noise is absent. }
%Here, let us only remark about the minus signs in \eqref{Lag-det} and \eqref{SVP3-det}. Eulerian fluid quantities admit the symmetry of relabelling Lagrangian parcels by the action of the diffeos from the right. The corresponding momentum map is $-N\nabla \phi = - P\nabla Q$ for canonically conjugate momentum $P$ and coordinate $Q$ with $\{Q,P\}=1$ and $\omega=dQ\wedge dP$.
%
\begin{align}
\begin{split}
&\partial_t  \bm + (\bu^L\cdot \nabla)\, \bm + (\nabla \bu^L)^T \cdot \bm + \bm \,{\rm div}\bu^L
= 
D \nabla \pi + Dgz\nabla b
+ \bk \,{\rm div} \Big(\frac{\delta H_W}{\delta  \bk}\Big) 
+ N \nabla  \frac{\delta H_W }{\delta  N}
\,,\\&
\partial_t  D + {\rm div}(D\bu^L)  = 0
\,,\qquad
D=1
\,,\qquad
\partial_t  b + \bu^L\cdot \nabla  b = 0
\,,\\&
\partial_t  \phi + \bu^L\cdot \nabla  \phi  - \frac{\delta\, H_W}{\delta N} = 0
\,,\qquad
\partial_t  N + {\rm div}(N\bu^L) - {\rm div} \Big(\frac{\delta\, H_W}{\delta \bk}\Big) = 0
\,.\end{split}
\label{SVP3-det}
\end{align}
The Eulerian momentum density, $\bm$, and the Bernoulli function, $\pi$, in these equations are defined by the following variational derivatives of the GLM Lagrangian in \eqref{Lag-det}, 
\begin{align}
\bm := \frac{\delta  \ell}{\delta  \bu^L}  = D (\bu^L + \bR(\bx)) - N\nabla \phi
\,,\qquad
\pi :=  \frac{\delta  \ell}{\delta  D} = \frac12 |\bu^L|^2 + \bR\cdot \bu^L - gbz - p
\,.
\label{m&pi-defs}
\end{align}
The motion equation for WCI in equation \eqref{SVP3-det} implies the following Kelvin circulation dynamics for the Eulerian momentum per unit mass,
\begin{align}
\begin{split}
\frac{d}{dt} 
\oint_{c(u^L)} \frac1D\frac{\delta  \ell}{\delta  \bu^L}\cdot d\bx 
&=
\oint_{c(u^L)} (\p_t+\mathcal{L}_u^L)\bigg(\Big(\bu^L + \bR(\bx) - \frac{N}{D}\nabla \phi \Big)\cdot d\bx \bigg)
\\&\hspace{-3cm}= 
\oint_{c(u^L)} \nabla\pi \cdot d\bx
+
\oint_{c(u^L)} \underbrace{\
 gz \nabla b \cdot d\bx \
 }_{\hbox{Buoyancy}}
+ 
\oint_{c(u^L)}
\underbrace{\
\frac{1}{D}  \bigg(\bk \,{\rm div} \Big(\frac{\delta H_W}{\delta  \bk}\Big) 
+ 
N \nabla  \frac{\delta H_W }{\delta  N}\bigg) \
}_{\hbox{Wave Forcing}}\hspace{-1mm}
\cdot \,d\bx
\,.
\end{split}
\label{Kelvin-GLM}
\end{align}
Thus, the wave forcing terms in \eqref{Kelvin-GLM} could potentially generate circulation of the total Eulerian momentum per unit mass, $\bm/D$, which is dual to the Lagrangian mean velocity, $\bu^L$. Equation \eqref{Kelvin-GLM} is Newton's $2^{\rm nd}$ Law for the time rate of change of the total Eulerian mean momentum $\bm={\delta  \ell}/{\delta  \bu^L}$ of a body whose mass is distributed  on a closed loop $c(u^L)$ moving with the Lagrangian mean velocity $\bu^L$. According to equation \eqref{Kelvin-GLM}, the wave force in Newton's Law for this model depends on the following wave properties: wave action density $N$; wave vector $\bk$; gradient of dispersion relation $\nabla\omega(\bk)$;  and group velocity $\bv_G(\bk):=\p\omega/\p\bk$. 

The solution algorithm for solving the Euler-Poincar\'e system \eqref{SVP3-det} is, as follows. First, one solves the Euler-Poincar\'e motion equation to update the total Eulerian momentum density $\bm(\bx,t)$. In parallel, one updates the solutions of the four auxiliary equations for $(\phi,N)$ and $(b,D)$.  Updating $(\phi,N)$ also updates the GLM pseudomomentum density, $\bp:=N\nabla \phi$. Next, one solves the linear equation \eqref{Lag-var} to update the Lagrangian mean transport velocity in the rotating frame, $\bu^L = D^{-1}(\bm + N\nabla \phi)-\bR(\bx)$. Finally, one updates the pressure $p$ by solving a Poisson equation with Neumann boundary conditions. After these steps, one may then take the next time step in the motion equation and iterate the solution algorithm. The solution algorithm solves for the total Eulerian momentum density, $\bm(\bx,t)$, and the pseudomomentum density, $\bp:=N\nabla \phi$, independently. This separation is crucial in solving for the Lagrangian mean transport velocity, $\bu^L$, which is a diagnostic variable in this solution algorithm.  We will see that the \emph{kinematic momentum}, $D\bu^L,$ can be made prognostic instead of $\bm$ by making a change of variables in the Hamiltonian formulation of the present model. The resulting prognostic equation for $\bu^L$ turns out to be exactly the original EB motion equation, which is as it should be. In addition, though, upon completing the Hamiltonian formulation, one recovers the Poisson bracket for Landau's 2-fluid model for superfluid $^4He$. This places the present theory of WCI into the class of complex fluids with dynamical order parameters \cite{Holm2002}

\begin{example}[The WKB wave Hamiltonian]\label{SeparatedWaveHam}
Closure of the WCI system \eqref{SVP3-det} requires one to model the wave Hamiltonian, $H_W(\bk,N)$.
A compelling choice of the wave Hamiltonian can be recognised by defining the second line of equation \eqref{Lag-det} as
\begin{align}
- \int_{t_1}^{t_2}\!\!\int_\mathcal{D}  N(\partial_t\phi  + \bu^L\cdot\nabla \phi)\,d^3x + \int_{t_1}^{t_2} H_W(N,\bk) 
=
- \int_{t_1}^{t_2}\!\!\int_\mathcal{D} N\big(\partial_t\phi  + \bu^L\cdot\nabla \phi + \omega(k)\big)\,d^3x
\,.
\label{Lag-det1}
\end{align}
That is, the wave Hamiltonian is determined by regarding $N$ in the phase-space Lagrangian in \eqref{Lag-det} as a \emph{Lagrange multiplier} which enforces the constraint that $\phi$ satisfies the WKB wave phase equation \eqref{Lag-det1} in the local reference frame of the moving fluid. In this formulation, one may immediately interpret the physical meanings of the various wave terms. Namely,
\begin{align}
H_W = - \int_M N \omega(\bk) \,d^3x 
\,,\quad\hbox{with}\quad
 \frac{\delta H_W }{\delta  N}\Big|_{\bk} = - \,\omega(\bk)
 \,,\quad\hbox{and}\quad
 \frac{\delta H_W}{\delta  \bk}\Big|_{N} = - N \frac{\partial \omega(\bk) }{\partial \bk} =: -\,N \bv_G(\bk)
\,,\label{separatedWaveHam}
\end{align}
in which $\bv_G(\bk):=\partial \omega(\bk) / \partial \bk$ is the group velocity for the dispersion relation $\omega=\omega(\bk)$ between wave frequency, $\omega$, and wave number, $\bk$. For the wave Hamiltonian $H_W$, the wave variables in equation \eqref{SVP3-det} obey the following familiar \emph{WKB} relations  in the frame of the fluid motion,\footnote{The gradient of the phase dynamics in equation \eqref{separatedWaveDyn} yields 
the following equation for the wave vector, $\bk=\nabla \phi$, in the local reference frame of the fluid motion,
\[
\big(\partial_t + \mathcal{L}_{\bu^L+\bv_G(\bk)}\big)d\phi
= 0
= \Big(\partial_t\bk + \nabla \big( \omega(\bk) + \bk\cdot \bu^L\big)\Big)\cdot d\bx
\,, \quad\hbox{since}\quad{\rm curl\,}\bk=0
. 
\]}
\begin{align}
\partial_t  \phi + \bu^L\cdot \nabla  \phi  = -\,\omega(\bk)
\,,\qquad
\partial_t  N + {\rm div}\Big(N(\bu^L+ \bv_G(\bk)\Big)  = 0
\,.\label{separatedWaveDyn}
\end{align}
The wave dynamics in \eqref{separatedWaveDyn} may also be written in a suggestive canonical Hamiltonian form in the reference frame of the fluid motion as
\begin{align}
(\p_t + \mathcal{L}_{u^L})
\begin{bmatrix}\,
\phi \\ N d^3x
\end{bmatrix}
= 
\begin{bmatrix}\,
0 & 1
\\
-1 & 0
\end{bmatrix}
\begin{bmatrix}\,
\frac{\delta H_W}{\delta \phi} \\ \frac{\delta H_W}{\delta N}
\end{bmatrix}\,,
\label{canonical-form-GLM}
\end{align}
where $\mathcal{L}_{u^L}$ denotes Lie derivative with respect to the transport vector field which in GLM theory is the Lagrangian mean velocity $\bu^L$. In this form, one may identify the operation $(\p_t + \mathcal{L}_{u^L})$ as the time derivative in the frame of the moving fluid. 
The wave dynamics in equations \eqref{separatedWaveDyn} and \eqref{canonical-form-GLM} provide further insight into wave propagation in a fluid flow.  For a constant transport velocity, $\bu^L$, the equations in \eqref{separatedWaveDyn} may be immediately recognised as the WKB equations for a wave packet with slowly varying envelope propagating in a moving medium \cite{Peregine1976,Whitham2011,Buhler2014}. However, equation \eqref{Kelvin-GLM} for the total circulation dynamics raises the issue of whether wave motions may affect the Eulerian momentum per unit mass of the fluid parcels.

The two equations in \eqref{separatedWaveDyn} imply that the fluid velocity $\bu^L$ transports the wave propagation dynamics in the reference frame of the fluid flow.  Combining these two equations yields 
\begin{align}
\begin{split}
- \,\frac{d}{dt} \oint_{c(u^L)}  \frac{N}{D}\nabla \phi \cdot d\bx 
&= \oint_{c(u^L)} \frac{1}{D}  \bigg(\bk \,{\rm div}\Big(N\bv_G(\bk)\Big)
+ 
N \nabla  \omega(k)\bigg)\cdot d\bx
\\&= \oint_{c(u^L)} \frac{1}{D}  \bigg(\bk \,{\rm div} \Big(\frac{\delta H_W}{\delta  \bk}\Big) 
+ 
N \nabla  \frac{\delta H_W }{\delta  N}\bigg)\cdot d\bx
\,.
\end{split}
\label{SALT-SNWP-GLM-wave}
\end{align}
That is, the two equations in \eqref{separatedWaveDyn} imply  
\begin{align}
-\Big(\p_t+\mathcal{L}_{u^L}\Big)\bigg( \frac{N}{D}\nabla \phi \cdot d\bx \bigg)
= \frac{1}{D}  \bigg(\bk \,{\rm div}\Big(N\bv_G(\bk)\Big)
+ 
N \nabla  \omega(k)\bigg)\cdot d\bx
\,.\label{non-accel}
\end{align}
Now substituting equation \eqref{SALT-SNWP-GLM-wave} with wave Hamiltonian \eqref{separatedWaveHam} into the circulation theorem in equation \eqref{Kelvin-GLM} produces a cancellation which recovers the Kelvin circulation theorem in the \emph{same form as for the original EB equation} 
\begin{align}
\frac{d}{dt} 
\oint_{c(u^L)} \Big(\bu^L + \bR(\bx)  \Big)\cdot d\bx
= \oint_{c(u^L)}  (\nabla\pi + gz \nabla b) \cdot d\bx 
\,.
\label{SALT-SNWP-GLM-total}
\end{align}
Equations \eqref{Kelvin-GLM}, \eqref{SALT-SNWP-GLM-wave} and \eqref{SALT-SNWP-GLM-total} provide an additive decomposition the Kelvin circulation theorem representation of WCI in the example of EB flow. This result proves a dynamical version of the famous \emph{non-acceleration theorem} for WMFI \cite{Vallis2017}. That is, \eqref{SALT-SNWP-GLM-total} implies a non-acceleration theorem for the present GLM WCI model in the example of incompressible 3D EB flow with stratification and rotation, in the sense that the equation for the mean flow velocity in this model does not change, even when waves are present. In particular, the fluid potential vorticity (PV) will still be conserved on Lagrangian mean particle paths. That is,
\begin{align} 
\partial_t Q + \bu^L\cdot \nabla Q = 0\,,
\label{GLM-PV}
\end{align}
where PV is defined as $Q := D^{-1} \nabla b \cdot {\rm curl (\bu^L + \bR(\bx) )}$ with $D=1$.

Thus, after identifying the wave Hamiltonian in \eqref{separatedWaveHam}, the phase-space Lagrangian in \eqref{Lag-det1} has produced a model of wave-current interaction in the EB fluid in which the total circulation separates additively into wave and current components. In particular, the total momentum density in the model decomposes as $\bm = D (\bu^L + \bR(\bx)) - N\nabla \phi$ into the \emph{sum} of the momentum densities for the two degrees of freedom. However, in the absence of dissipation no momentum exchange occurs between the two interpenetrating fluids, as also occurs for superfluid ${}^4He$ \cite{London1950,Putterman1974, HolmHVBK2001}. Next, we will discuss how this variational description of WCI fits into the vast literature of wave mean flow interaction \cite{Peregine1976,Whitham2011,Buhler2014}.  

The next example will show that equations \eqref{SVP3-det} with $\bm$ given in \eqref{m&pi-defs} and $H_W(N,\bk)$ given in \eqref{separatedWaveHam} provide a \emph{closure} for the GLM dynamics of the EB stratified, rotating, incompressible fluid. 
\end{example}

\begin{example}[Comparing WCI equations \eqref{SVP3-det} with the Andrews and McIntyre GLM formulation \cite{AM1978}]\label{ID-GLMvar}$\,$\smallskip

For the choice of wave Hamiltonian \eqref{separatedWaveHam} of Example \ref{SeparatedWaveHam}, one may quite easily identify the WCI terms in \eqref{Kelvin-GLM} which correspond to the GLM formulation. In the GLM notation  \cite{AM1978,Holm2019}, the wave variables involve the Eulerian time mean correlations denoted as $\ob{(\,\cdot\,)}$ among terms involving  the fluctuation displacements $ \xi^i(\bx,t)$ and the   fluctuation pressure $p^\xi(\bx,t)$. These wave mean variables for GLM include the relative group velocity $\ob{v}_G^j= \overline{(p^\xi K^j_i \,\p_\phi\xi^i)}$ and an approximation of the kinematic fluctuation pressure, $-\,\ob{\pi^\ell}\approx \ob{ p^\xi_{,j}K^j_i\xi^i  }$. Here $K^j_i$ is the cofactor of the Jacobian for the fluctuating flow
\[
K^j_k :={\cal J}({\cal J}^{-1})^j_k
\quad\hbox{with}\quad
{\cal J}^k_j := \frac{\partial \big(x^k+\xi^k(\bx,t)\big) }{ \partial\,x^j}
\,,\quad\hbox{whose determinant is ${\cal J}$.}
\]
The GLM fluctuation quantities are related to the GLM time-mean wave variables $N$ and $\bp$ as
\begin{align}
\ob{(\varpi_k \partial_\phi \xi^k)} =: - N\,,\quad
\ob{(\varpi_k \nabla \xi^k)} = - N\bk =: -\, \bp\,,
\label{GLM-defs}
\end{align}
where the co-vector $\boldsymbol{\varpi} \in \mathbb{R}^3$ with components $\varpi_k$, $k=1,2,3,$ is the fluctuation momentum variable. The same variables $N$ and $\bk=\nabla\phi$ also appear in the canonical wave equations \eqref{separatedWaveDyn} for the present WCI theory.

Let us make the change of variables for the variational partial derivatives of the wave Hamiltonian in \eqref{separatedWaveHam} from dependence $H_W(\bk,N)$ to $H_W(\bp/N,N)$. We then find that the canonical Hamiltonian system for $\phi,N$ in \eqref{SVP3-det}, when rewritten in equations \eqref{separatedWaveDyn} for the WKB wave Hamiltonian $H_W$, transforms under $\bk=\nabla \phi = \bp/N$ into a Lie--Poisson Hamiltonian system which exactly recovers the standard GLM equations for EB flow. The wave field's semidirect-product Lie--Poisson Hamiltonian structure is revealed in its matrix form as  \cite{Holm2019}
\begin{align}
\p_t\!
\begin{bmatrix}\,
{p}_j \\ N 
\end{bmatrix}
= - 
   \begin{bmatrix}
   {p}_k\partial_j + \partial_k {p}_j &  N\partial_j 
   \\
   \partial_k N & 0 &  
   \end{bmatrix}
   \begin{bmatrix}
\frac{\delta H_W}{\delta {p}_k}\Big|_N = {u}^{L\,k} + {v}_G^k(\bp/N) \\
\frac{\delta H_W}{\delta N}\Big|_\bp =  \omega(\bp/N) - N^{-1}p_i \big({u}^{L\,i} + {v}_G^i(\bp/N)\big)
\end{bmatrix}
\,.
  \label{Waves-SD-LPmatrix}
\end{align}
This calculation proves the following proposition.
\begin{proposition}\label{WKB-WCI-GLM}\rm
When the wave Hamiltonian in \eqref{separatedWaveHam} is rewritten in Lie--Poisson dynamical variables in \eqref{Waves-SD-LPmatrix} as $H_W(\bp / N,N) = - \int_M N \,\omega(\bp/N) \,d^3x $, then the Euler--Poincar\'e equations in \eqref{SVP3-det} and the relationships in \eqref{separatedWaveHam} for its variational partial derivatives provide a closure for the standard equations for EB flow in the GLM representation \cite{GH1996,Holm2019}. 
\end{proposition}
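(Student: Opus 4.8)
The plan is to prove the proposition by a direct change-of-variables computation, in two stages: (i) show that the canonical wave subsystem contained in \eqref{SVP3-det}, specialised to the WKB Hamiltonian \eqref{separatedWaveHam}, becomes the semidirect-product Lie--Poisson system \eqref{Waves-SD-LPmatrix} under the momentum map $\bk=\nabla\phi\mapsto\bp=N\bk$; and (ii) identify the resulting closed set — the Euler--Poincar\'e motion equation for $\bm$ together with the $(\bp,N)$ system and the advection equations for $(D,b)$ — with the standard Andrews--McIntyre GLM equations for EB flow as presented in \cite{GH1996,Holm2019}. The only modelling input in the closure is the scalar dispersion relation $\omega(\bk)$, which fixes $H_W$ and hence the group velocity $\bv_G=\p\omega/\p\bk$.

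For stage (i), I would start from the two wave equations in \eqref{separatedWaveDyn}. Taking the spatial gradient of the phase equation and using $\curl\bk=0$ gives the wave-vector equation $\p_t\bk+\nabla\big(\omega(\bk)+\bk\cdot\bu^L\big)=0$, as recorded in the footnote to \eqref{separatedWaveDyn}. Multiplying this by $N$, adding $\bk\,\p_t N$, and eliminating $\p_t N$ via the wave-action equation $\p_t N+\Div\big(N(\bu^L+\bv_G)\big)=0$ assembles an evolution equation for the pseudomomentum density $\bp=N\bk$ which, after regrouping, is exactly the top row of the matrix system \eqref{Waves-SD-LPmatrix} — the operator there being the standard Lie--Poisson operator for a one-form density $\bp$ paired with a scalar density $N$. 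The entries of the column vector are obtained by a chain-rule computation: with $H_W(\bp/N,N)=-\int_M N\,\omega(\bp/N)\,d^3x$ and $\p k_i/\p p_j=\delta_{ij}/N$, $\p k_i/\p N\big|_{\bp}=-p_i/N^2$, one reads off the $\bv_G$-dependence, while the $\bu^L$-dependent pieces come from carrying the current--wave coupling term of the phase-space Lagrangian \eqref{Lag-det} through the Legendre transform; the result reproduces \eqref{Waves-SD-LPmatrix} up to the sign conventions used there.

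For stage (ii), I would invoke the GLM dictionary \eqref{GLM-defs}: the variables $N$ and $\bp=N\bk$ are precisely the Andrews--McIntyre wave-action density and pseudomomentum density, $-\,\ob{\varpi_k\p_\phi\xi^k}$ and $-\,\ob{\varpi_k\nabla\xi^k}$. Hence the $N$- and $\bp$-equations in \eqref{Waves-SD-LPmatrix} are the GLM prognostic equations for these quantities once $\bv_G$ is fixed by the chosen $\omega(\bk)$. Substituting $\bu^L=D^{-1}(\bm+N\nabla\phi)-\bR$ from \eqref{m&pi-defs}, with $D=1$, into the Euler--Poincar\'e motion equation in \eqref{SVP3-det} then gives the GLM mean-momentum balance whose Kelvin form is \eqref{Kelvin-GLM}; together with $\p_t b+\bu^L\cdot\nabla b=0$ and incompressibility this is the full closed GLM system for stratified, rotating EB flow, matching \cite{GH1996,Holm2019} term by term.

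I expect the main obstacle to be purely book-keeping rather than conceptual: propagating the change of variables $\delta/\delta\bk\mapsto(\delta/\delta\bp|_N,\ \delta/\delta N|_{\bp})$ correctly — in particular the Jacobian term $-p_i/N^2$ — and verifying that the semidirect-product operator in \eqref{Waves-SD-LPmatrix} is the push-forward of the canonical bracket on $(\phi,N)$ under $\bp=N\nabla\phi$, so that the assembled $\bp$-equation is consistent with the $\bk$- and $N$-equations from which it was built. A closely related point requiring care is matching the sign and normalisation conventions between $H_W$ as written in \eqref{separatedWaveHam}, the coupling term, and the arrangement of \eqref{Waves-SD-LPmatrix}, and then confirming that the final assembled system agrees with the GLM equations in the normalisation used by Andrews--McIntyre and in \cite{GH1996,Holm2019}.
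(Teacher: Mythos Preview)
Your proposal is correct and follows essentially the same route as the paper: the paper's argument consists precisely of the change of variables $(\phi,N)\mapsto(\bp=N\nabla\phi,N)$ in the WKB wave Hamiltonian, computing the transformed variational derivatives (your chain-rule terms $\p k_i/\p p_j=\delta_{ij}/N$ and $\p k_i/\p N|_{\bp}=-p_i/N^2$ are exactly what underlies the column vector in \eqref{Waves-SD-LPmatrix}), and observing that the canonical $(\phi,N)$ system thereby becomes the semidirect-product Lie--Poisson system, which is then identified with the GLM wave equations via the dictionary \eqref{GLM-defs}. Your stage-by-stage plan and anticipated book-keeping issues are an accurate and somewhat more explicit rendering of what the paper sketches in the paragraph preceding the proposition.
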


\begin{remark}[WCI relation to GLM]
The result in proposition \ref{WKB-WCI-GLM} that the WCI equations for the WKB Hamiltonian in \eqref{separatedWaveHam} provide a \emph{closure} for the standard GLM equations for EB fluid motion reflects a certain equivalence among WKB, WCI and GLM. These three approaches are all deeply connected with fast-slow decompositions and the time averaging of variational principles at fixed Eulerian position. For a recent discussion of these links and their history, see \cite{BurbyRuiz2019}. 

Thus, the phase-space Lagrangian \eqref{Lag-det} for the case $H_W(\bp / N,N)$ in separated form \eqref{separatedWaveHam} provides a Hamiltonian closure for deterministic GLM. In turn, the discovery here of the \emph{exact relation} of the WCI equations \eqref{SVP3-det} to the EB fluid GLM equations for the choice of WKB Hamiltonian in \eqref{separatedWaveHam} means that the order parameters $(\phi,N)$ introduced in the phase space Lagrangian \eqref{Lag-det} may be regarded as variables obtained from the time average of a fast-slow WMFI decomposition for \emph{whatever} fluid theory is under consideration. That is, we may regard the WCI equations with the WKB Hamiltonianin \eqref{separatedWaveHam} for any Euler--Poincar\'e fluid theory as a shortcut approach for deriving the form of the corresponding GLM equations from the linear dispersion relation for that theory. We will see another example of this approach for shallow water waves in section \ref{SW-WCI-sec}.
\end{remark}

\begin{remark}[Next steps: dynamics of  uncertainty in WCI]
The remainder of the main text will discuss the \emph{dynamics of  uncertainty in WCI}, as represented by stochasticity in this variational framework in hybrid wave-current variables for GLM. The investigation of the effects of random waves on the dispersion of fluid particles is a feature of modern research  \cite{Holmes-Cerfon-etal-2011}. We hope that a hybrid variational formulation of stochastic WCI associated with a Hamiltonian closure for GLM will be interesting and useful, as well. 
\end{remark}

\end{example}

%\begin{example}[Coupling EB fluid mean flows to nonlinear Schr\"odinger waves]\label{NLS example}$\,$\\
%
%\todo[inline]{NLS waves may be treated along the same lines and NLS waves would be more sophisticated than the separated wave Hamiltonian \eqref{separatedWaveHam}.}
%
%For the NLS equation, we have the wave Hamiltonian
%\begin{align}
%H_W(|\nabla \psi|^2,|\psi|^2)
%=
%\frac12 \int_M  |\nabla \psi|^2 + \sigma (|\psi|^2)^2 \,d^3x
%=
%\frac12 \int_M  N|\nabla \phi|^2 + |\nabla \sqrt{N}|^2  + \sigma N^2 \,d^3x
%\,,\label{NLS-Ham}
%\end{align}
%where $\sigma=-1$ in the focusing case and $\sigma=1$ in the de-focusing case. The wave frequency is given by  $\omega(\bk,N)$ is given by
%\begin{align}
%\partial_t\phi = - \,\omega(\bk,N) = \frac{\delta H_W}{\delta N}\Big|_{\phi} = |\bk|^2 - \frac{\Delta\sqrt{N}}{2\sqrt{N}} + \sigma N
%.\label{k-freq-NLS}
%\end{align}
%and the group velocity is $\bv_G(\bk):=\partial \omega(\bk) / \partial \bk = -2\bk$. Likewise, $\frac{\delta H_W}{\delta \bk}=N\bk$ is the wave momentum density.
%
%\todo[inline]{Motivation to do this? Does it really advance the narrative? Ideas? Keep in mind that something interesting must happen when $N=0$. Maybe for another paper with superfluids? Do it with Gross--Pitaevskii? }
%
%\end{example}

\section{Variational principles for stochastic fluid dynamics (SALT)}\label{SALTrev-sec}

The variational principles for stochastic fluid dynamics derived in \cite{Holm2015} have come to be known as \emph{stochastic advection by Lie transport}, abbreviated as SALT \cite{CCHPS2018,CCHPS2019a,CCHPS2019b}. In the SALT fluid equations, the Lagrangian parcels move along Stratonovich stochastic paths, on which the Kelvin circulation theorem still holds for closed material circulation loops. 

The variational equations for  stochastic fluids introduced in \cite{Holm2015} showed that such equations arise from Hamilton's principle for the following action integral, in which the advected quantities are constrained to move along the Stratonovich stochastic paths:
from a \emph{stochastically constrained} variational principle $\delta S = 0$, with action, $S$, given by%
\footnote{Sections \ref{SALTrev-sec} and \ref{SNWP-sec} needn't be restricted to considering only GLM. Consequently, we will drop the superscript $^L$ in these sections.}
\begin{align}
S(u,a,b)  &=
\int_{t_1}^{t_2} \bigg( \ell(u,a) dt 
+ \left\langle  b\,,\,{\color{red}\rm d} a + \mathcal{L}_{{\color{red}\rm d} x_t} a\,\right\rangle_V \bigg)
\,,\label{SVP1}
\end{align}
where $\ell(u,a)$ is the unperturbed deterministic fluid Lagrangian, written as a functional of velocity vector field, $u$, and advected quantities, $a$. The stochastic dynamics of the advected quantities imposes a constrant on the variations known as a \emph{driving martingale relation} in which the operation ${\color{red}\rm d}$ in \eqref{SVP1} may be regarded as a \emph{stochastic differential}. 
For more discussion of this notation and the concept of driving martingales, see \cite{SC2020}.

The angle brackets in
\begin{equation}
\langle \,b\,,\,a\,\rangle_V:=\int < b(x),a(x,t) > dx
\label{L2pairing}
\end{equation}
denote the spatial $L^2$ integral  over the domain of flow of the pairing $<b\,,\,a>$ between elements $a\in V$ and their dual elements $b\in V^*$. In \eqref{SVP1}, the quantity $b\in V^*$ is a Lagrange multiplier and $\mathcal{L}_{{\color{red}{\rm d}}x_t}a$ is the \emph{Lie derivative} of an {advected quantity} $a\in V$, along a vector field ${\color{red}\rm d}x_t\in\mathfrak{X}$ defined by the following sum of a drift velocity $u(x,t)$ and Stratonovich stochastic process with \emph{cylindrical noise} parameterised by spatial position $x$, \cite{Pa2007,Sc1988} 
\begin{equation}
{\color{red}\rm d} x_t(x) = u(x,t)\,dt + \sum_i \xi_i(x)\circ dW_i(t)
\,.\label{vel-vdt}
\end{equation}
\begin{remark}\rm
The quantity ${\color{red}\rm d} x_t(x)$ in \eqref{vel-vdt} may be regarded as a stochastic Eulerian vector field parameterised by the spatial position $x$ which generates a smooth invertible map in space whose parameterisation in time is stochastic. In integral form, the operation the expression ${\color{red}\rm d} x_t$ in equation \eqref{vel-vdt} represents,%
\footnote{The usual superscript $\omega$ for pathwise stochastic quantities will be understood throughout. However, this superscript will be suppressed for the sake of cleaner notation.} 
\begin{equation}
x_t = x_0 + \int_0^t u(x,t)\,dt + \sum_i \xi_i(x)\circ dW_i(t)\,.
\label{L2pairing}
\end{equation}
\end{remark}

We also will find it useful to define a map called the \emph{diamond operation} $\diamond: V^*\times V\to \mathfrak{X}^*$, as follows \cite{HMR1998}.

\begin{definition}[The diamond operation]\label{diamond-def}\rm
Let  $\mathfrak{X}^*(M)$ denote the space of (smooth) 1-form densities dual to the space of (smooth)  vector fields, $\mathfrak{X}(M)$, with respect to the $L^2$ pairing on a manifold $M$.
On the manifold $M$, the diamond operation $\diamond: V^*\times V\to \mathfrak{X}^*$ is defined for a vector space $V$ with $(b,a)\in V^*\times V$ and vector field $\xi\in \mathfrak{X}$ is given in terms of the Lie-derivative operation $\mathcal{L}_\xi$ by 
\begin{equation}
\Scp{  b\diamond a }{\xi }_\mathfrak{X}
:=
\Scp { b }{- \mathcal{L}_\xi a}_V 
\label{diamond-def}
\end{equation}
for the $L^2$ pairings $\langle  \,\cdot\,,\,\cdot\,\rangle_V: V^*\times V\to \mathbb{R}$ and
$\langle  \,\cdot\,,\,\cdot\,\rangle_\mathfrak{X}: \mathfrak{X}^*\times \mathfrak{X}\to \mathbb{R}$ with $b\diamond a\in \mathfrak{X}^*$. 
\end{definition}

\begin{theorem}[SALT dynamics via Hamilton's principle $\delta S = 0$ for action integral  \eqref{SVP1},  \cite{Holm2015}]$\,$

The SPDEs which result from the stochastically constrained variational principle $\delta S = 0$ for $S$ defined in \eqref{SVP1} were expressed in Stratonovich form in terms of the Lie-derivative operation $\mathcal{L}_{{\color{red}{\rm d}}x_t}$ as 
\begin{align}
{\color{red}\rm d}  \frac{\delta  \ell}{\delta  u} 
+ \mathcal{L}_{{\color{red}\rm d} x_t} \frac{\delta  \ell}{\delta  u}
= \frac{\delta  \ell}{\delta  a}\diamond a\,dt
\,,\quad\hbox{and}\quad
{\color{red}\rm d} a + \mathcal{L}_{{\color{red}\rm d} x_t}a 
= 0
\,,
\label{FSPDEs-Stratonovich}
\end{align}
in which ${\color{red}\rm d}x_t$ is the stochastic Eulerian vector field in equation \eqref{vel-vdt} which generates the Stratonovich stochastic Lagrangian fluid `trajectory' (flow map). 

\end{theorem}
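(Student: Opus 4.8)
The plan is to carry out an Euler--Poincar\'e reduction with advected parameters, exactly as in the deterministic semidirect-product theory \cite{HMR1998}, but on the group of \emph{stochastic} Lagrangian flow maps generated by the semimartingale vector field ${\color{red}\rm d} x_t$ of \eqref{vel-vdt}, holding the correlation eigenvectors $\xi_i$ fixed so that only the drift velocity $u$ carries a free variation. Concretely, I would take the reduced (constrained) variation of ${\color{red}\rm d} x_t$ to be
\[
\delta\big({\color{red}\rm d} x_t\big) \;=\; (\delta u)\,dt \;=\; {\color{red}\rm d}\eta - {\rm ad}_{{\color{red}\rm d} x_t}\eta \;=\; {\color{red}\rm d}\eta + \mathcal{L}_{{\color{red}\rm d} x_t}\eta \,,
\]
for an arbitrary time-dependent vector field $\eta \in \mathfrak{X}$ with $\eta(t_1)=\eta(t_2)=0$, while $a$ and the Lagrange multiplier $b$ are varied freely. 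These are the standard reduced-variation relations ``per unit time''; they lift to the present setting because $x_t$ is a genuine (stochastic) flow whose bounded-variation part alone is being perturbed, and because every Lie-derivative identity invoked below is pathwise and purely algebraic, hence unchanged from the deterministic case.

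Next I would compute $\delta S = 0$ for $S$ in \eqref{SVP1} term by term. The free variation $\delta b$ returns at once the advection constraint ${\color{red}\rm d} a + \mathcal{L}_{{\color{red}\rm d} x_t} a = 0$, which is the second equation of the theorem. Because ${\color{red}\rm d}$ denotes the Stratonovich differential, every temporal integration by parts below obeys the ordinary Leibniz rule with no It\^o correction, and all endpoint contributions vanish since the variations are held fixed at $t_1$ and $t_2$. Gathering the terms proportional to the free variation $\delta a$, integrating by parts in time, and using the pairing identity $\langle b, \mathcal{L}_X a\rangle_V = -\langle \mathcal{L}_X b, a\rangle_V$ --- valid since $\langle b,a\rangle$ is density-valued --- one obtains the auxiliary multiplier equation ${\color{red}\rm d} b + \mathcal{L}_{{\color{red}\rm d} x_t} b = \frac{\delta\ell}{\delta a}\,dt$. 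Gathering the terms proportional to $\delta u$, and rewriting the contribution $\langle b, \mathcal{L}_{\delta({\color{red}\rm d} x_t)} a\rangle_V$ via the defining relation $\langle b, \mathcal{L}_Y a\rangle_V = -\langle b\diamond a, Y\rangle_\mathfrak{X}$ of the diamond operation, one is left with $\big\langle \frac{\delta\ell}{\delta u} - b\diamond a \,,\, {\color{red}\rm d}\eta + \mathcal{L}_{{\color{red}\rm d} x_t}\eta \big\rangle$; one further integration by parts in time together with the adjoint identity $\langle \mu, \mathcal{L}_{{\color{red}\rm d} x_t}\eta\rangle_\mathfrak{X} = -\langle \mathcal{L}_{{\color{red}\rm d} x_t}\mu, \eta\rangle_\mathfrak{X}$ for $\mu \in \mathfrak{X}^*$ then gives, by arbitrariness of $\eta$, the intermediate identity ${\color{red}\rm d}\mu + \mathcal{L}_{{\color{red}\rm d} x_t}\mu = 0$ with $\mu := \frac{\delta\ell}{\delta u} - b\diamond a$.

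To finish I would eliminate the multiplier. Since $\diamond$ is bilinear it obeys the Leibniz rule under both ${\color{red}\rm d}$ and $\mathcal{L}_{{\color{red}\rm d} x_t}$ --- with no correction terms, precisely because everything is kept in Stratonovich form --- so the $a$- and $b$-equations just derived give
\[
{\color{red}\rm d}(b\diamond a) + \mathcal{L}_{{\color{red}\rm d} x_t}(b\diamond a)
= \big({\color{red}\rm d} b + \mathcal{L}_{{\color{red}\rm d} x_t} b\big)\diamond a + b\diamond\big({\color{red}\rm d} a + \mathcal{L}_{{\color{red}\rm d} x_t} a\big)
= \Big(\frac{\delta\ell}{\delta a}\diamond a\Big)\,dt \,.
\]
Substituting this into ${\color{red}\rm d}\mu + \mathcal{L}_{{\color{red}\rm d} x_t}\mu = 0$ and recalling $\mu = \frac{\delta\ell}{\delta u} - b\diamond a$ produces ${\color{red}\rm d}\frac{\delta\ell}{\delta u} + \mathcal{L}_{{\color{red}\rm d} x_t}\frac{\delta\ell}{\delta u} = \frac{\delta\ell}{\delta a}\diamond a\,dt$, which is the first equation of the theorem.

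The step I expect to be the genuine obstacle --- and the one deserving the most care --- is the legitimacy of the variational calculus on the stochastic flow itself: that the constrained-variation identity for ${\color{red}\rm d} x_t$ is valid, that $\delta$ and ${\color{red}\rm d}$ commute, and that the temporal integrations by parts produce no quadratic-covariation terms. Each of these holds because the perturbation $\eta$ is a smooth deterministic path of finite variation in time, so that Stratonovich calculus collapses to the ordinary Leibniz and chain rules exactly where they are used; a fully rigorous treatment can be anchored either in a finite-dimensional Galerkin truncation of the flow or in the driving-martingale formulation discussed in \cite{SC2020, Holm2015}. Everything else --- the Lie-algebraic identities for $\mathcal{L}$, ${\rm ad}$, ${\rm ad}^*$ and $\diamond$ --- is classical and transfers verbatim from the deterministic Euler--Poincar\'e derivation.
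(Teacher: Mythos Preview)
Your argument is correct, but it takes a different route from what the paper does (the paper does not prove this theorem directly---it cites \cite{Holm2015}---but its proof of the more general Theorem~\ref{SALT-SNWP-Thm} and Lemma~\ref{Lemma-m-eqn} shows exactly how the authors handle these terms). The essential difference lies in how the variation in $u$ is treated.

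The paper takes the pure Clebsch route: since the advection law is already enforced by the Lagrange multiplier $b$, the drift $u$ is varied \emph{freely}. Stationarity in $\delta u$ then gives the \emph{algebraic} relation $\frac{\delta\ell}{\delta u} = b\diamond a$ (this is the $\delta u$ line of \eqref{var-eqns-thm} with the wave terms removed). With $\frac{\delta\ell}{\delta u}$ identified pointwise with $b\diamond a$, the motion equation follows immediately from Lemma~\ref{Lemma-m-eqn}, which computes ${\color{red}\rm d}(b\diamond a) + \mathcal{L}_{{\color{red}\rm d}x_t}(b\diamond a)$ from the $a$- and $b$-equations. Your ``Leibniz rule for $\diamond$'' step is precisely the content of that lemma, so on that point the two arguments coincide.

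Where you diverge is in imposing the Euler--Poincar\'e constrained variation $\delta({\color{red}\rm d}x_t) = {\color{red}\rm d}\eta - {\rm ad}_{{\color{red}\rm d}x_t}\eta$ \emph{in addition} to keeping the multiplier $b$. This hybrid yields only the evolution equation ${\color{red}\rm d}\mu + \mathcal{L}_{{\color{red}\rm d}x_t}\mu = 0$ for $\mu = \frac{\delta\ell}{\delta u} - b\diamond a$, rather than $\mu = 0$ directly; you then recover the motion equation after substituting for ${\color{red}\rm d}(b\diamond a)$. It works, but it is more elaborate than necessary: the whole point of the Clebsch action \eqref{SVP1} is to make $\delta u$ free and thereby \emph{avoid} the constrained-variation identity on the stochastic flow. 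Indeed, the step you yourself flag as the genuine obstacle---the validity of $\delta({\color{red}\rm d}x_t) = {\color{red}\rm d}\eta - {\rm ad}_{{\color{red}\rm d}x_t}\eta$ in the semimartingale setting---is simply absent from the paper's argument. What your approach buys is a closer parallel to the classical Euler--Poincar\'e derivation; what the paper's buys is that the only stochastic analysis needed is the Stratonovich product rule inside Lemma~\ref{Lemma-m-eqn}.
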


All fluid theories advect mass, whose density $D=\rho \,d^3x$ satisfies the following Stratonovich stochastic continuity equation,
\begin{align}
({\color{red}\rm d}  + \mathcal{L}_{{\color{red}\rm d} x_t}) D = \big({\color{red}\rm d}\rho + {\rm div}(\rho\, {\color{red}\rm d} \bx_t)\big)\,d^3x = 0\,.
\label{stoch-mass-advect}
\end{align}
The motion and advection equations in \eqref{FSPDEs-Stratonovich} and the continuity equation \eqref{stoch-mass-advect} imply the following Kelvin circulation theorem. 

\begin{corollary}[Kelvin circulation theorem for SALT dynamics]
The SALT dynamics equations \eqref{FSPDEs-Stratonovich} imply
\begin{align}
{\color{red}\rm d}\oint_{c({\color{red}\rm d} x_t)}\frac{1}{D} \frac{\delta  \ell}{\delta  u} 
=
\oint_{c({\color{red}\rm d} x_t)}
\big({\color{red}\rm d} + \mathcal{L}_{{\color{red}\rm d} x_t}\big)  \left(  \frac{1}{D} \frac{\delta  \ell}{\delta  u} \right)
= \oint_{c({\color{red}\rm d} x_t)}  \left(  \frac{1}{D} \frac{\delta  \ell}{\delta  a}\diamond a  \right) \,dt
\,,
\label{SALT-Kelvin}
\end{align}

\end{corollary}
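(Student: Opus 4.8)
The plan is to prove the two equalities in \eqref{SALT-Kelvin} separately, since they have quite different characters: the first is purely \emph{kinematic} — it holds for the circulation of any advected one-form around any material loop — while the dynamics only enters in the second. For the first equality I would use the stochastic transport (Kunita--It\^o--Wentzell, or stochastic Reynolds transport) theorem. By definition, the material loop $c({\color{red}\rm d} x_t)$ is the image $\phi_t(c_0)$ of a fixed reference loop $c_0$ under the Stratonovich flow map $\phi_t$ generated by the semimartingale vector field ${\color{red}\rm d} x_t$ of \eqref{vel-vdt}. Because the Stratonovich differential obeys the ordinary Leibniz and chain rules, the pullback by this flow satisfies, for any (possibly time-dependent, semimartingale-valued) one-form $\alpha_t$,
\[
{\color{red}\rm d}\big(\phi_t^*\alpha_t\big) = \phi_t^*\Big(\big({\color{red}\rm d} + \mathcal{L}_{{\color{red}\rm d} x_t}\big)\alpha_t\Big),
\]
which is formally the same identity as in the deterministic case, with $\partial_t\,dt$ replaced by ${\color{red}\rm d}$ and $\mathcal{L}_u\,dt$ by $\mathcal{L}_{{\color{red}\rm d} x_t}$. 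Note that $D^{-1}\,\delta\ell/\delta u$ makes sense as a genuine one-form (it is the ratio of the one-form density $\delta\ell/\delta u\in\mathfrak{X}^*$ by the density $D$), so it can be integrated around a loop. Taking $\alpha_t = D^{-1}\,\delta\ell/\delta u$, pulling the loop integral back to $c_0$ (over which ${\color{red}\rm d}$ passes freely, $c_0$ being deterministic), applying the identity above, and pushing forward again gives
\[
{\color{red}\rm d}\oint_{c({\color{red}\rm d} x_t)} \tfrac1D\tfrac{\delta\ell}{\delta u}
= {\color{red}\rm d}\oint_{c_0}\phi_t^*\!\big(\tfrac1D\tfrac{\delta\ell}{\delta u}\big)
= \oint_{c_0}\phi_t^*\Big(\big({\color{red}\rm d}+\mathcal{L}_{{\color{red}\rm d} x_t}\big)\tfrac1D\tfrac{\delta\ell}{\delta u}\Big)
= \oint_{c({\color{red}\rm d} x_t)}\big({\color{red}\rm d}+\mathcal{L}_{{\color{red}\rm d} x_t}\big)\big(\tfrac1D\tfrac{\delta\ell}{\delta u}\big),
\]
which is the first equality in \eqref{SALT-Kelvin}.

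For the second equality I would feed in the dynamics. Writing $D^{-1}\,\delta\ell/\delta u$ as the quotient of the one-form density $\delta\ell/\delta u$ and the density $D$, and using that both ${\color{red}\rm d}$ and $\mathcal{L}_{{\color{red}\rm d} x_t}$ satisfy the Leibniz rule in Stratonovich form, one has the quotient rule
\[
\big({\color{red}\rm d}+\mathcal{L}_{{\color{red}\rm d} x_t}\big)\Big(\frac1D\frac{\delta\ell}{\delta u}\Big)
= \frac1D\big({\color{red}\rm d}+\mathcal{L}_{{\color{red}\rm d} x_t}\big)\frac{\delta\ell}{\delta u}
- \frac1{D^2}\frac{\delta\ell}{\delta u}\,\big({\color{red}\rm d}+\mathcal{L}_{{\color{red}\rm d} x_t}\big)D.
\]
The stochastic continuity equation \eqref{stoch-mass-advect} annihilates the last term, and the SALT motion equation in \eqref{FSPDEs-Stratonovich} replaces the first term by $D^{-1}\,(\delta\ell/\delta a)\diamond a\,dt$. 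Integrating around $c({\color{red}\rm d} x_t)$ yields the right-hand side of \eqref{SALT-Kelvin}, completing the proof.

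The main obstacle — the step I would lean on the SALT framework of \cite{Holm2015} to supply rather than reprove — is the stochastic transport identity ${\color{red}\rm d}(\phi_t^*\alpha_t) = \phi_t^*\big(({\color{red}\rm d}+\mathcal{L}_{{\color{red}\rm d} x_t})\alpha_t\big)$: in It\^o form this carries quadratic-covariation corrections, and the clean Stratonovich form relies both on the Stratonovich chain rule and on $\alpha_t$ itself being a semimartingale whose martingale part is driven by the same Wiener processes $W_i$ that drive ${\color{red}\rm d} x_t$ (so that the cross-variations organise exactly into a Lie derivative along the noise vector fields $\xi_i$). This is precisely the tool underlying the derivation of \eqref{FSPDEs-Stratonovich} in the cited SALT theorem, so it is legitimate to invoke it here. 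Everything else — Cartan's formula for $\mathcal{L}$ on one-forms, the Leibniz/quotient rule, and the bookkeeping of densities versus one-form densities — is routine exterior calculus requiring no new input.
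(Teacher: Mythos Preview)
Your proposal is correct and follows essentially the same approach that the paper indicates: the paper does not spell out a full proof but remarks that ``the first step in the proof of the fluid circulation equation \eqref{SALT-Kelvin} invokes the Kunita--It\^o--Wentzell theorem'' (citing \cite{BdLHLT2020}), which is precisely the stochastic transport identity ${\color{red}\rm d}(\phi_t^*\alpha_t)=\phi_t^*\big(({\color{red}\rm d}+\mathcal{L}_{{\color{red}\rm d} x_t})\alpha_t\big)$ you use for the first equality --- indeed the paper records exactly this identity later as equation \eqref{SVP2..2}. Your treatment of the second equality via the quotient rule, the continuity equation \eqref{stoch-mass-advect}, and the motion equation in \eqref{FSPDEs-Stratonovich} is the standard completion of the argument.
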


\begin{remark}[Creation of fluid circulation by advection]$\,$\\
The first step in the proof of the fluid circulation equation \eqref{SALT-Kelvin} invokes the Kunita-It\^o-Wentzell theorem whose use in the derivation of stochastic fluid dynamics is discussed in  \cite{BdLHLT2020}.  
Equation \eqref{SALT-Kelvin} extends the familiar statement that fluid circulation can be created by the dynamics of the advected fluid quantities into the realm of fluid circulation on stochastically moving material loops.
\end{remark}

Equations \eqref{vel-vdt} and \eqref{FSPDEs-Stratonovich} have already been applied with good effect for uncertainty quantification and data assimilation resulting in reduction of uncertainty by using particle filtering in several exemplar problems \cite{CCHPS2018,CCHPS2019a,CCHPS2019b}. Future steps will turn toward oceanic applications of SALT for stochastic upper ocean dynamics (STUOD). However, the upper ocean dynamics has an added feature which cannot be addressed with the present SALT theory. Namely, upper ocean dynamics depends strongly on wave-current interaction (WCI). Historically, WCI has been a fundamental issue in ocean physics itself, not to mention its effect on modelling uncertainty and its potential complications in data assimilation. Now, the outstanding problem for STUOD is, ``How to extend the SALT approach to accommodate WCI?'' 

Naturally, to face this issue, we must return to basics. The first question might be, ``How can we extend the stochastically constrained variational principle $\delta S = 0$ for SALT, with action $S$ given in equation \eqref{SVP1} to accommodate WCI?'' For this, one would need to introduce a wave degree of freedom which would allow some of the fluid variables to propagate relative to the fluid flow, rather than being simply advected. Moreover, we must ask, ``Would those wave variables have their own type of stochasticity which would be independent of stochastic advection?'' and "Would one be able to represent the uncertainty in wave dynamics through stochastic propagation?'' Fortunately, the derivation obtained by using the phase-space Lagrangian for the wave dynamics discussed in Example \ref{SVP1} of the deterministic GLM equations for the EB fluid has already revealed a potential pathway to a theory of stochastic WCI. Namely, one may be able to extend the variational principle for SALT to include a \emph{stochastic phase-space Lagrangian} for the wave variables. 

This extension will be the aim of much of the remainder of the paper. We know that SALT will affect the wave motion, because the waves propagate in the frame of the stochastic fluid motion. And we know that the evolution of the waves will not affect the circulation of the fluid. However, we would also like to know how uncertainty in the wave propagation itself might affect the uncertainty of the fluid flow. The next section lays out the general formulation. Then, in the last section, we conclude by rederiving the stochastic GLM equations of \cite{Holm2019} for stratified EB fluids by using the WCI approach of the previous section, augmented by allowing the Hamiltonian for the wave variables in the phase-space Lagrangian to be stochastic.

%%%%%%%%%%%%%%%%%%%%%%%%%%%%%%%%%%%%%%%%%%%%%%%

\section{Including stochastic nonlinear wave propagation (SNWP) for WCI} \label{SNWP-sec}

To extend the stochastically constrained variational principle \eqref{SVP1} for SALT in order to accommodate the effects of SNWP on WCI, we will need additional wave variables and an additional constraint amongst them which will correspond to stochastic nonlinear wave propagation. For this purpose, we shall work in the abstract framework sketched in the introduction to introduce a set of canonically conjugate \emph{wave variables} denoted $(q,p)$ and propose the following \emph{minimal coupling form} of the wave action integral to the SALT action integral for fluid flow  in \eqref{SVP1},
\begin{align}
\begin{split}
S(u,a,b,q,p)  =
\int_{t_1}^{t_2} & \hspace{-2mm}
\underbrace{\
\ell(u,a) dt\ 
}_{\hbox{Fluid Lagrangian}}
+ \int_{t_1}^{t_2} \hspace{-2mm}
\underbrace{\
\Scp{  b }{{{\color{red}\rm d} a} + \mathcal{L}_{{\color{red}\rm d}x_t} a}_V\ 
}_{\hbox{Advection Constraint}} \!\!\!\! dt
\\&
- \int_{t_1}^{t_2} \hspace{-3mm}
\underbrace{
\Scp{p\diamond q}{{\color{red}\rm d} x_t}_\mathfrak{X}\
}_{\hbox{Minimal Coupling}} \!\!\!\! dt
+ \int_{t_1}^{t_2} \hspace{-3mm}
\underbrace{
\Scp{ p }{{\color{red}\rm d} q}_V  -  {\color{red}\rm d} \mathcal{J}(q,p) \
}_{\hbox{Phase-space Wave Lagrangian}} \hspace{-0.7cm}dt
 \hspace{-2mm}
\,.\end{split}
\label{SVP2.1}
\end{align}
Here, the angle brackets represent real  $L^2$ integral pairing, ${\color{red}\rm d} x_t$ is given in \eqref{vel-vdt} and the stochastic Hamiltonian functional ${\color{red}\rm d}  \mathcal{J}(q,p)$ for SNWP is given in Stratonovich form by 
\begin{align}
{\color{red}\rm d}  \mathcal{J}(q,p) := 
\mathcal{H}(q,p) \,dt
+ \mathcal{K}(q,p) \circ dB_t\,,
\label{J(q,p)-def}
\end{align} 

\begin{remark}[\emph{Minimal coupling} -- an idea from quantum mechanics]
Both the minimum coupling idea and the phase-space Lagrangian in the action integral \eqref{SVP2.1} were introduced by Paul Dirac in the early days of quantum mechanics. Minimum coupling is sometimes called `jay-dot-ay' ($\mathbf{J}\cdot\mathbf{A}$) coupling because of its interpretation in coupling a solution $\psi$ of the Schr\"odinger equation for a quantum charged particle such as an electron with charge $e$ and current density $\mathbf{J}=e\Im(\psi^*\nabla\psi)$ to Maxwell's equations for an electromagnetic field with vector potential $\mathbf{A}$. This type of coupling is still invoked universally in quantum problems today. For example, one may see $\mathbf{J}\cdot\mathbf{A}$ used for coupling classical nuclei to quantum electrons in the quantum hydrodynamic theory of molecular chemistry, \cite{FHT2019}. Thus, it may be no surprise that minimal coupling might arise here again, as a natural approach for coupling the Lagrangian mean flow of fluid trajectories to the essentially Eulerian field properties of wave propagation. In fact, as for the deterministic WCI case, the minimal coupling term will add the wave momentum map $p\diamond q$ to the total momentum, $m={\delta  \ell (u,a)}/{\delta  u}$. As one might expect, the minimal coupling term will also boost the wave dynamics into the frame of the stochastic fluid flow.  
\end{remark}

Applying the definition of the diamond operation $(\diamond)$ in \eqref{diamond-def} to the minimal coupling term in the action integral with the phase-space Lagrangian in \eqref{SVP2.1} 
yields 
\begin{equation}
-\,\Scp{p\diamond q}{{\color{red}\rm d} x_t}_\mathfrak{X}
:=
\Scp{ p }{\mathcal{L}_{{\color{red}\rm d}x_t} q}_V 
\label{diamond-def-coupling}
\end{equation}
for $(p,q)\in T^*V$ and the Stratonovich stochastic vector field ${\color{red}\rm d} x_t \in \mathfrak{X}$ given in \eqref{vel-vdt} and appearing in the Lie-derivative operation $\mathcal{L}_{{\color{red}\rm d} x_t }$ for the advection constraint in the action integral  \eqref{SVP2.1}.

As a consequence of equation \eqref{diamond-def-coupling}, the minimum coupling term in the action integral \eqref{SVP2.1} may be absorbed into the phase-space Lagrangian in \eqref{SVP2.1}, as
\begin{align}
\begin{split}
S(u,a,b,q,p)  =
\int_{t_1}^{t_2} & \hspace{-2mm}
\underbrace{\
\ell(u,a) dt\ 
}_{\hbox{Fluid Lagrangian}}
+ \int_{t_1}^{t_2} \hspace{-2mm}
\underbrace{\
\Scp{  b }{{{\color{red}\rm d} a} + \mathcal{L}_{{\color{red}\rm d}x_t} a}_V\ 
}_{\hbox{Advection Constraint}} \!\!\!\! dt
\\&+ \int_{t_1}^{t_2} 
\underbrace{\ 
\Scp{  p }{ {\color{red}\rm d} q + \mathcal{L}_{{\color{red}\rm d} x_t} q }_V 
-  
\Big(\mathcal{H}(q,p) \,dt + \mathcal{K}(q,p) \circ dB_t \Big)
}_{\hbox{Stochastic Legendre Transformation in Fluid Frame}}
\,,\end{split}
\label{SVP2.2}
\end{align}
where its purpose now is to represent time derivatives along the flow of the Lagrangian trajectories of the stochastic mean flow map $\phi_t$ generated by the Stratonovich stochastic vector field ${\color{red}\rm d} x_t \in \mathfrak{X}$ given in \eqref{vel-vdt}. This statement may be proved by recalling that the pullback $\phi_t^*q_t$ of a time dependent quantity (e.g. the state variable $q_t$) by the stochastic time-dependent map $\phi_t$ generated by the stochastic vector field ${\color{red}\rm d} x_t$ satisfies the stochastic differential relation \cite{BdLHLT2020}
\begin{align}
{\color{red}\rm d}(\phi_t^*q_t) = \phi_t^*\Big({\color{red}\rm d} q + \mathcal{L}_{{\color{red}\rm d} x_t} q\Big)
\,.
\label{SVP2..2}
\end{align}
For more discussion of the mathematics of stochastic geometric mechanics, see \cite{BdLHLT2020}. 

The stochastic dynamics associated with Hamilton's principle for the action integral for the stochastic phase-space Hamiltonian in \eqref{SVP2.2} may be encapsulated in the following theorem. 

\begin{theorem}[SALT + SNWP dynamics via Hamilton's principle $\delta S = 0$ for action integral  \eqref{SVP2.2}]\label{SALT-SNWP-Thm}$\,$

The extension of SALT to include SNWP is governed by the following system of Euler-Poincar\'e equations. 
\begin{align}
\begin{split}
&{\color{red}\rm d}  \frac{\delta  \ell (u,a)}{\delta  u} 
+ \mathcal{L}_{{\color{red}\rm d} x_t} \frac{\delta  \ell}{\delta  u}
= 
\frac{\delta  \ell}{\delta  a}\diamond a\,dt
-\,\frac{\delta  \,{\color{red}\rm d}  \mathcal{J} }{\delta  q}\diamond q
+
 p \diamond \frac{\delta  \,{\color{red}\rm d}  \mathcal{J} }{\delta  p}
\,,\qquad
{\color{red}\rm d} a + \mathcal{L}_{{\color{red}\rm d} x_t}a = 0
\,,\\&
{\color{red}\rm d} q + \mathcal{L}_{{\color{red}\rm d} x_t} q - \frac{\delta\, {\color{red}\rm d}  \mathcal{J}(q,p)}{\delta p} = 0
\,,\qquad
{\color{red}\rm d} p - \mathcal{L}_{{\color{red}\rm d} x_t}^T p + \frac{\delta\, {\color{red}\rm d}  \mathcal{J}(q,p)}{\delta q} = 0
\,,\end{split}
\label{SVP3}
\end{align}
where ${\color{red}\rm d} x_t$ is given in \eqref{vel-vdt} and ${\color{red}\rm d}  \mathcal{J}(q,p)$  is given in \eqref{J(q,p)-def}. 
\end{theorem}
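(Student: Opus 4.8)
The plan is to derive the system \eqref{SVP3} directly from the stochastic Hamilton's principle $\delta S = 0$ for the action \eqref{SVP2.2}, treating it as a constrained variational principle in which the variations $\delta u$, $\delta a$, $\delta b$, $\delta q$, $\delta p$ are taken independently, with the crucial proviso that $\delta u$ is not arbitrary but constrained to be of the Euler--Poincar\'e form $\delta u = \partial_t \eta - \mathrm{ad}_u \eta$ (equivalently $\delta({\color{red}\rm d} x_t) = {\color{red}\rm d}\eta + [\,{\color{red}\rm d} x_t, \eta\,]$ in the stochastic setting, where $\eta$ is an arbitrary vector field vanishing at the temporal endpoints). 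The variations of $a$ are then correspondingly constrained, $\delta a = -\mathcal{L}_\eta a$, reflecting that $a$ is advected; the Lagrange multiplier $b$ enforces the advection relation ${\color{red}\rm d} a + \mathcal{L}_{{\color{red}\rm d} x_t} a = 0$, which is recovered immediately from $\delta S/\delta b = 0$. The wave variables $(q,p)$, by contrast, are treated as ordinary (unconstrained) canonical phase-space variables, so $\delta q$ and $\delta p$ are free.

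First I would record the building-block variational formulas. From the fluid Lagrangian one gets $\delta \ell = \langle \delta\ell/\delta u, \delta u\rangle + \langle \delta\ell/\delta a, \delta a\rangle$. From the advection-constraint term, using that $\delta({\color{red}\rm d} x_t) = {\color{red}\rm d}\eta + [{\color{red}\rm d} x_t,\eta]$ and integrating the Lie-derivative terms by parts in space via the diamond operation \eqref{diamond-def}, one collects the coefficient of $\eta$; this is the standard SALT computation already encapsulated in \eqref{FSPDEs-Stratonovich}, now carried out at the level of the variational derivation rather than quoted. From the absorbed phase-space wave Lagrangian $\langle p, {\color{red}\rm d} q + \mathcal{L}_{{\color{red}\rm d} x_t} q\rangle - {\color{red}\rm d}\mathcal{J}(q,p)$ I would compute three separate pieces: (i) the $\delta p$ variation gives $\langle \delta p, {\color{red}\rm d} q + \mathcal{L}_{{\color{red}\rm d} x_t} q - \delta\,{\color{red}\rm d}\mathcal{J}/\delta p\rangle$, yielding the third equation of \eqref{SVP3}; (ii) the $\delta q$ variation requires integrating by parts both the $\langle p, {\color{red}\rm d} q\rangle$ term (producing $-{\color{red}\rm d} p$, modulo the Kunita--It\^o--Wentzell correction — see below) and the $\langle p, \mathcal{L}_{{\color{red}\rm d} x_t} q\rangle$ term (producing $+\mathcal{L}_{{\color{red}\rm d} x_t}^T p$ by the definition of the adjoint/transpose Lie derivative on the relevant density space), giving the fourth equation of \eqref{SVP3}; (iii) the dependence of this wave term on $u$ through ${\color{red}\rm d} x_t$ contributes, after using \eqref{diamond-def-coupling} and the diamond definition, the extra source terms $-\,\delta\,{\color{red}\rm d}\mathcal{J}/\delta q \diamond q + p\diamond \delta\,{\color{red}\rm d}\mathcal{J}/\delta p$ — wait, more precisely the $u$-dependence gives $p \diamond q$ added into the momentum, and the $\mathcal{J}$-source terms on the right of the motion equation arise from eliminating ${\color{red}\rm d} q$ and ${\color{red}\rm d} p$ using the wave equations just derived. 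Assembling the coefficient of $\eta$ from all terms and invoking the fundamental lemma of the calculus of variations (arbitrariness of $\eta$) gives the first equation of \eqref{SVP3}.

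The step I expect to be the main obstacle — and the one requiring genuine care rather than routine bookkeeping — is the stochastic integration by parts in the $\delta q$ variation of the term $\langle p, {\color{red}\rm d} q\rangle$, and more generally the manipulation of the Stratonovich differentials throughout. In the deterministic case one has $\int \langle p, \partial_t \delta q\rangle\,dt = -\int\langle \partial_t p, \delta q\rangle\,dt$ by ordinary integration by parts; in the Stratonovich setting the analogous identity holds formally but must be justified via the Kunita--It\^o--Wentzell framework cited in the excerpt (\cite{BdLHLT2020}), together with relation \eqref{SVP2..2} which tells us how the combination ${\color{red}\rm d} q + \mathcal{L}_{{\color{red}\rm d} x_t} q$ behaves under pullback by the stochastic flow $\phi_t$. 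The cleanest route is probably to change variables to the Lagrangian (pulled-back) frame using \eqref{SVP2..2}, where the stochastic vector field is quotiented out and the variational calculus becomes essentially deterministic in the pulled-back quantities, perform the integration by parts there, and then push forward. One must also confirm that the endpoint terms from both the temporal integration by parts (at $t_1,t_2$) and the spatial integration by parts (from the Lie-derivative/diamond manipulations, using that the domain $\mathcal{D}$ has no boundary or that the relevant quantities satisfy suitable boundary conditions) genuinely vanish. The remaining bookkeeping — combining the $\mathcal{J}$-source terms, verifying the signs on $-\mathcal{L}_{{\color{red}\rm d} x_t}^T p$ versus $+\mathcal{L}_{{\color{red}\rm d} x_t} q$, and checking that the momentum is correctly augmented to $m = \delta\ell/\delta u + p\diamond q$ — is then mechanical, and the Kelvin circulation theorem and advected-quantity structure follow as in the purely SALT case.
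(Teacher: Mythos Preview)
Your proposal conflates two distinct variational frameworks. You simultaneously impose constrained Euler--Poincar\'e variations $\delta u = {\color{red}\rm d}\eta - \mathrm{ad}_u\eta$, $\delta a = -\mathcal{L}_\eta a$ \emph{and} treat $b$ as a Lagrange multiplier enforcing advection. These are alternatives, not complements: if $\delta u,\delta a$ are constrained, the $b$-term is redundant; if $b$ is a genuine multiplier, then $\delta u,\delta a$ must be free. As written, the proposal is internally inconsistent, and several of your later bookkeeping remarks (e.g.\ ``the momentum is correctly augmented to $m=\delta\ell/\delta u + p\diamond q$'') inherit this confusion.

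The paper takes the Clebsch route: \emph{all} variations $\delta u,\delta a,\delta b,\delta q,\delta p$ are free. The free $\delta u$ variation then yields the purely algebraic relation $\delta\ell/\delta u = b\diamond a + p\diamond q$, and the remaining free variations give the auxiliary equations for $a,b,q,p$ (this is where the stochastic temporal integration by parts $\langle b,{\color{red}\rm d}(\delta a)\rangle \to \langle -{\color{red}\rm d} b,\delta a\rangle$ and $\langle p,{\color{red}\rm d}(\delta q)\rangle \to \langle -{\color{red}\rm d} p,\delta q\rangle$ actually occurs, taken formally). The motion equation is \emph{not} obtained by collecting a coefficient of $\eta$; instead it is obtained by differentiating the algebraic relation, ${\color{red}\rm d}(\delta\ell/\delta u) = {\color{red}\rm d}(b\diamond a) + {\color{red}\rm d}(p\diamond q)$, and computing each summand via a separate lemma. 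Each lemma pairs against an arbitrary test vector field $w$, substitutes the already-derived auxiliary equations, and uses only the Stratonovich product rule together with the commutator identity $\mathcal{L}_v\mathcal{L}_w - \mathcal{L}_w\mathcal{L}_v = \mathcal{L}_{[v,w]}$ to show
\[
{\color{red}\rm d}(b\diamond a) + \mathcal{L}_{{\color{red}\rm d} x_t}(b\diamond a) = \frac{\delta\ell}{\delta a}\diamond a\,dt,
\qquad
{\color{red}\rm d}(p\diamond q) + \mathcal{L}_{{\color{red}\rm d} x_t}(p\diamond q) = -\frac{\delta\,{\color{red}\rm d}\mathcal{J}}{\delta q}\diamond q + p\diamond\frac{\delta\,{\color{red}\rm d}\mathcal{J}}{\delta p}.
\]
Summing these gives the motion equation directly, with no further stochastic integration by parts and no pullback-to-Lagrangian-frame argument. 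So the obstacle you single out as the crux is, in the paper's organisation, confined to the auxiliary equations and absent from the derivation of the momentum equation itself. A consistent constrained-variation approach (drop the $b$-term, keep $\delta a=-\mathcal{L}_\eta a$) would also succeed, but would indeed require the KIW-type justification you describe; the paper's Clebsch-plus-lemmas route buys a cleaner and more modular argument.
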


\begin{remark}[The diffusion part $\mathcal{K}(q,p) \circ dB_t$ of the  wave Hamiltonian $ {\color{red}\rm d}  \mathcal{J}(q,p)$ in \eqref{J(q,p)-def}]
For definiteness in applying theorem \ref{SALT-SNWP-Thm}, we shall take the diffusion part of the semimartingale wave Hamiltonian $\mathcal{K}(q,p) \circ dB_t$ in equation \eqref{J(q,p)-def} as a pairing of a vector field martingale  $\sigma(x)\circ dB_t$ with the wave momentum map $p\diamond q$, 
\begin{align}
\mathcal{K}(q,p) \circ dB_t = \Scp{p\diamond q}{\sigma(x)}_\mathfrak{X}\circ dB_t 
= \Scp{p}{-\mathcal{L}_{\sigma}q}_V\circ dB_t
= \Scp{-\mathcal{L}_{\sigma} p}{q}_V\circ dB_t
\,.
\label{SVP2-2}
\end{align}
The corresponding variational derivatives of $\mathcal{K}(q,p)=\Scp{p\diamond q}{\sigma(x)}_\mathfrak{X}$ are
\begin{align}
\delta \mathcal{K}(q,p) 
= \Scp{\delta p}{-\mathcal{L}_{\sigma}q}_V
+ \Scp{-\mathcal{L}_{\sigma}p}{\delta q}_V
\,.
\label{SVP2.3}
\end{align}
Thus, the diffusion part of the wave Hamiltonian $\mathcal{K}(q,p) \circ dB_t$ in the phase-space Lagrangian will induce an additional transport of wave properties by the vector field martingale  $\sigma(x)\circ dB_t$. 
\end{remark}

\begin{corollary}[Kelvin circulation theorem for SALT and SNWP dynamics]\label{SALT-SNWP-KelThm}
\begin{align}
{\color{red}\rm d} \oint_{c({\color{red}\rm d} x_t)}\frac{1}{D} \frac{\delta  \ell}{\delta  u} 
= \oint_{c({\color{red}\rm d} x_t)}  \frac{1}{D}  \left( \frac{\delta  \ell}{\delta  a}\diamond a  \,dt
- \frac{\delta\, {\color{red}\rm d}  \mathcal{J}(q,p)}{\delta q}\diamond q  
+ p \diamond\frac{\delta\, {\color{red}\rm d}  \mathcal{J}(q,p)}{\delta p}
\right) 
\,.
\label{SALT-SNWP-Kelvin}
\end{align}
\end{corollary}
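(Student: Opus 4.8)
The plan is to obtain \eqref{SALT-SNWP-Kelvin} by pushing the Euler--Poincar\'e motion equation from Theorem \ref{SALT-SNWP-Thm} through the stochastic continuity equation \eqref{stoch-mass-advect} and the Kunita--It\^o--Wentzell (KIW) transport formula for integrals over stochastically advected loops, exactly as in the proof of the SALT Kelvin theorem \eqref{SALT-Kelvin}, but carrying along the two extra wave-Hamiltonian terms. Write $m := \delta\ell/\delta u\in\mathfrak{X}^*$ for the total momentum density, so that the first equation in \eqref{SVP3} reads
\begin{align*}
\big({\color{red}\rm d} + \mathcal{L}_{{\color{red}\rm d} x_t}\big) m
= \frac{\delta\ell}{\delta a}\diamond a\,dt
- \frac{\delta\,{\color{red}\rm d}\mathcal{J}}{\delta q}\diamond q
+ p\diamond\frac{\delta\,{\color{red}\rm d}\mathcal{J}}{\delta p}
=: \mathcal{F}\,,
\end{align*}
where $\mathcal{F}\in\mathfrak{X}^*$ is a semimartingale-valued $1$-form density, with a $dt$-part coming from $\delta\ell/\delta a\diamond a$ and from $\mathcal{H}$, and a $\circ\,dB_t$-part coming from $\mathcal{K}$ (see \eqref{J(q,p)-def}).

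Second, I would trade the momentum density $m$ for the circulation $1$-form $m/D$. Since every relation here is written in Stratonovich form, the ordinary Leibniz rule holds for the operator $\big({\color{red}\rm d} + \mathcal{L}_{{\color{red}\rm d} x_t}\big)$, and \eqref{stoch-mass-advect} gives
\begin{align*}
\big({\color{red}\rm d} + \mathcal{L}_{{\color{red}\rm d} x_t}\big)\!\left(\frac{m}{D}\right)
= \frac{1}{D}\big({\color{red}\rm d} + \mathcal{L}_{{\color{red}\rm d} x_t}\big) m
- \frac{m}{D^2}\big({\color{red}\rm d} + \mathcal{L}_{{\color{red}\rm d} x_t}\big) D
= \frac{1}{D}\,\mathcal{F}\,,
\end{align*}
the second term vanishing by the continuity equation. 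This is the identical manipulation used for \eqref{SALT-Kelvin}; the only new feature is the presence of the wave terms inside $\mathcal{F}$.

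Third, I would apply the KIW theorem to the loop $c({\color{red}\rm d} x_t)$ frozen into the Stratonovich flow $\phi_t$ generated by ${\color{red}\rm d} x_t$: for a time-dependent semimartingale $1$-form $\theta_t$ one has ${\color{red}\rm d}\oint_{c({\color{red}\rm d} x_t)}\theta_t = \oint_{c({\color{red}\rm d} x_t)}\big({\color{red}\rm d} + \mathcal{L}_{{\color{red}\rm d} x_t}\big)\theta_t$, which is the closed-loop integral of \eqref{SVP2..2} and is justified in \cite{BdLHLT2020}. Taking $\theta_t = m/D$ and substituting the previous display yields \eqref{SALT-SNWP-Kelvin}. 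Any exact $1$-form concealed inside $\frac{1}{D}\frac{\delta\ell}{\delta a}\diamond a$ (for instance Bernoulli/pressure-type gradients, as arise when $D$ itself is among the advected quantities) integrates to zero around the closed loop, just as in the deterministic and SALT cases, so no further bookkeeping is needed.

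The step I expect to require the most care is the KIW transport formula: one must verify that the Stratonovich corrections generated in differentiating the pullback $\phi_t^*(m_t/D_t)$ are precisely those absorbed into $\mathcal{L}_{{\color{red}\rm d} x_t}$, so that no spurious drift is produced. Because the wave variables $(q,p)$ are themselves semimartingales --- ${\color{red}\rm d}\mathcal{J}$ contains $\circ\,dB_t$ through $\mathcal{K}$, so the diamond terms in $\mathcal{F}$ have nonzero martingale parts --- one needs the full semimartingale version of the KIW formula rather than its deterministic analogue. But that is exactly the tool already invoked for \eqref{SALT-Kelvin} (the remark following that corollary), so the present proof is a direct adaptation of it with the two SNWP forcing terms carried through unchanged.
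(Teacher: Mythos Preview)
Your proposal is correct and follows essentially the same route the paper takes (implicitly): combine the Euler--Poincar\'e motion equation of Theorem \ref{SALT-SNWP-Thm} with the stochastic continuity equation \eqref{stoch-mass-advect} to obtain $({\color{red}\rm d}+\mathcal{L}_{{\color{red}\rm d}x_t})(m/D)=\mathcal{F}/D$, then invoke the Kunita--It\^o--Wentzell transport identity for loop integrals, exactly as for the SALT Kelvin theorem \eqref{SALT-Kelvin}. The paper does not spell out a separate proof for this corollary, treating it as the evident analogue of \eqref{SALT-Kelvin} with the two wave-Hamiltonian diamond terms carried along; your write-up makes this explicit and correctly flags that the semimartingale KIW formula is needed because $\mathcal{F}$ now has a $\circ\,dB_t$ part through $\mathcal{K}$.
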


\begin{remark}[Creation of fluid circulation by both advection and wave interaction]
The right-hand side of Kelvin's theorem in equation \eqref{SALT-SNWP-Kelvin} raises the issue of whether fluid circulation can be created by the effects of both the advected fluid quantities on the right side of equation \eqref{SALT-SNWP-Kelvin} and also by the effects of the stochastic wave dynamics generated by the wave Hamiltonian ${\color{red}\rm d}  \mathcal{J}(q,p)$ defined now as
\begin{align}
{\color{red}\rm d}  \mathcal{J}(q,p) := 
\mathcal{H}(q,p) \,dt
+ \Scp{p\diamond q}{\sigma(x)}_\mathfrak{X} \circ dB_t\,.
\label{J(q,p)-def1}
\end{align} 
As we shall see, the non-acceleration result for GLM in corollary \ref{nonaccel-result} below precludes generation of fluid circulation by wave effects. 
\end{remark}

\begin{remark}[Compatibility of stochastic terms in the loop and in the integrand of Kelvin's theorem]
The wave noise $\circ \,dB_t$ in the wave Hamiltonian ${\color{red}\rm d}  \mathcal{J}(q,p)$ in equation \eqref{J(q,p)-def1} is assumed to be independent of the fluid transport noise $\circ \,dW_t$ in the  velocity vector field ${\color{red}\rm d} x_t(x)$ in \eqref{vel-vdt}. 
Hence, the stochasticity in the integrand of the right-hand side of equation \eqref{SALT-SNWP-Kelvin} will not interfere with the stochasticity in the loop velocity ${\color{red}\rm d} x_t$ defined in equation \eqref{vel-vdt}.
\end{remark}

%\todo[inline]{For more discussion of this matter see Street and Crisan 2020.}

\begin{proof}
The first step of the proof of Theorem \ref{SALT-SNWP-Thm} is to take the elementary variational derivatives of the action integral \eqref{SVP2.1}, to find
\begin{align}
\begin{split}
\delta u:\quad
\frac{\delta \ell}{\delta u} &- b\diamond a - p\diamond q = 0
\,,\qquad
\delta b:\quad
{\color{red}\rm d}a + \mathcal{L}_{{\color{red}\rm d}x_t} a   = 0
\,,\qquad
\delta a:\quad
\frac{\delta  \ell}{\delta  a}dt - {\color{red}\rm d}b 
+ \mathcal{L}_{{\color{red}\rm d}x_t}^Tb  = 0
\,,\\&
\delta p:\quad {\color{red}\rm d} q + \mathcal{L}_{{\color{red}\rm d} x_t} q - \frac{\delta\, {\color{red}\rm d}  \mathcal{J}(q,p)}{\delta p} = 0
\,,\qquad
\delta q:\quad
{\color{red}\rm d} p - \mathcal{L}_{{\color{red}\rm d} x_t}^T p + \frac{\delta\, {\color{red}\rm d}  \mathcal{J}(q,p)}{\delta q} = 0
\,.\end{split}
\label{var-eqns-thm}
\end{align}
Using these relations and the two lemmas below will lead to the required motion equation, 
\begin{align}
{\color{red}\rm d} \frac{\delta \ell}{\delta u} + {\rm ad}^*_{{\color{red}\rm d}x_t}\frac{\delta \ell}{\delta u}
- \frac{\delta \ell}{\delta a} \diamond a \,dt 
=
- \frac{\delta\, {\color{red}\rm d}  \mathcal{J}(q,p)}{\delta q}\diamond q  
+ p \diamond\frac{\delta\, {\color{red}\rm d}  \mathcal{J}(q,p)}{\delta p}
\,,\label{motion-eqns-thm}
\end{align}
whose left-hand side recovers the SALT equations, and whose right-hand side reveals the extension of SALT to include SNWP, along with the auxiliary equations for $a$, $q$ and $p$ in \eqref{var-eqns-thm}. 
\end{proof}
\begin{remark}[The total momentum is the sum of particle and wave components]
Suppose the kinetic energy density in the Lagrangian $\ell(u,a)$ in \eqref{SALT-SNWP-Kelvin} is proportional to the square of the transport velocity, $u$. Then, the momentum density obtained from the variational derivative result for $\delta \ell / \delta u$ above will comprise the \emph{sum} of the particle and wave momentum densities ($\delta \ell / \delta u = b\diamond a + p\diamond q$). This means the total momentum density comprises the sum of the particle momentum density ($\mu=b\diamond a$), whose canonical Poisson bracket spatially translates both the advected variables $(a)$ and their conjugate dual variables $(b)$ together, as well as the wave momentum ($\nu=p\diamond q$) whose canonical Poisson bracket spatially translates both the phase of the wave ($q$) and its canonically conjugate momentum, the wave action density ($p$). In other words, the sum of the momentum densities acts via the Poisson bracket to translate the canonically conjugate field variables for both degrees of freedom of the flow together. However, the decomposition of the dynamics into separate equations for the two momentum maps $\mu := b\diamond a $ and $\nu := p\diamond q $ in the proofs of the two lemmas \ref{Lemma-m-eqn} and \ref{Lemma-n-eqn} implies the following \emph{non-acceleration theorem}. 
\end{remark}

\begin{corollary}[Non-acceleration result -- \emph{ghost waves}]\label{nonaccel-result}\rm
The motion equation \eqref{motion-eqns-thm} for $\delta \ell / \delta u = b\diamond a + p\diamond q$ decomposes into the sum of  two separate equations. One is a standard Euler-Poincar\'e fluid equation for the currents and the other one for the wave degrees of freedom. These are:
\begin{align}
\begin{split}
\hbox{For}\quad
\mu := b\diamond a 
\quad\hbox{we have}\quad
{\color{red}\rm d} \mu + \mathcal{L}_{{\color{red}\rm d}x_t} \mu &=  \frac{\delta \ell}{\delta q} \diamond q \,dt
\\
\hbox{For}\quad
\nu := p\diamond q 
\quad\hbox{we have}\quad
{\color{red}\rm d} \nu  + \mathcal{L}_{{\color{red}{\rm d}}x_t} \nu &= - \frac{\delta\, {\color{red}\rm d}  \mathcal{J}(q,p)}{\delta q} \diamond q 
+ p \diamond \frac{\delta\, {\color{red}\rm d}  \mathcal{J}(q,p)}{\delta p}   
\,.\end{split}
\label{nonaccel-eqns}
\end{align}
\end{corollary}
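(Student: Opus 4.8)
The plan is to obtain the two displayed equations in \eqref{nonaccel-eqns} \emph{separately} from the variational equations \eqref{var-eqns-thm}, and then note that their sum reproduces the single motion equation \eqref{motion-eqns-thm}, since $\delta\ell/\delta u = b\diamond a + p\diamond q$ and the stochastic transport operator ${\color{red}\rm d} + \mathcal{L}_{{\color{red}\rm d}x_t}$ is linear. Everything rests on one Leibniz-type identity for the diamond map under this operator: for $\kappa\in V^*$ and $v\in V$ evolving as semimartingales,
\[
({\color{red}\rm d} + \mathcal{L}_{{\color{red}\rm d}x_t})(\kappa\diamond v)
= \big({\color{red}\rm d}\kappa - \mathcal{L}_{{\color{red}\rm d}x_t}^T\kappa\big)\diamond v
+ \kappa\diamond\big({\color{red}\rm d}v + \mathcal{L}_{{\color{red}\rm d}x_t}v\big)\,.
\]
I would prove this by pairing the left-hand side with an arbitrary time-independent test vector field $\zeta\in\mathfrak{X}$, unfolding the definition of $\diamond$ from Definition \ref{diamond-def}, using the adjoint relation that defines $\mathcal{L}^T$ on $V^*$, the operator commutator $[\mathcal{L}_{{\color{red}\rm d}x_t},\mathcal{L}_\zeta] = \mathcal{L}_{[{\color{red}\rm d}x_t,\zeta]}$, and the $\mathfrak{X}^*\!\times\!\mathfrak{X}$ duality $\Scp{\mathcal{L}_{{\color{red}\rm d}x_t}m}{\zeta}_\mathfrak{X} = -\Scp{m}{\mathcal{L}_{{\color{red}\rm d}x_t}\zeta}_\mathfrak{X}$. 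Because Stratonovich calculus obeys the ordinary product rule, the algebra is identical to the deterministic case; the only point needing the Kunita--It\^o--Wentzell machinery of \cite{BdLHLT2020} is the legitimacy of differentiating the pairing of two semimartingale-valued fields.

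With that identity in hand, Lemma \ref{Lemma-m-eqn} follows by setting $(\kappa,v)=(b,a)$ and substituting the $\delta b$ and $\delta a$ equations from \eqref{var-eqns-thm}, namely ${\color{red}\rm d}a + \mathcal{L}_{{\color{red}\rm d}x_t}a = 0$ and ${\color{red}\rm d}b - \mathcal{L}_{{\color{red}\rm d}x_t}^T b = (\delta\ell/\delta a)\,dt$, which collapse the identity to the closed Euler--Poincar\'e equation ${\color{red}\rm d}\mu + \mathcal{L}_{{\color{red}\rm d}x_t}\mu = (\delta\ell/\delta a)\diamond a\,dt$ for the current momentum map, with no reference to the wave variables. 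Symmetrically, Lemma \ref{Lemma-n-eqn} follows by setting $(\kappa,v)=(p,q)$ and substituting the $\delta p$ and $\delta q$ equations, ${\color{red}\rm d}q + \mathcal{L}_{{\color{red}\rm d}x_t}q = \delta\,{\color{red}\rm d}\mathcal{J}/\delta p$ and ${\color{red}\rm d}p - \mathcal{L}_{{\color{red}\rm d}x_t}^T p = -\,\delta\,{\color{red}\rm d}\mathcal{J}/\delta q$, which give ${\color{red}\rm d}\nu + \mathcal{L}_{{\color{red}\rm d}x_t}\nu = -(\delta\,{\color{red}\rm d}\mathcal{J}/\delta q)\diamond q + p\diamond(\delta\,{\color{red}\rm d}\mathcal{J}/\delta p)$ for the wave momentum map, with no reference to $(a,b)$.

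Adding the two lemmas and using linearity of ${\color{red}\rm d} + \mathcal{L}_{{\color{red}\rm d}x_t}$ together with $\delta\ell/\delta u = \mu + \nu$ recovers \eqref{motion-eqns-thm} exactly, which both confirms consistency and shows the total-momentum equation is nothing but the superposition of two autonomous subsystems. The non-acceleration conclusion is then immediate: the wave Hamiltonian ${\color{red}\rm d}\mathcal{J}$ enters only the $\nu$-equation, so the waves can never appear as a source term in the current equation, and in Corollary \ref{SALT-SNWP-KelThm} the wave contribution to the Kelvin loop lives entirely in the $\nu$ piece and decouples. I expect the only real obstacle to be the bookkeeping in the Leibniz identity --- fixing the transpose convention for $\mathcal{L}^T_{{\color{red}\rm d}x_t}$ consistently with the signs used in \eqref{var-eqns-thm}, and verifying that the Stratonovich contributions from the $\xi_i\circ dW_i$ part of ${\color{red}\rm d}x_t$ cancel in pairs exactly as in the deterministic computation; once that is pinned down the remainder is routine substitution.
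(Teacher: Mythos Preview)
Your proposal is correct and takes essentially the same approach as the paper: the paper establishes the two equations separately in Lemmas~\ref{Lemma-m-eqn} and~\ref{Lemma-n-eqn} by pairing ${\color{red}\rm d}\mu$ (resp.\ ${\color{red}\rm d}\nu$) with an arbitrary test vector field $w$, expanding via the product rule on the diamond, substituting the variational equations \eqref{var-eqns-thm}, and invoking the commutator identity $[\mathcal{L}_{{\color{red}\rm d}x_t},\mathcal{L}_w]=\mathcal{L}_{[{\color{red}\rm d}x_t,w]}$ to recognise $-\mathcal{L}_{{\color{red}\rm d}x_t}(b\diamond a)$ (resp.\ $-\mathcal{L}_{{\color{red}\rm d}x_t}(p\diamond q)$). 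Your only difference is cosmetic: you extract the shared computation once as the Leibniz-type identity $({\color{red}\rm d}+\mathcal{L}_{{\color{red}\rm d}x_t})(\kappa\diamond v)=({\color{red}\rm d}\kappa-\mathcal{L}^T_{{\color{red}\rm d}x_t}\kappa)\diamond v+\kappa\diamond({\color{red}\rm d}v+\mathcal{L}_{{\color{red}\rm d}x_t}v)$ and then specialise it twice, whereas the paper performs the identical algebra inline in each lemma.
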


Corollary \ref{nonaccel-result} is a non-acceleration result for the wave and current momenta, in the sense that the waves propagate in the local reference frame of the fluid flow and the presence of waves has no net effect on the mean-flow equations in this model. 
Note that equations \eqref{nonaccel-eqns} provide a non-acceleration theorem for any choice of wave Hamiltonian.
\begin{lemma}\rm\label{Lemma-m-eqn}
Together, the variational equations arising from varying $b$ and $a$ in the first line of  \eqref{var-eqns-thm} imply the following useful identity first proved in \cite{Holm2015}.
\begin{align}
\hbox{Upon defining}\quad
\mu := b\diamond a 
\quad\hbox{we have}\quad
{\color{red}\rm d} \mu - \frac{\delta \ell}{\delta a} \diamond a \,dt = - \mathcal{L}_{{\color{red}\rm d}x_t} \mu
\,.\label{m-eqn-lem}
\end{align}
\end{lemma}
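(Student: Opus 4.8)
The plan is to reproduce the pairing argument of \cite{Holm2015}, testing the claimed identity against an arbitrary \emph{deterministic, time-independent} vector field $\eta\in\mathfrak{X}$. By the definition \eqref{diamond-def} of the diamond operation, $\Scp{\mu}{\eta}_\mathfrak{X}=\Scp{b\diamond a}{\eta}_\mathfrak{X}=\Scp{b}{-\mathcal{L}_\eta a}_V$; since $\eta$ carries no time dependence, ${\color{red}\rm d}\Scp{\mu}{\eta}_\mathfrak{X}=\Scp{{\color{red}\rm d}\mu}{\eta}_\mathfrak{X}$, so it suffices to differentiate $\Scp{b}{-\mathcal{L}_\eta a}_V$ in time and re-identify the result.

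First I would apply the Leibniz (product) rule for the Stratonovich differential --- valid with no It\^o correction precisely because everything has been written in Stratonovich form --- to get ${\color{red}\rm d}\Scp{b}{-\mathcal{L}_\eta a}_V=\Scp{{\color{red}\rm d}b}{-\mathcal{L}_\eta a}_V+\Scp{b}{-\mathcal{L}_\eta\,{\color{red}\rm d}a}_V$, using also that the spatial operator $\mathcal{L}_\eta$ commutes with the temporal differential ${\color{red}\rm d}$. Then I would substitute the two variational relations from \eqref{var-eqns-thm}: the $\delta b$ equation ${\color{red}\rm d}a=-\mathcal{L}_{{\color{red}\rm d}x_t}a$ and the $\delta a$ equation ${\color{red}\rm d}b=\frac{\delta\ell}{\delta a}\,dt+\mathcal{L}_{{\color{red}\rm d}x_t}^{T}b$. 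The term containing $\frac{\delta\ell}{\delta a}$ collapses, via \eqref{diamond-def} once more, to $\Scp{\frac{\delta\ell}{\delta a}\diamond a}{\eta}_\mathfrak{X}\,dt$, which is the source term we want; the two surviving terms are $\Scp{\mathcal{L}_{{\color{red}\rm d}x_t}^{T}b}{-\mathcal{L}_\eta a}_V=-\Scp{b}{\mathcal{L}_{{\color{red}\rm d}x_t}\mathcal{L}_\eta a}_V$ (using the transpose relation $\Scp{\mathcal{L}_\xi^{T}b}{c}_V=\Scp{b}{\mathcal{L}_\xi c}_V$) and $\Scp{b}{\mathcal{L}_\eta\mathcal{L}_{{\color{red}\rm d}x_t}a}_V$. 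These combine into $\Scp{b}{[\mathcal{L}_\eta,\mathcal{L}_{{\color{red}\rm d}x_t}]a}_V=\Scp{b}{\mathcal{L}_{[\eta,{\color{red}\rm d}x_t]}a}_V$, where $[\,\cdot\,,\,\cdot\,]$ is the Jacobi--Lie bracket and I use the Lie-derivative identity $[\mathcal{L}_X,\mathcal{L}_Y]=\mathcal{L}_{[X,Y]}$. By \eqref{diamond-def} this equals $-\Scp{\mu}{[\eta,{\color{red}\rm d}x_t]}_\mathfrak{X}$, and then the integration-by-parts identity for a $1$-form density, $\Scp{\mathcal{L}_{{\color{red}\rm d}x_t}\mu}{\eta}_\mathfrak{X}=-\Scp{\mu}{\mathcal{L}_{{\color{red}\rm d}x_t}\eta}_\mathfrak{X}=\Scp{\mu}{[\eta,{\color{red}\rm d}x_t]}_\mathfrak{X}$, identifies it with $-\Scp{\mathcal{L}_{{\color{red}\rm d}x_t}\mu}{\eta}_\mathfrak{X}$. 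Assembling the pieces gives $\Scp{{\color{red}\rm d}\mu}{\eta}_\mathfrak{X}=\Scp{\frac{\delta\ell}{\delta a}\diamond a}{\eta}_\mathfrak{X}\,dt-\Scp{\mathcal{L}_{{\color{red}\rm d}x_t}\mu}{\eta}_\mathfrak{X}$ for every such $\eta$; non-degeneracy of the $L^2$ pairing then yields ${\color{red}\rm d}\mu-\frac{\delta\ell}{\delta a}\diamond a\,dt=-\mathcal{L}_{{\color{red}\rm d}x_t}\mu$, which is the claimed identity.

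I expect the main difficulty to be bookkeeping rather than conceptual: getting every sign right along the chain of definitions (the convention $\Scp{b\diamond a}{\xi}_\mathfrak{X}=\Scp{b}{-\mathcal{L}_\xi a}_V$, the transpose relation, the integration-by-parts identity for $1$-form densities, and $[\eta,{\color{red}\rm d}x_t]=-[{\color{red}\rm d}x_t,\eta]$) is where an error is most likely to creep in, so it is worth collecting these conventions explicitly before starting. On the analytic side, the one thing that genuinely needs care is the legitimacy of differentiating the semimartingale-valued pairing $\Scp{b}{-\mathcal{L}_\eta a}_V$ termwise by the Stratonovich product rule; this is exactly the stochastic-calculus input used elsewhere in the paper (cf. the Kunita--It\^o--Wentzell discussion around \eqref{SVP2..2}), so I would invoke \cite{Holm2015,BdLHLT2020} for it rather than re-prove it. A small but essential point to state up front is that $\eta$ must be taken time-independent and deterministic, so that ${\color{red}\rm d}$ passes through the pairing with $\eta$.
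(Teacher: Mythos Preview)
Your proposal is correct and follows essentially the same route as the paper's own proof: pair against an arbitrary test vector field, apply the Stratonovich product rule to $\mu=b\diamond a$, substitute the $\delta a$ and $\delta b$ relations from \eqref{var-eqns-thm}, collapse the two Lie-derivative terms into a commutator $[\mathcal{L}_\eta,\mathcal{L}_{{\color{red}\rm d}x_t}]=\mathcal{L}_{[\eta,{\color{red}\rm d}x_t]}$, and identify the result with $-\mathcal{L}_{{\color{red}\rm d}x_t}\mu$ via the ${\rm ad}^*$/Lie-derivative coincidence on $1$-form densities. The paper's presentation differs only cosmetically (it expands ${\color{red}\rm d}\mu={\color{red}\rm d}b\diamond a+b\diamond{\color{red}\rm d}a$ before pairing, and phrases the last step through ${\rm ad}^*$ rather than integration by parts), and your explicit remarks about $\eta$ being time-independent and about the Stratonovich product rule are helpful clarifications the paper leaves implicit.
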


\begin{proof}
For an arbitrary vector field $w\in \mathfrak{X}(M)$, one computes  the following pairing.
\begin{align}
\begin{split}
\left\langle 
{\color{red}\rm d}\mu - \frac{\delta \ell}{\delta a} \diamond a \,dt  \,,\, w 
\right\rangle_{\mathfrak{X}}
&=   
\left\langle 
 {\color{red}\rm d}b\diamond a + b\diamond {\color{red}\rm d}a - \frac{\delta \ell}{\delta a} \diamond a\,dt  \,,\, w 
\right\rangle_{\mathfrak{X}}
\\
\hbox{By equation \eqref{var-eqns-thm} } &=   
\left\langle 
(\mathcal{L}_{{\color{red}{\rm d}}x_t}^Tb) \diamond a 
- b\diamond \mathcal{L}_ a\,,\, w 
\right\rangle_{\mathfrak{X}}
\\&=   
\left\langle 
b\,,\, (-\mathcal{L}_{{\color{red}{\rm d}}x_t} \mathcal{L}_{w} + \mathcal{L}_{w} \mathcal{L}_{{\color{red}{\rm d}}x_t} )a\,
\right\rangle_{V}
\\&=   
\left\langle 
b\,,\, ({\rm ad}_{{\color{red}{\rm d}}x_t}{w})\,a\,
\right\rangle_{V}
=
-\left\langle 
b\diamond a\,,\, {\rm ad}_{{\color{red}{\rm d}}x_t}{w}\,
\right\rangle_{\mathfrak{X}}
\\&=
-\left\langle 
 {\rm ad}^*_{{\color{red}{\rm d}}x_t}(b\diamond a)\,,\,{w}\,
\right\rangle_{\mathfrak{X}}
=
-\,\Big\langle 
 \mathcal{L}_{{\color{red}{\rm d}}x_t}\mu\,,\,{w}\,
\Big\rangle_{\mathfrak{X}}\,.
\end{split}
\label{calc-lem1}
\end{align}
Since $w\in \mathfrak{X}$ was arbitrary, the last line completes the proof of the Lemma. In the last step we have also used the coincidence that coadjoint action ${\rm ad}^*_v \mu$ is identical to Lie-derivative action $\mathcal{L}_v \mu$ when a vector field $v\in \mathfrak{X}$ acts on a 1-form density $\mu\in \mathfrak{X}^*$, where one denotes $\mathfrak{X}^*$ as the dual space of the vector fields $\mathfrak{X}$ with respect to the $L^2$ pairing defined in equation \eqref{diamond-def}.
\end{proof}

\begin{lemma}\rm\label{Lemma-n-eqn}
Likewise, the variational equations arising from varying $p$ and $q$ in the second line of  \eqref{var-eqns-thm} satisfy a similar useful identity.
\begin{align}
\hbox{Upon defining}\quad
\nu := p\diamond q 
\quad\hbox{we have}\quad
{\color{red}\rm d} \nu  =  -\mathcal{L}_{{\color{red}{\rm d}}x_t} \nu - \frac{\delta\, {\color{red}\rm d}  \mathcal{J}(q,p)}{\delta q} \diamond q 
+ p \diamond \frac{\delta\, {\color{red}\rm d}  \mathcal{J}(q,p)}{\delta p}   
\,.\label{m-eqn-lem}
\end{align}

\end{lemma}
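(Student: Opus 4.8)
The plan is to prove Lemma~\ref{Lemma-n-eqn} by the same pairing device already used for Lemma~\ref{Lemma-m-eqn}: test the claimed identity against an arbitrary, non-stochastic vector field $w\in\mathfrak{X}(M)$ and check that both sides yield the same pairing with $w$. First I would use the Leibniz rule for the bilinear diamond map to write ${\color{red}\rm d}\nu = {\color{red}\rm d}(p\diamond q) = ({\color{red}\rm d}p)\diamond q + p\diamond({\color{red}\rm d}q)$; since ${\color{red}\rm d}x_t$ in \eqref{vel-vdt} and ${\color{red}\rm d}\mathcal{J}(q,p)$ in \eqref{J(q,p)-def} are written in Stratonovich form, this is the ordinary product rule with no quadratic-covariation correction, as licensed by the stochastic chain/product rules recorded in \eqref{SVP2..2} and \cite{BdLHLT2020}. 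Then I would substitute the two wave variational equations from the second line of \eqref{var-eqns-thm}, namely $\,{\color{red}\rm d} q = -\,\mathcal{L}_{{\color{red}\rm d} x_t} q + \frac{\delta\, {\color{red}\rm d}  \mathcal{J}(q,p)}{\delta p}$ and $\,{\color{red}\rm d} p = \mathcal{L}_{{\color{red}\rm d} x_t}^{T} p - \frac{\delta\, {\color{red}\rm d}  \mathcal{J}(q,p)}{\delta q}$, into this expansion.

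After the substitution, ${\color{red}\rm d}\nu$ separates into a transport part $(\mathcal{L}_{{\color{red}\rm d}x_t}^{T} p)\diamond q - p\diamond(\mathcal{L}_{{\color{red}\rm d}x_t} q)$ and a wave-Hamiltonian part $-\,\frac{\delta\, {\color{red}\rm d}  \mathcal{J}(q,p)}{\delta q}\diamond q + p\diamond\frac{\delta\, {\color{red}\rm d}  \mathcal{J}(q,p)}{\delta p}$; the latter is already exactly the pair of diamond terms appearing on the right of the Lemma, so nothing further is needed for it. For the transport part I would pair with $w$ and move everything onto the $V$-pairing using the definition of $\diamond$ in \eqref{diamond-def} together with $\langle \mathcal{L}_{{\color{red}\rm d}x_t}^{T} p\,,\,\cdot\,\rangle_V = \langle p\,,\,\mathcal{L}_{{\color{red}\rm d}x_t}\,\cdot\,\rangle_V$. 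This produces $\langle\, p\,,\,(-\,\mathcal{L}_{{\color{red}\rm d}x_t}\mathcal{L}_w + \mathcal{L}_w\mathcal{L}_{{\color{red}\rm d}x_t})q\,\rangle_V = \langle\, p\,,\,({\rm ad}_{{\color{red}\rm d}x_t}w)\,q\,\rangle_V$, which is precisely the commutator-of-Lie-derivatives step carried out in \eqref{calc-lem1}. Undoing the diamond once more and using the identification ${\rm ad}^{*}_{v}(p\diamond q) = \mathcal{L}_{v}(p\diamond q)$ for $v\in\mathfrak{X}$ acting on the $1$-form density $\nu$, this transport contribution equals $-\,\langle\, \mathcal{L}_{{\color{red}\rm d}x_t}\nu\,,\,w\,\rangle_{\mathfrak{X}}$. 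Since $w$ was arbitrary, collecting the transport and wave-Hamiltonian parts yields exactly the asserted identity.

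The step I expect to require actual care, rather than routine bookkeeping, is the Leibniz expansion of the diamond pairing of the two semimartingales $p$ and $q$: one must confirm that no surviving mutual quadratic-covariation term appears, i.e.\ that every such correction has already been folded into the Stratonovich forms of ${\color{red}\rm d}x_t$ and ${\color{red}\rm d}\mathcal{J}$. This is exactly what the Stratonovich (Kunita--It\^o--Wentzell) calculus invoked in \eqref{SVP2..2} and \cite{BdLHLT2020} is designed to guarantee, after which the computation collapses verbatim onto the bilinear-pairing manipulations of Lemma~\ref{Lemma-m-eqn}, the two extra terms $\frac{\delta\, {\color{red}\rm d}  \mathcal{J}(q,p)}{\delta q}$ and $\frac{\delta\, {\color{red}\rm d}  \mathcal{J}(q,p)}{\delta p}$ simply riding through untouched. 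A minor but worthwhile point to check along the way is that the test field $w$ is deterministic, so it commutes past the stochastic differential ${\color{red}\rm d}$ and the $\mathbb{R}$-bilinearity of $(\xi,a)\mapsto\mathcal{L}_\xi a$ legitimately gives $\mathcal{L}_w\mathcal{L}_{{\color{red}\rm d}x_t} - \mathcal{L}_{{\color{red}\rm d}x_t}\mathcal{L}_w = \mathcal{L}_{{\rm ad}_{{\color{red}\rm d}x_t}w}$ even though ${\color{red}\rm d}x_t$ carries the Brownian part.
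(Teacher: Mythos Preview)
Your proposal is correct and follows essentially the same route as the paper: pair ${\color{red}\rm d}\nu$ with an arbitrary test field $w$, expand via the Leibniz rule for $\diamond$ (in Stratonovich form), substitute the wave equations from \eqref{var-eqns-thm}, and reduce the transport piece to $-\langle\mathcal{L}_{{\color{red}\rm d}x_t}\nu,w\rangle_{\mathfrak{X}}$ through the commutator $[\mathcal{L}_{{\color{red}\rm d}x_t},\mathcal{L}_w]=\mathcal{L}_{[{\color{red}\rm d}x_t,w]}$ and the ${\rm ad}^*=\mathcal{L}$ identification on $1$-form densities, exactly as in \eqref{calc-lem2}. Your additional remarks on the Stratonovich product rule and the deterministic nature of $w$ are welcome clarifications that the paper leaves implicit.
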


\begin{proof}
For an arbitrary vector field $w\in \mathfrak{X}(M)$, one computes the following pairing.
\begin{align}
\begin{split}
\scp{{\color{red}\rm d}\nu }{ w }_{\mathfrak{X}}
&=   
\scp{ {\color{red}\rm d}p\diamond q + p\diamond {\color{red}\rm d}q }{w }_{\mathfrak{X}}
\\&= 
\scp{{\color{red}\rm d}p} {-\mathcal{L}_w q} + \scp{p} {-\mathcal{L}_w {\color{red}\rm d}q }
\\&= 
\Scp{\mathcal{L}_{{\color{red}\rm d} x_t}^T p - \frac{\delta\, {\color{red}\rm d}  \mathcal{J}(q,p)}{\delta q}} {-\mathcal{L}_w q} + \Scp{p} {-\mathcal{L}_w \left(- \mathcal{L}_{{\color{red}\rm d} x_t} q + \frac{\delta\, {\color{red}\rm d}  \mathcal{J}(q,p)}{\delta p}\right) }
\\&= 
\Scp{\mathcal{L}_{{\color{red}\rm d} x_t}^T p}{-\mathcal{L}_w q}
+ \Scp{p} {\mathcal{L}_w \mathcal{L}_{{\color{red}\rm d} x_t} q ) }
+ \Scp{ \frac{\delta\, {\color{red}\rm d}  \mathcal{J}(q,p)}{\delta q}} {\mathcal{L}_w q} 
+
\Scp{p} {- \mathcal{L}_w \frac{\delta\, {\color{red}\rm d}  \mathcal{J}(q,p)}{\delta p} }
\\&= 
\Scp{p}{-\mathcal{L}_{{\color{red}\rm d} x_t}\mathcal{L}_w q}
+ \Scp{p} {\mathcal{L}_w \mathcal{L}_{{\color{red}\rm d} x_t} q ) }
- \Scp{ \frac{\delta\, {\color{red}\rm d}  \mathcal{J}(q,p)}{\delta q}\diamond q} {w}_{\mathfrak{X}}
+ \Scp{p \diamond\frac{\delta\, {\color{red}\rm d}  \mathcal{J}(q,p)}{\delta p} } {w}_{\mathfrak{X}}
\\&= 
 \Scp{p} { - \mathcal{L}_{[{\color{red}\rm d} x_t,w]} q ) }
+ \Scp{ - \frac{\delta\, {\color{red}\rm d}  \mathcal{J}(q,p)}{\delta q}\diamond q  
+ p \diamond\frac{\delta\, {\color{red}\rm d}  \mathcal{J}(q,p)}{\delta p} } {w}_{\mathfrak{X}}
\\&= 
 \Scp{p \diamond q} { -{\rm ad}_{{\color{red}\rm d} x_t}w  }
+ \Scp{ - \frac{\delta\, {\color{red}\rm d}  \mathcal{J}(q,p)}{\delta q}\diamond q  
+ p \diamond\frac{\delta\, {\color{red}\rm d}  \mathcal{J}(q,p)}{\delta p} } {w}_{\mathfrak{X}}
\\
\scp{{\color{red}\rm d}\nu }{ w }_{\mathfrak{X}}
&=  
 \Scp{-{\rm ad}^*_{{\color{red}\rm d} x_t}(p \diamond q)
- \frac{\delta\, {\color{red}\rm d}  \mathcal{J}(q,p)}{\delta q}\diamond q  
+ p \diamond\frac{\delta\, {\color{red}\rm d}  \mathcal{J}(q,p)}{\delta p} } {w}_{\mathfrak{X}}
\,.\end{split}
\label{calc-lem2}
\end{align}

\end{proof}

\begin{remark}[Stochastic nonlinear wave propagation ignoring fluid flow] \label{GLM-SNWP}
Let us focus on the particular choice in \eqref{SVP2.2} of the stochastic component of the wave Hamiltonian $\mathcal{K}(q,p)= \scp{p\diamond q}{\sigma(x)}_\mathfrak{X}$ in $ {\color{red}\rm d}  \mathcal{J}(q,p)$ as in equation \eqref{J(q,p)-def1}. In the absence of fluid motion, the corresponding stochastic nonlinear wave dynamics in the second line of \eqref{var-eqns-thm} are obtained from the variational derivatives in \eqref{SVP2.3} as
\begin{align}
\begin{split}
{\color{red}\rm d} q &= \frac{\delta\, {\color{red}\rm d}  \mathcal{J}(q,p)}{\delta p}  
=  \frac{\delta\, \mathcal{H}(q,p)}{\delta p} dt - \mathcal{L}_{\sigma}q \circ dB_t  
\,,
\\
{\color{red}\rm d} p &= -\frac{\delta\, {\color{red}\rm d}  \mathcal{J}(q,p)}{\delta q}  
= - \frac{\delta\, \mathcal{H}(q,p)}{\delta q}dt + \mathcal{L}_{\sigma}p \circ dB_t
\,.
\end{split}
\label{SNWP1-wave}
\end{align}
We conclude that the role of the Lie transport operators in the last line of equation \eqref{var-eqns-thm} is simply to put the wave propagation into the frame of the stochastic fluid motion. That is, the wave propagation is passive. 

Thus, as the waves propagate in the frame of the fluid flow, they cannot transfer momentum to the fluid flow, nor can they  generate fluid circulation in Kelvin's theorem. The result is stochastic wave-current non-acceleration. When the fluid flow is added back into the wave dynamics, equations \eqref{SNWP1-wave} with the choice \eqref{SVP2.2} for the semimartingale part of the wave Hamiltonian in equation \eqref{J(q,p)-def} become 
\begin{align}
\begin{split}
{\color{red}\rm d} q + \mathcal{L}_{{\color{red}\rm d} x_t} q+ \mathcal{L}_{\sigma}q \circ dB_t 
& =  \frac{\delta\, \mathcal{H}(q,p)}{\delta p} dt 
\,,
\\
{\color{red}\rm d} p -  \mathcal{L}_{{\color{red}\rm d} x_t}^Tp - \mathcal{L}_{\sigma}^Tp \circ dB_t &= - \frac{\delta\, \mathcal{H}(q,p)}{\delta q}dt 
\,.
\end{split}
\label{SNWP1-wave+fluid}
\end{align}
in which we see that the wave properties are transported by both wave and fluid vector fields in SWCI. These relations imply the following corollary, in which the contributions of the choice of wave Hamiltonian in \eqref{SVP2.3} can be seen explicitly.

\begin{corollary}[Kelvin circulation theorem for SALT and SNWP dynamics]\label{SALT-SNWP-KelThm-x}
\begin{align}
\begin{split}
{\color{red}\rm d} \oint_{c({\color{red}\rm d} x_t)}\frac{1}{D} \frac{\delta  \ell}{\delta  u} 
= &\oint_{c({\color{red}\rm d} x_t)}  \frac{1}{D}  \left( \frac{\delta  \ell}{\delta  a}\diamond a  
- \frac{\delta\, \mathcal{H}(q,p)}{\delta q}\diamond q  
+ p \diamond\frac{\delta\, \mathcal{H}(q,p)}{\delta p}
\right) \,dt
\\
&\quad + \oint_{c({\color{red}\rm d} x_t)}  \frac{1}{D}  \left( 
- \big(\mathcal{L}_{\sigma} p\big) \diamond q  
+ p \diamond \big(\mathcal{L}_{\sigma} q\big)
\right)  \circ dB_t 
\,.
\end{split}
\label{SALT-SNWP-Kelvin-x}
\end{align}
where the material loop $c({\color{red}\rm d} x_t)$ follows the stochastic Lagrangian fluid path generated by the vector field ${\color{red}\rm d} x_t$ given in equation \eqref{vel-vdt} for a stochastic term $dW_t$, which is not correlated with $dB_t$ above.
\end{corollary}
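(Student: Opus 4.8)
The plan is to derive \eqref{SALT-SNWP-Kelvin-x} by specialising the general Kelvin circulation theorem of Corollary~\ref{SALT-SNWP-KelThm} to the concrete semimartingale wave Hamiltonian ${\color{red}\rm d}\mathcal{J}(q,p)=\mathcal{H}(q,p)\,dt+\mathcal{K}(q,p)\circ dB_t$ with diffusion part $\mathcal{K}(q,p)=\scp{p\diamond q}{\sigma(x)}_\mathfrak{X}$ from \eqref{SVP2-2}. First I would write out the right-hand side of \eqref{SALT-SNWP-Kelvin}, namely $\oint_{c({\color{red}\rm d}x_t)}\frac1D\big(\frac{\delta\ell}{\delta a}\diamond a\,dt-\frac{\delta\,{\color{red}\rm d}\mathcal{J}}{\delta q}\diamond q+p\diamond\frac{\delta\,{\color{red}\rm d}\mathcal{J}}{\delta p}\big)$, and split the variational derivatives of ${\color{red}\rm d}\mathcal{J}$ into their bounded-variation and martingale parts, $\frac{\delta\,{\color{red}\rm d}\mathcal{J}}{\delta q}=\frac{\delta\mathcal{H}}{\delta q}\,dt+\frac{\delta\mathcal{K}}{\delta q}\circ dB_t$ and likewise for $\frac{\delta\,{\color{red}\rm d}\mathcal{J}}{\delta p}$, which is legitimate since $dt$ and $\circ dB_t$ are linearly independent differentials.

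The key step is then to insert the explicit variational derivatives of $\mathcal{K}$ recorded in \eqref{SVP2.3} (equivalently, those read off from the boosted wave equations \eqref{SNWP1-wave+fluid}), namely $\delta\mathcal{K}/\delta p=-\mathcal{L}_\sigma q$ and $\delta\mathcal{K}/\delta q=-\mathcal{L}_\sigma p$, into the two diamond terms and collect. The $dt$-contributions assemble into $\oint_{c({\color{red}\rm d}x_t)}\frac1D\big(\frac{\delta\ell}{\delta a}\diamond a-\frac{\delta\mathcal{H}}{\delta q}\diamond q+p\diamond\frac{\delta\mathcal{H}}{\delta p}\big)dt$, and the $\circ dB_t$-contributions assemble into $\oint_{c({\color{red}\rm d}x_t)}\frac1D\big(-(\mathcal{L}_\sigma p)\diamond q+p\diamond(\mathcal{L}_\sigma q)\big)\circ dB_t$, which are precisely the two lines of \eqref{SALT-SNWP-Kelvin-x}. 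As an independent check one can bypass the general theorem entirely: since $\delta\ell/\delta u=b\diamond a+p\diamond q$ splits the total momentum into the particle and wave momentum maps $\mu,\nu$, and the wave piece $\nu=p\diamond q$ satisfies the second equation of the non-acceleration decomposition \eqref{nonaccel-eqns}, whose right-hand side splits into an $\mathcal{H}$-driven $dt$-part and an $\mathcal{L}_\sigma$-driven $\circ dB_t$-part upon inserting ${\color{red}\rm d}\mathcal{J}$ and the above variational derivatives, applying the transport-to-circulation identity $\oint_{c({\color{red}\rm d}x_t)}\frac1D({\color{red}\rm d}+\mathcal{L}_{{\color{red}\rm d}x_t})(\,\cdot\,)={\color{red}\rm d}\oint_{c({\color{red}\rm d}x_t)}\frac1D(\,\cdot\,)$ --- valid because $D$ obeys the stochastic continuity equation \eqref{stoch-mass-advect} --- to each of $\mu$ and $\nu$ reproduces \eqref{SALT-SNWP-Kelvin-x} term by term.

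The one point that genuinely needs care is the stochastic-calculus bookkeeping behind the splitting in the first step. Since ${\color{red}\rm d}\mathcal{J}$ is a semimartingale carrying the wave noise $\circ dB_t$, while the circulation loop $c({\color{red}\rm d}x_t)$ is transported by the vector field ${\color{red}\rm d}x_t$ of \eqref{vel-vdt} carrying the fluid noise $\circ dW_t$, one must verify that distributing the diamond operation over the $dt$- and $\circ dB_t$-parts of ${\color{red}\rm d}\mathcal{J}$ produces no Stratonovich cross-correction between the loop motion and the integrand. This is exactly where the standing independence hypothesis $[W,B]=0$ (the compatibility remark following \eqref{J(q,p)-def1}) is used: with the two Brownian motions independent, the Kunita--It\^o--Wentzell / Stratonovich chain rule underlying Corollary~\ref{SALT-SNWP-KelThm} applies with no interaction term, the loop noise and the integrand noise stay decoupled, and the right-hand side of \eqref{SALT-SNWP-Kelvin-x} retains the displayed clean additive split into a $dt$-integral and a $\circ dB_t$-integral. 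Everything else --- the diamond algebra, the sign tracking from \eqref{SVP2.3}, and the passage from $({\color{red}\rm d}+\mathcal{L}_{{\color{red}\rm d}x_t})$ acting on a $1$-form density to ${\color{red}\rm d}$ of its loop integral --- is routine and already established earlier in the excerpt.
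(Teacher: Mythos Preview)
Your proposal is correct and follows essentially the same route as the paper. The paper does not give an explicit proof of this corollary; it simply states that ``these relations imply the following corollary,'' where ``these relations'' are the boosted wave equations \eqref{SNWP1-wave+fluid} obtained by inserting the specific diffusion Hamiltonian $\mathcal{K}(q,p)=\scp{p\diamond q}{\sigma}_\mathfrak{X}$ into the general scheme. Your plan---specialise Corollary~\ref{SALT-SNWP-KelThm} by substituting ${\color{red}\rm d}\mathcal{J}=\mathcal{H}\,dt+\mathcal{K}\circ dB_t$, read off $\delta\mathcal{K}/\delta p$ and $\delta\mathcal{K}/\delta q$ from \eqref{SVP2.3}, and split the right-hand side into its $dt$ and $\circ dB_t$ parts---is exactly that implication written out, and your alternative check via the non-acceleration decomposition \eqref{nonaccel-eqns} is a nice redundancy. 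Your remark that the independence hypothesis $[W,B]=0$ is what licenses the clean additive split is precisely the point the paper flags in the compatibility remark after \eqref{J(q,p)-def1}.

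One caution on the ``routine'' sign tracking you mention: if you substitute $\delta\mathcal{K}/\delta q=-\mathcal{L}_\sigma p$ and $\delta\mathcal{K}/\delta p=-\mathcal{L}_\sigma q$ from \eqref{SVP2.3} directly into $-\tfrac{\delta\mathcal{K}}{\delta q}\diamond q+p\diamond\tfrac{\delta\mathcal{K}}{\delta p}$, you obtain $(\mathcal{L}_\sigma p)\diamond q-p\diamond(\mathcal{L}_\sigma q)$, which carries the opposite overall sign from the $\circ dB_t$ line displayed in \eqref{SALT-SNWP-Kelvin-x}. This does not affect the method, but when you write up the computation you should either reconcile the sign convention for $\mathcal{L}_\sigma$ acting on the dual variable $p$ (the paper's \eqref{SVP2-2} silently identifies $\mathcal{L}_\sigma$ with its $L^2$-adjoint there) or note the discrepancy explicitly.
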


By Corollary \ref{SALT-SNWP-KelThm-x} the Kelvin circulation theorem \eqref{SALT-SNWP-Kelvin-x} separates into two independent Kelvin circulation theorems for the separate wave and current parts.

\begin{corollary}[Separate Kelvin circulation theorems for SALT and SNWP dynamics]\label{SALT-SNWP-KelThm-xy}\rm
 The Kelvin circulation theorem \eqref{SALT-SNWP-Kelvin-x} for SALT and SNWP dynamics splits into the sum of  two separate equations circulation theorems for interpenetrating fluids with the same circulation loop. The summands are:
 \begin{align}
\begin{split}
{\color{red}\rm d} \oint_{c({\color{red}\rm d} x_t)}\frac{\mu}{D} 
= &\oint_{c({\color{red}\rm d} x_t)}  \frac{1}{D}  \left( \frac{\delta  \ell}{\delta  a}\diamond a  
\right) \,dt
\\
\quad\hbox{and}\quad
\\
{\color{red}\rm d} \oint_{c({\color{red}\rm d} x_t)}\frac{\nu}{D} 
= &\oint_{c({\color{red}\rm d} x_t)}  \frac{1}{D}  \left(
- \frac{\delta\, \mathcal{H}(q,p)}{\delta q}\diamond q  
+ p \diamond\frac{\delta\, \mathcal{H}(q,p)}{\delta p}
\right) \,dt
\\
&\quad + \oint_{c({\color{red}\rm d} x_t)}  \frac{1}{D}  \left( 
- \big(\mathcal{L}_{\sigma} p\big) \diamond q  
+ p \diamond \big(\mathcal{L}_{\sigma} q\big)
\right)  \circ dB_t 
\,.
\end{split}
\label{SALT-SNWP-Kelvin-split}
\end{align}

\end{corollary}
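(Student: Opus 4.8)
The plan is to obtain the split \eqref{SALT-SNWP-Kelvin-split} directly from the decomposed momentum-map equations \eqref{nonaccel-eqns} of Corollary \ref{nonaccel-result}, and then to recover the combined Kelvin theorem \eqref{SALT-SNWP-Kelvin-x} simply by adding the two pieces, using $\delta\ell/\delta u = b\diamond a + p\diamond q = \mu+\nu$ together with linearity of the loop integral. So I would treat
\[
{\color{red}\rm d}\mu + \mathcal{L}_{{\color{red}\rm d}x_t}\mu = \frac{\delta\ell}{\delta a}\diamond a\,dt
\qquad\hbox{and}\qquad
{\color{red}\rm d}\nu + \mathcal{L}_{{\color{red}\rm d}x_t}\nu = -\,\frac{\delta\,{\color{red}\rm d}\mathcal{J}}{\delta q}\diamond q + p\diamond\frac{\delta\,{\color{red}\rm d}\mathcal{J}}{\delta p}
\]
as two \emph{independent} SALT-type transport equations for the $1$-form densities $\mu,\nu\in\mathfrak{X}^*$, and push each one into Kelvin form by the same steps that prove Corollary \ref{SALT-SNWP-KelThm}.

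The first step is to divide each equation by $D$. Because $D$ satisfies the stochastic continuity equation \eqref{stoch-mass-advect}, i.e. $({\color{red}\rm d}+\mathcal{L}_{{\color{red}\rm d}x_t})D=0$, and because Stratonovich calculus obeys the ordinary Leibniz rule while $\mathcal{L}_{{\color{red}\rm d}x_t}$ is a derivation, this gives $({\color{red}\rm d}+\mathcal{L}_{{\color{red}\rm d}x_t})(\mu/D) = D^{-1}\frac{\delta\ell}{\delta a}\diamond a\,dt$ and the analogous relation for $\nu/D$. The second step is the stochastic transport theorem: for a loop $c({\color{red}\rm d}x_t)$ carried by the Stratonovich vector field ${\color{red}\rm d}x_t$ of \eqref{vel-vdt}, the Kunita--It\^o--Wentzell formula (invoked in the remark following Corollary \ref{SALT-SNWP-KelThm}) gives ${\color{red}\rm d}\oint_{c({\color{red}\rm d}x_t)}\alpha = \oint_{c({\color{red}\rm d}x_t)}({\color{red}\rm d}+\mathcal{L}_{{\color{red}\rm d}x_t})\alpha$ for any time-dependent semimartingale $1$-form $\alpha$. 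Taking $\alpha=\mu/D$ produces the first line of \eqref{SALT-SNWP-Kelvin-split}; taking $\alpha=\nu/D$ and inserting the variational derivatives of ${\color{red}\rm d}\mathcal{J}(q,p)=\mathcal{H}(q,p)\,dt + \langle p\diamond q,\sigma\rangle_\mathfrak{X}\circ dB_t$ from \eqref{SVP2.3} to separate the $dt$ and $\circ\,dB_t$ contributions produces the second. Adding the two identities and using $\mu/D+\nu/D = D^{-1}\delta\ell/\delta u$ recovers \eqref{SALT-SNWP-Kelvin-x}, which is the consistency check that ties the argument together.

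The step that needs real care is the transport step: applying Kunita--It\^o--Wentzell to a \emph{Stratonovich} semimartingale vector field, and, more importantly, checking that the two circulation balances are not coupled through quadratic covariation. The loop $c({\color{red}\rm d}x_t)$ carries the transport noise $\circ\,dW_t$ in \eqref{vel-vdt}, whereas the wave Hamiltonian ${\color{red}\rm d}\mathcal{J}$ carries the independent wave noise $\circ\,dB_t$; since $dW_t$ and $dB_t$ are uncorrelated, the It\^o corrections that appear on converting to and from Stratonovich form involve only the bracket processes of $W$ with itself and of $B$ with itself, never a cross bracket. Hence the wave noise stays inside the $\nu$-integrand, the transport noise stays in the motion of the loop, the two lines of \eqref{SALT-SNWP-Kelvin-split} decouple genuinely rather than only formally, and the wave balance is---as Corollary \ref{nonaccel-result} already shows at the level of momentum maps---inert with respect to the current balance. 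A minor bookkeeping point along the way is to confirm that $\mu/D$ and $\nu/D$ are honest $1$-forms---quotients of $1$-form densities by a density---so that the loop integrals are well defined pathwise and the Leibniz manipulations above are legitimate.
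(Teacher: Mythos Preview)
Your proposal is correct and follows the route the paper itself implies: the corollary is presented without a dedicated proof, as an immediate consequence of the non-acceleration result (Corollary \ref{nonaccel-result}) combined with the same stochastic Kelvin machinery already invoked for Corollaries \ref{SALT-SNWP-KelThm} and \ref{SALT-SNWP-KelThm-x}. Your extra care about the independence of $dW_t$ and $dB_t$ ruling out cross-covariation terms mirrors exactly the paper's own remark on compatibility of the loop noise and integrand noise.
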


\end{remark}

\section{Application \#1: SALT and SNWP for GLM in stratified EB fluids}\label{StochGLM-sec}

\paragraph{The deterministic case.}
In equation \eqref{Lag-det}, we have augmented the known deterministic Lagrangian for EB fluids \cite{HMR1998} by coupling it to a phase-space Lagrangian for wave dynamics, as follows,
\begin{align}
\begin{split}
\ell(\bu^L,D,b,N,\phi:p)
&= \int_\mathcal{D} \bigg[
\frac{D}{2}\big| \bu^L \big|^2 + D\bu^L\cdot \bR(\bx) - gDbz - p(D-1) 
\\&\hspace{2cm}
- N(\partial_t\phi  + \bu^L\cdot\nabla \phi)\,d^3x + H_W(N,\bk) \,.
\end{split}
\label{Lag-stoch}
\end{align}
The first line of the Lagrangian in \eqref{Lag-stoch} is the fluid Lagrangian for EB fluids in standard form \cite{HMR1998}. The second line is the phase-space Lagrangian for the wave degrees of freedom. 
The term $-\int_\mathcal{D} N\nabla \phi\cdot \bu^L\,d^3x$ in the second line has both wave and fluid components. This term serves to couple the EB Lagrangian for the fluid variables with the phase-space Lagrangian for the wave variables by pairing the wave momentum with the fluid velocity. 

To proceed, let us rewrite the deterministic equations \eqref{SVP3-det} for the stratified EB fluid dynamics in a more geometric form so we will be able to see they lead to the stochastic Kelvin circulation theorem  more easily, 
\begin{align}
\begin{split}
&(\partial_t  + \mathcal{L}_{u^L}) \big(\bm\cdot d\bx \otimes d^3x\big) = 
\left( 
D d \pi + Dgz d b
+ {\rm div} \Big(\frac{\delta H_W}{\delta  \bk}\Big) d\phi
+ N d \Big( \frac{\delta H_W }{\delta  N}\Big)
\right)\otimes d^3x
\,,\\&
(\partial_t  + \mathcal{L}_{u^L})  (D\,d^3x)  = 0
\,,\qquad
D=1
\,,\qquad
(\partial_t  + \mathcal{L}_{u^L}) b = 0
\,,\\&
(\partial_t  + \mathcal{L}_{u^L})  \phi  - \frac{\delta\, H_W}{\delta N} = 0
\,,\qquad
(\partial_t  + \mathcal{L}_{u^L}) (N\,d^3x) - \mathcal{L}_{{\delta\, H_W}/{\delta \bk}}\,d^3x = 0
\,,\end{split}
\label{SVP3-det-redux}
\end{align}
where the Eulerian momentum density $\bm$ and pressure $\pi$ in these equations are recalled from \eqref{m&pi-defs} as, 
\begin{align}
\bm := \frac{\delta  \ell}{\delta  \bu^L}  = D (\bu^L + \bR(\bx)) - N\nabla \phi
\,,\qquad
\pi :=  \frac{\delta  \ell}{\delta  D} = \frac12 |\bu^L|^2 + \bR(\bx)\cdot \bu^L - gbz - p
\,.
\label{m&pi-defs-redux}
\end{align}
From the first two equations, one obtains the form needed for the Kelvin theorem, 
\begin{align}
(\partial_t  + \mathcal{L}_{u^L}) \Big( \frac{1}{D}\bm\cdot d\bx \Big) = 
 d \pi + gz d b
+ \frac{1}{D}{\rm div} \Big(\frac{\delta H_W}{\delta  \bk} \Big) d\phi
+ \frac{N}{D} d \Big( \frac{\delta H_W }{\delta  N}\Big)
\label{Kel-form}
\end{align}

Next, we recall the WKB wave Hamiltonian which leads to the GLM equations,
\begin{align}
H_W = - \int_M N \omega(\bk) \,d^3x 
\,,\quad\hbox{with}\quad
 \frac{\delta H_W }{\delta  N}\Big|_{\bk} = - \,\omega(\bk)
 \,,\quad\hbox{and}\quad
 \frac{\delta H_W}{\delta  \bk}\Big|_{N} = - N \frac{\partial \omega(\bk) }{\partial \bk} =: -\,N \bv_G(\bk)
\,,\label{separatedWaveHam-redux}
\end{align}
in which $\bv_G(\bk):=\partial \omega(\bk) / \partial \bk$ is the group velocity for the dispersion relation $\omega=\omega(\bk)$ between wave frequency, $\omega$, and wave number, $\bk$, given for internal waves at leading order by \cite{GH1996} as
\begin{align}
  \omega^2(\bk) = \frac{(2\sym{\Omega}\cdot\bk)^2 }{ k^2} +
  \Big(\delta^{jl}-{ \frac{k^j k^l}{k^2}}\Big)
  {\frac{\partial^2 p}{\partial x^j \partial x^l}}
  \quad\hbox{with}\quad
 2\sym{\Omega} = {\rm curl} \bR(\bx)\,.
\label{disp_exp}
\end{align}

The motion equation for WCI in equation \eqref{SVP3-det-redux} implies the following Kelvin circulation dynamics
\begin{align}
\begin{split}
\frac{d}{dt} 
&\oint_{c(u^L)} \frac{1}{D}\bm\cdot d\bx
= \oint_{c(u^L)} (\partial_t  + \mathcal{L}_{u^L})\Big(\frac{1}{D}\bm\cdot d\bx\Big)
\\&\qquad = \oint_{c(u^L)} \big(\nabla \pi +  gz \nabla b\big) \cdot d\bx -
\oint_{c(u^L)}
\underbrace{\
\frac{1}{D}  \bigg(\bk \,{\rm div} \Big( N \bv_G(\bk)\Big) 
+ 
N \nabla \omega(\bk)\bigg) \
}_{\hbox{GLM Wave Forcing}}\hspace{-1mm}
\cdot \,d\bx
\,,
\end{split}
\label{Det-GLM-Kelvin}
\end{align}
where $c(u^L)$ is a material loop moving with the flow at velocity $\bu^L(\bx,t)$. The quantities $\bm$ and $\pi$ in \eqref{Det-GLM-Kelvin} are defined in equation \eqref{m&pi-defs-redux}.

\begin{remark}[Non-acceleration is broken for non-constant $D$]\label{breaking-nnonaccel}
The presence of $D$ in the last term in \eqref{Det-GLM-Kelvin} links the  wave and fluid components of the flow when $D$ is \emph{not} constant. Thus, as we shall see in the next section, the non-acceleration result \emph{does not hold} when $D$ is a dynamical variable.
\end{remark}

\paragraph{The SALT and SNWP stochastic cases.}
To recover the SALT GLM equations derived in \cite{Holm2019} and extend them to SNWP GLM equations by following the general case in the previous section, we make two replacements. One is in the transport velocity and the other is in the wave Hamiltonian, as 
\begin{align}
\mathcal{L}_{u^L} \to \mathcal{L}_{{\color{red}\rm d}x_t} 
\quad\hbox{with ${\color{red}\rm d}x_t$ in \eqref{vel-vdt} and}\quad
H_W \to {\color{red}\rm d}h_W := H_Wdt + K_W(N,\phi)\circ dB_t
\,.\end{align}
For GLM we choose the diffusion part of the wave Hamiltonian to be $K_W(N,\phi)=\int N\nabla\phi \cdot \sym{\sigma}(\bx)\,d^3x$, as in equation \eqref{J(q,p)-def1} of Remark \ref{GLM-SNWP}. Thus, equation \eqref{Kel-form} becomes 
\begin{align}
\begin{split}
({\color{red}\rm d}  + \mathcal{L}_{{\color{red}\rm d}x_t})
\Big( \frac{1}{D}\bm\cdot d\bx \Big) = 
 (d \pi + gz d b)dt
&- \frac{1}{D}  \Big(\bk \,{\rm div} \Big( N \bv_G(\bk)\Big) 
+ 
N \nabla \omega(\bk)\Big) 
\cdot \,d\bx\, dt
\\& - \frac{1}{D}\Big(\bk \,{\rm div} \big( N \sym{\sigma}(\bx)\big)
- N\nabla\big(\bk\cdot\sym{\sigma}(\bx)\big)
\Big) \circ dB_t
\,.\end{split}
\label{Kel-form-stoch}
\end{align}
Here, the Bernoulli quantity $\pi$ as 
\begin{align}
\pi := \frac{\delta  \ell}{\delta  D} =  \frac12 |\bu^L|^2 + \bR(\bx)\cdot \bu^L - gbz - p
\,,
\label{martingale-p}
\end{align}
which is required in order to impose preservation of volume when the transport velocity $dt$ is stochastic, as discussed in \cite{SC2020}. 

The motion equation for WCI in equation \eqref{SVP3-det-redux} implies the following Kelvin circulation dynamics
\begin{align}
\begin{split}
{\color{red}\rm d}
&\oint_{c({\color{red}\rm d}x_t)} \frac{1}{D}\bm\cdot d\bx
= \oint_{c({\color{red}\rm d}x_t)} ({\color{red}\rm d} + \mathcal{L}_{{\rm dx_t}})\Big(\frac{1}{D}\bm\cdot d\bx\Big)
\\&\qquad = \oint_{c({\color{red}\rm d}x_t)} (\nabla \pi +  gz \nabla b) \cdot d\bx\,dt 
-
\oint_{c({\color{red}\rm d}x_t)}
\frac{1}{D}  \bigg(\bk \,{\rm div} \Big( N \bv_G(\bk)\Big) 
+ 
N \nabla \omega(\bk)\bigg) 
\cdot \,d\bx\,dt
\\&\qquad \hspace{5cm}
-
\oint_{c({\color{red}\rm d}x_t)}
\frac{1}{D}\Big(\bk \,{\rm div} \big( N \sym{\sigma}(\bx)\big)
- N\nabla\big(\bk\cdot\sym{\sigma}(\bx)\big)
\Big) 
\cdot \,d\bx\,\circ dB_t
\,,
\end{split}
\label{SALT-SNWP-GLM-Kelvin1}
\end{align}
where $c({\color{red}\rm d}x_t)$ is a material loop moving with the stochastic flow velocity ${\color{red}\rm d}x_t$ in \eqref{vel-vdt}. Thus, the SALT and SNWP augmentations of GLM have been derived. Future research will investigate the combination of stochastic processes appearing in these dynamics. 

\begin{corollary}[Non-acceleration result -- \emph{ghost waves}]\label{nonaccel-result1}\rm
The Kelvin circulation dynamics in \eqref{SALT-SNWP-GLM-Kelvin1} for $\bm/D := (\bu^L + \bR(\bx)) - D^{-1}N\nabla \phi$ decomposes into the sum of  two separate equations for the currents and wave degrees of freedom. These are:
\begin{align}
\begin{split}
{\color{red}\rm d}
\oint_{c({\color{red}\rm d}x_t)}(\bu^L + \bR(\bx)) \cdot d\bx
&=
\oint_{c({\color{red}\rm d}x_t)} (\nabla \pi +  gz \nabla b) \cdot d\bx\,dt 
\qquad\hbox{(SALT)}\,,
\\
{\color{red}\rm d}
\oint_{c({\color{red}\rm d}x_t)} D^{-1}N\nabla \phi \cdot d\bx
& = 
\oint_{c({\color{red}\rm d}x_t)}
\frac{1}{D}  \bigg(\bk \,{\rm div} \Big( N \bv_G(\bk)\Big) 
+
N \nabla \omega(\bk)\bigg) 
\cdot \,d\bx\,dt
\\&\hspace{-3cm}\hbox{(SALT \& SNWP)}\qquad 
+
\oint_{c({\color{red}\rm d}x_t)}
\frac{1}{D}\Big(\bk \,{\rm div} \big( N \sym{\sigma}(\bx)\big)
- N\nabla\big(\bk\cdot\sym{\sigma}(\bx)\big)
\Big) 
\cdot \,d\bx\,\circ dB_t
\,,
\end{split}
\label{SALT-SNWP-GLM-Kelvin2}
\end{align}
where $c({\color{red}\rm d}x_t)$ is a material loop moving with the stochastic flow velocity ${\color{red}\rm d}x_t$ in \eqref{vel-vdt}. \end{corollary}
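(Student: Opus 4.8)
The plan is to obtain the stated splitting as the explicit Euler--Boussinesq/WKB instance of the abstract non-acceleration result already established in Corollary~\ref{nonaccel-result} (and of the circulation splitting in Corollary~\ref{SALT-SNWP-KelThm-xy}); the only work left is a dictionary between the abstract fields of Section~\ref{SNWP-sec} and the concrete variables of \eqref{Lag-stoch}, followed by the routine evaluation of the diamond operations in vector-calculus notation. First I would identify the abstract data: the advected quantities are $a=(b,D)$, the canonically conjugate wave pair is $(q,p)=(\phi,N\,d^3x)$, and the stochastic wave Hamiltonian is ${\color{red}\rm d}\mathcal{J}\equiv {\color{red}\rm d}h_W=H_W\,dt+K_W\circ dB_t$ with $H_W=-\int N\omega(\bk)\,d^3x$ and $K_W=\int N\nabla\phi\cdot\sym{\sigma}(\bx)\,d^3x$, as in \eqref{separatedWaveHam-redux} and \eqref{J(q,p)-def1}. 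The $\delta u$ variation in \eqref{var-eqns-thm} then reads $\delta\ell/\delta u=b\diamond a+p\diamond q$, and a short computation of the diamond pairing gives $p\diamond q=(N\,d^3x)\diamond\phi=-\,N\nabla\phi\cdot d\bx\otimes d^3x$, so that the total momentum density $\bm\cdot d\bx\otimes d^3x$ splits additively as $\mu+\nu$ with $\mu:=b\diamond a=D(\bu^L+\bR(\bx))\cdot d\bx\otimes d^3x$ the current momentum density and $\nu:=p\diamond q=-\,N\nabla\phi\cdot d\bx\otimes d^3x$ the wave pseudomomentum density --- consistent with $\bm=D(\bu^L+\bR(\bx))-N\nabla\phi$ of \eqref{Lag-var}. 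Since $D\equiv1$ for incompressible EB, one has $\mu/D=(\bu^L+\bR(\bx))\cdot d\bx$ and $-\,\nu/D=D^{-1}N\nabla\phi\cdot d\bx$.

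Next I would apply the decoupling. By Corollary~\ref{nonaccel-result} --- equivalently by Lemmas~\ref{Lemma-m-eqn} and \ref{Lemma-n-eqn} --- the motion equation \eqref{motion-eqns-thm} for $\delta\ell/\delta u=b\diamond a+p\diamond q$ separates into ${\color{red}\rm d}\mu+\mathcal{L}_{{\color{red}\rm d}x_t}\mu=(\delta\ell/\delta a)\diamond a\,dt$, which carries no wave term, together with ${\color{red}\rm d}\nu+\mathcal{L}_{{\color{red}\rm d}x_t}\nu=-\,\frac{\delta\,{\color{red}\rm d}h_W}{\delta q}\diamond q+p\diamond\frac{\delta\,{\color{red}\rm d}h_W}{\delta p}$. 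For the current equation one uses the standard EB evaluation of $(\delta\ell/\delta a)\diamond a$ coming from $\delta\ell/\delta b=-gDz$ and $\delta\ell/\delta D=\pi$ with $\pi$ the Bernoulli function \eqref{martingale-p}; per unit mass this is $(\nabla\pi+gz\nabla b)\cdot d\bx$. For the wave equation one substitutes $\delta H_W/\delta N=-\omega(\bk)$, $\delta H_W/\delta\bk=-N\bv_G(\bk)$, $\delta K_W/\delta N=\bk\cdot\sym{\sigma}$, $\delta K_W/\delta\bk=N\sym{\sigma}$, together with the identity $\delta({\color{red}\rm d}h_W)/\delta\phi=-\,{\rm div}\big(\delta({\color{red}\rm d}h_W)/\delta\bk\big)$ (valid because $\bk=\nabla\phi$ is curl-free), which turns the two diamond terms into $\tfrac1D\big(\bk\,{\rm div}(N\bv_G(\bk))+N\nabla\omega(\bk)\big)\cdot d\bx\,dt$ plus the noise term $\tfrac1D\big(\bk\,{\rm div}(N\sym{\sigma})-N\nabla(\bk\cdot\sym{\sigma})\big)\cdot d\bx\circ dB_t$. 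The same wave equation can alternatively be produced directly by applying $({\color{red}\rm d}+\mathcal{L}_{{\color{red}\rm d}x_t})$ to $D^{-1}N\nabla\phi\cdot d\bx$ and using the stochastic versions of the last two equations in \eqref{SVP3-det-redux} (obtained by $\mathcal{L}_{u^L}\to\mathcal{L}_{{\color{red}\rm d}x_t}$, $H_W\to{\color{red}\rm d}h_W$); the essential input from the abstract result is precisely that the complementary $\mu$-equation is wave-free.

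Then I would pass from momentum densities to circulation. Dividing each equation by $D$ is legitimate because the mass form satisfies $({\color{red}\rm d}+\mathcal{L}_{{\color{red}\rm d}x_t})(D\,d^3x)=0$ by \eqref{stoch-mass-advect}; integrating the result around the material loop $c({\color{red}\rm d}x_t)$ of \eqref{vel-vdt} via the stochastic transport identity $\oint_{c({\color{red}\rm d}x_t)}({\color{red}\rm d}+\mathcal{L}_{{\color{red}\rm d}x_t})(\alpha/D)={\color{red}\rm d}\oint_{c({\color{red}\rm d}x_t)}\alpha/D$ already used for Corollary~\ref{SALT-SNWP-KelThm} then produces the SALT circulation identity for $(\bu^L+\bR(\bx))\cdot d\bx$ and the SALT \& SNWP circulation identity for $D^{-1}N\nabla\phi\cdot d\bx$ exactly as displayed in \eqref{SALT-SNWP-GLM-Kelvin2}; adding the two reproduces \eqref{SALT-SNWP-GLM-Kelvin1}, which is a useful consistency check. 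Equivalently, this is just Corollary~\ref{SALT-SNWP-KelThm-xy} written out for the present EB/WKB data.

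The main obstacle is computational rather than conceptual: the decoupling is already built into the momentum-map structure of the minimal-coupling term, so the only real danger lies in the sign bookkeeping when converting $\frac{\delta\,{\color{red}\rm d}h_W}{\delta\phi}\diamond\phi$ and $p\diamond\frac{\delta\,{\color{red}\rm d}h_W}{\delta p}$ into divergences and gradients (the pseudomomentum enters $\bm$ with a minus, and the $\mathcal{L}^{T}$ and $\diamond$ conventions are easy to slip on), and in checking that the Stratonovich product and chain rules invoked in dividing by $D$ and in expanding $({\color{red}\rm d}+\mathcal{L}_{{\color{red}\rm d}x_t})(N\nabla\phi\cdot d\bx)$ are those dictated by the Kunita--It\^o--Wentzell theorem. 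Since the transport noise $dW_t$ in ${\color{red}\rm d}x_t$ and the wave noise $dB_t$ in ${\color{red}\rm d}h_W$ are assumed independent, no cross-variation term is generated and the $dB_t$-forcing stays confined to the wave ($\nu$) equation. Finally, in accordance with Remark~\ref{breaking-nnonaccel}, one should note that the clean split relies on $D\equiv1$: for dynamical $D$ the $1/D$ weighting of the wave-forcing term recouples the two equations, so Corollary~\ref{nonaccel-result1} is an incompressible-flow statement.
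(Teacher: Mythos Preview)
Your proposal is correct and follows essentially the same route as the paper: the corollary is presented there as the EB/WKB specialization of the abstract non-acceleration result (Corollary~\ref{nonaccel-result}) and its circulation form (Corollary~\ref{SALT-SNWP-KelThm-xy}), and your dictionary $(q,p)=(\phi,N\,d^3x)$, $\nu=p\diamond q=-N\nabla\phi\cdot d\bx\otimes d^3x$, together with the explicit evaluation of the diamond terms using $\delta H_W/\delta N=-\omega(\bk)$, $\delta H_W/\delta\bk=-N\bv_G$, $\delta K_W/\delta N=\bk\cdot\sym{\sigma}$, $\delta K_W/\delta\bk=N\sym{\sigma}$, is exactly what the paper has in mind. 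Your closing observation that the clean split requires $D\equiv1$, in line with Remark~\ref{breaking-nnonaccel}, is also the point the paper emphasises.
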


Corollary \ref{nonaccel-result1} is a non-acceleration result for the wave and current momenta, in the sense that the waves propagate in the local reference frame of the fluid flow and the presence of waves has no net effect on the mean-flow equations in this model.

\section{Application \#2: SALT and SNWP for shallow water waves}\label{SW-WCI-sec}

\paragraph{Phase-space Lagrangian derivation of the Shallow water waves in 1D (SWW1D)}
Following the pattern set in \eqref{Lag-det}, we augment the known \emph{deterministic} Lagrangian for SWW1D \cite{HMR1998} by appending to it a phase-space Lagrangian for wave dynamics. We may then write the WCI action integral for Hamilton's principle as follows,
\begin{align}
S = \int_{t_1}^{t_2}\ell(u,D,N,\phi)\,dt
&= \int_{t_1}^{t_2}\!\!\int_\mathcal{D} \bigg[
\frac{D}{2} u^2 - \frac{g}{2}(D-b(x))^2 - Nu\phi_x - N\phi_t \,\bigg]  dx\,dt + H_W(N,\phi_x)dt
\,.
\label{SWLag-det}
\end{align}
Hamilton's principle gives
\begin{align}
\begin{split}
0=\delta S = & \int_{t_1}^{t_2}\!\!\int_\mathcal{D} \bigg[
\delta D\Big( \frac{u^2}{2}  - g(D-b)\Big)
+ \delta u\Big(Du - N\phi_x \Big) 
\\ \qquad &
+  \delta N \Big( - \phi_t  - u\phi_x + \frac{\delta H_W}{\delta N}\Big)
+ \delta \phi \Big( N_t + (Nu)_x - \partial_x\frac{\delta H_W}{\delta \phi_x} \Big)
 \,\bigg]  dx\,dt\,.
 \end{split}
\label{SWLag-var}
\end{align}
As before, we choose the wave Hamiltonian to be
\begin{align}
 \begin{split}
H_W(N,\phi_x) &= - \int_\mathcal{D} N \omega(k)dx
\quad\hbox{with}\quad k = \phi_x
\\
\delta H_W(N,\phi_x) &= - \int_\mathcal{D} (\delta N) \omega(k) - (\delta\phi) \,\partial_x (Nv_G(k))dx
\,.
 \end{split}
\label{SW-HamWave}
\end{align}
The canonical equations for $(\phi,N)$ are then
\begin{align}
\phi_t  +  u\phi_x + \omega(k)= 0
\quad\hbox{and}\quad 
N_t + \partial_x (N(u+v_G(k))) = 0\,.
\label{phi-N-eqns}
\end{align}
The corresponding equations for the fluid variables 
\begin{align}
\hbox{Momentum:}\quad \frac{\delta \ell}{\delta u} =:m = Du - N\phi_x
\quad\hbox{and Depth:}\quad D
\label{fluid-quant}
\end{align}
are the Euler--Poincar\'e equations \cite{HMR1998}
\begin{align}
 \begin{split}
m_t + (m\partial_x + m\partial_x m) u &= D \partial_x \Big( \frac{u^2}{2}  - g(D-b) \Big)\,,
\\
D_t + \partial_x(Du) &= 0
\,.
 \end{split}
\label{EP-eqns}
\end{align}

\paragraph{Hamiltonian derivation of the SWW1D}
To pass to the Hamiltonian side, we complete the Legendre transform in the reduced fluid variables $(m,u)$ to find
\begin{align}
H(m,D,\phi,N) &= \scp{m}{u}  - \ell(u,D,N,\phi)
\label{Leg-Ham}
\end{align}
whose variational derivatives are found from
\begin{align}
\delta H = \scp{\delta m}{u}  + \Scp{m -  \frac{\delta \ell}{\delta u} }{\delta u} 
+ \Scp{ -\,\frac{\delta \ell}{\delta D} }{\delta D}\,.
\label{Ham-var}
\end{align}
Thus, we may write the Euler--Poincar\'e equations in \eqref{EP-eqns} as Lie--Poisson Hamiltonian equations for the variables $(m,D)$,
\begin{align}
\partial_t
\begin{bmatrix}
m \\ D
\end{bmatrix}
= -
\begin{bmatrix}
\partial_x m + m \partial_x & D\partial_x
\\
\partial_xD & 0
\end{bmatrix}
\begin{bmatrix}
\delta H / \delta m = u
\\
\delta H / \delta D =   g(D-b) - {u^2}/{2} 
\end{bmatrix}.
\label{m-D-eqns}
\end{align}
Likewise, we write canonical Hamiltonian equations for the wave variables $(\phi,N)$,
\begin{align}
\partial_t
\begin{bmatrix}
\phi \\ N
\end{bmatrix}
= 
\begin{bmatrix}
0 & 1
\\
-1 & 0
\end{bmatrix}
\begin{bmatrix}
\delta H / \delta \phi =  \partial_x(N (u+v_G(k)))
\\
\delta H / \delta N =   - \omega(k) - u \phi_x
\end{bmatrix}.
\label{N+phi-eqns}
\end{align}
Thus, in the variables $(m,D,\phi,N)$ the Poisson matrix operator is block diagonal. That is, we may write equations \eqref{m-D-eqns} and \eqref{N+phi-eqns} in Hamiltonian form as  
 \begin{align}
\partial_t
\begin{bmatrix}
m \\ D \\ \phi \\ N
\end{bmatrix}
= -
\begin{bmatrix}
\partial_x m + m \partial_x & D\partial_x & 0 & 0
\\
\partial_xD & 0  & 0 & 0
\\
0 & 0 & 0 & -1
\\
0 & 0  & 1 & 0
\end{bmatrix}
\begin{bmatrix}
\delta H / \delta m = u
\\
\delta H / \delta D =   g(D-b) - {u^2}/{2} 
\\
\delta H / \delta \phi =  \partial_x(N (u+v_G(k)))
\\
\delta H / \delta N =   - \omega(k) - u \phi_x
\end{bmatrix}.
\label{m+D-eqns}
\end{align}

However, because the particle momentum $m=Du-N\phi_x$ in equation \eqref{fluid-quant} is an unfamiliar fluid variable for SWW, 
it may be easier to understand the equations for the total momentum $M=Du$, in the 
usual language of fluid velocity. Therefore, we will transform the block diagonal Poisson matrix in
 \eqref{m+D-eqns} into the kinematic momentum $M=m+N\partial_x\phi=Du$ as well as $(D,\phi,N))$. After this transformation to the kinematic momentum variable we find the following Poisson matrix in a class of Lie-Poisson operators whose fundamental properties in finite dimensions have already been discussed by Krishnaprasad and Marsden in \cite{KM1987}, for the Hamiltonian dynamics of rigid bodies with flexible attachments,
 \begin{align}
\partial_t
\begin{bmatrix}
M \\ D \\ \phi \\ N
\end{bmatrix}
= -
\begin{bmatrix}
\partial_x M + M \partial_x & D\partial_x & -\phi_x & N\partial_x
\\
\partial_xD & 0  & 0 & 0
\\
\phi_x & 0 & 0 & -1
\\
\partial_x N & 0  & 1 & 0
\end{bmatrix}
\begin{bmatrix}
\delta H / \delta M = u
\\
\delta H / \delta D =   g(D-b) - {u^2}/{2} 
\\
\delta H / \delta \phi = \partial_x \big(Nv_G(k)\big)
\\
\delta H / \delta N = -\omega(k)
\end{bmatrix}.
\label{M+D-eqns}
\end{align}
\begin{remark}[Transforming the system \eqref{M+D-eqns} to the variables $(M,D,k=\phi_x,N)$]
Under the transformation of variables $(M,D,\phi,N)\to(M,D,k=\phi_x,N)$ the system \eqref{M+D-eqns} becomes
 \begin{align}
\partial_t
\begin{bmatrix}
M \\ D \\ k \\ N
\end{bmatrix}
= -
\begin{bmatrix}
\partial_x M + M \partial_x & D\partial_x & k\partial_x & N\partial_x
\\
\partial_xD & 0  & 0 & 0
\\
\partial_xk & 0 & 0 & -\partial_x
\\
\partial_x N & 0  & -\partial_x & 0
\end{bmatrix}
\begin{bmatrix}
\delta H / \delta M = u
\\
\delta H / \delta D =   g(D-b) - {u^2}/{2} 
\\
\delta H / \delta k = - Nv_G(k)
\\
\delta H / \delta N = -\omega(k)
\end{bmatrix}.
\label{M+D_eqns}
\end{align}
This transformation takes the Poisson matrix to a class of Lie-Poisson operators in infinite dimensions whose fundamental properties have already been discussed by Holm and Kupershmidt in \cite{HK1982,HK1987}, for the Hamiltonian dynamics of superfluid ${}^4He$ without vortices. This class of Lie--Poisson brackets was also derived for complex fluids such as liquid crystals, as well as for superfluid ${}^4He$ both with and without vortices in \cite{HolmHVBK2001,Holm2002}. 

These other appearances of the same class of Hamiltonian structure as for WCI help to interpret the wave physics we are dealing with in the present paper. Namely, all of the other theories associated with this class of Lie--Poisson brackets refer to the additional physics described in terms of \emph{order parameters} whose dynamics can be regarded as occurring \emph{internally} in the frame of the moving fluid. That is, the order-parameter dynamics can be regarded as \emph{subscale} physics taking place relative to the frame of reference of the primary fluid motion. This is quite well-known for the case for the 2-fluid model of  superfluids with vortices, for example \cite{HolmHVBK2001}. Actually, it is also well-known for GLM, when one considers the fluid interpretation of the GLM pseudomomentum and wave action density as a pair of momentum maps for the actions of translations and phase shifts of a complex wave amplitude, as one does for the famous Madelung transformation of quantum mechanics \cite{Madelung1927}. 

The order-parameter interpretation of the present formulation of WCI stemming from its Hamiltonian structure makes it seem natural to introduce a stochastic version of WCI in this formulation, in order to describe the uncertainty which may arise due to \emph{unresolved effects} of the wave-current interaction. 

\end{remark}
\begin{remark}[The variational derivatives of the Hamiltonian in \eqref{M+D-eqns}]
The computation of the required variational derivatives in \eqref{M+D-eqns} is accomplished by first passing to the Hamiltonian side via the Legendre transform in the fluid and wave variables $(m,D,\phi,N)$ then rearranging to identify the Hamiltonian dependence in $(M,D,\phi,N)$ variables, as follows
\begin{align}
H(m,D,\phi,N) &= \scp{m}{u} - \scp{N}{\phi_t}  - \ell(u,D,N,\phi)
\nonumber\\ &= \int_\mathcal{D}
(mu - N\phi_t )\,dx - \int_\mathcal{D}\Big(\frac{D}{2} u^2 - \frac{g}{2}(D-b(x))^2 - Nu\phi_x - N\phi_t \Big)dx - H_W(N,\phi_x)\,,
\nonumber\\ 
H(M,D,\phi,N) &=  \int_\mathcal{D}
\bigg[\frac{M^2}{2D}  + \frac{g}{2}\big(D-b(x)\big)^2 \bigg]dx - H_W(N,\phi_x)\,.
\label{Leg-Ham}
\end{align}
The variational derivatives in $(M,D,\phi,N)$ are found from \eqref{Leg-Ham} as
\begin{align}
\delta H = \int_\mathcal{D} \bigg[& 
\frac{M}{D}\delta M 
+ \Big(-\frac{M^2}{2D^2} + g (D-b) \Big)\delta D
 - (\delta\phi) \partial_x \big(Nv_G(k)\big)
+ ( \delta N )\omega(k)
\bigg]dx\,,
\label{Ham-var}
\end{align}
where we have used equation \eqref{SW-HamWave} for the variations of the wave Hamiltonian $H_W(N,\phi_x)$. 
\end{remark}

After a bit of manipulation, one may write equations \eqref{M+D-eqns} a form which is familiar in fluid dynamics, 
\begin{align}
 \begin{split}
u_t + uu_x = -g \partial_x\big(D-b(x)\big) + &\frac1D \partial_x\big(Nkv_G(k)\big)
\,,\\
D_t + \partial_x(Du) &= 0
\,,\\
\phi_t + u\phi_x + \omega(k) &= 0
\,,\\
N_t + \partial_x\big(N(u+v_G(k))\big) &=0
\,,
 \end{split}
\label{SW-WCI-final}
\end{align}
where $k=\phi_x$ is the 1D wave vector and $v_G(k)=\partial\omega/\partial k$ is the group velocity. One may regard the additional force in the 1D motion equation which depends on the wave variables as a nonhydrostatic `ponderomotive' pressure force due to the presence of the wave degree of freedom which propagates in the local frame of reference of the fluid flow. 

In particular, surface gravity waves in shallow water of mean depth $h$ satisfy the well-known dispersion relation, 
\cite{Vallis2017}
\begin{eqnarray}
{\omega}^2(k)=gk\tanh{hk}
\,,
\label{GWdispersrelatn}
\end{eqnarray}
which admits both leftward and rightward travelling waves with group velocity $v_G=\partial \omega / \partial k$.
Substitution of the shallow water dispersion relation \eqref{GWdispersrelatn} into equation set \eqref{SW-WCI-final} yields the final equation set for WCI in 1D shallow water. 

In 2D, the SWWCI equations can be read off the Lie-Poisson form of the equations in \eqref{M+D_eqns} as
\begin{align}
\partial_t
\begin{bmatrix}
M_i \\ D \\ k_i \\ N
\end{bmatrix}
= -
\begin{bmatrix}
\partial_j M_i + M_j \partial_i & D\partial_i & - k_{j,i} + \partial_j k_i & N\partial_i
\\
\partial_jD & 0  & 0 & 0
\\
k_{i,j} + k_j\partial_i & 0 & 0 & -\partial_i
\\
\partial_j N & 0  & -\partial_j & 0
\end{bmatrix}
\begin{bmatrix}
\delta H / \delta M_j = u^j
\\
\delta H / \delta D =   g(D-b) - {|\bu|^2}/{2} 
\\
\delta H / \delta k_j = - Nv^j_G(k)
\\
\delta H / \delta N = -\,\omega(k)
\end{bmatrix}.
\label{M+D_eqns-2D}
\end{align}
The corresponding SWWCI 2D equations are
\begin{align}
 \begin{split}
\partial_t\bu - \bu\times {\rm curl} \,\bu = -g \nabla\big(D-b(x)\big) + &\frac1D \partial_j \big(N\bk v^j_G(\bk)\big)
\,,\\
\partial_t D + {\rm div}(D\bu) &= 0
\,,\\
\partial_t \bk + \nabla (\omega(\bk) + \bu\cdot\bk )   &= 0
\,,\\
N_t + {\rm div}\big(N(\bu+\bv_G(k))\big) &=0
\,,
 \end{split}
\label{SW-WCI-final2D}
\end{align}

\paragraph{The SALT and SNWP stochastic cases for the Hamiltonian version of SWW1D.}
We propose an extension to stochastic SWW1D flow on the Hamiltonian side by modifying the Hamiltonian function in equation \eqref{Leg-Ham} to make it stochastic, following equation \eqref{J(q,p)-def1} for the diffusion part of the wave Hamiltonian, as
\begin{align}
\begin{split}
{\rm d}h(M,D,\phi,N)  &=  \int_\mathcal{D} \bigg[\frac{M^2}{2D}  + \frac{g}{2}\big(D-b(x)\big)^2 \bigg]dx \,dt
+ \int_\mathcal{D} M \sum_i \xi_i(x) dx \circ dW_t^i 
\\
& \qquad - H_W(N,\phi_x)dt + \int_\mathcal{D} (N\phi_x) \sum_i \sigma_i(x) dx \circ dB_t^i 
\,.
\end{split}
\label{Leg-Ham-stoch}
\end{align}

Then, the \emph{stochastic} version of the SWW1D motion equations in \eqref{M+D-eqns} becomes
 \begin{align}
{\rm d}
\begin{bmatrix}
M \\ D \\ \phi \\ N
\end{bmatrix}
= -
\begin{bmatrix}
\partial_x M + M \partial_x & D\partial_x & -\phi_x & N\partial_x
\\
\partial_xD & 0  & 0 & 0
\\
\phi_x & 0 & 0 & -1
\\
\partial_x N & 0  & 1 & 0
\end{bmatrix}
\begin{bmatrix}
\delta ({\rm d}h) / \delta M = {{\color{red}{\rm d}}x_t}
\\
\delta ({\rm d}h) / \delta D =   \pi dt
\\
\delta ({\rm d}h) / \delta \phi = \partial_x \big(N\widetilde{\bv}_G\big)
\\
\delta ({\rm d}h) / \delta N = -\,\widetilde{\omega} 
\end{bmatrix}
\label{M+D-eqns-stoch}
\end{align}
where one defines the hydrostatic pressure $(\pi)$ and stochastic transport vector field $({{\color{red}{\rm d}}x_t})$ as,
\begin{align}
\pi:= g(D-b) - {u^2}/{2}
\quad\hbox{and}\quad
{{\color{red}{\rm d}}x_t} := u\,dt + \sum_i \xi_i(x) \circ dW_t^i
\,,
\label{Lag-traj-stoch}
\end{align}
and one introduces notation for the stochastic versions of group velocity $(\widetilde{\bv}_G)$ and frequency $(\widetilde{\omega})$ as
\begin{align}
\widetilde{\bv}_G := \bv_G(k)dt + \sum_i {\sym{\sigma}}_i(\bx) \circ dB_t^i
\quad\hbox{and}\quad
\widetilde{\omega} := \omega(\bk)dt +  \bk\cdot\sum_i {\sym{\sigma}}_i(\bx) \circ dB_t^i
\,,
\label{grpv-freq-stoch}
\end{align}
written here in vector form for clarity when generalising to  higher dimensions. 
Note that $\widetilde{\bv}_G = \partial \widetilde{\omega} / \partial \bk$.
Physically, the
noise introduced into the diffusion part of the wave Hamiltonian in equation \eqref{Leg-Ham-stoch} produces in \eqref{grpv-freq-stoch} a stochastic shift in the group velocity, accompanied by the corresponding stochastic Doppler shift in the wave frequency. 

\begin{remark}[Determining the noise eigenvectors ${\sym{\xi}}_i(\bx)$ and ${\sym{\sigma}}_i(\bx)$]
The vector fields ${\sym{\xi}}_i(\bx)$ and ${\sym{\sigma}}_i(\bx)$ would need to be specified, or obtained,  from another source, such as observation data for the velocity-velocity correlation tensor for the currents, and the effective group velocity and wave frequency of the wave field. Determining these functions will comprise the fundamental crux of applying this class of stochastic GLM equations for uncertainty quantification and data assimilation. Further discussion of this challenge is beyond the scope of the present work. However, previous work indicates that viable procedures can be developed to meet this challenge, as done already for related problems in \cite{CCHPS2018,CCHPS2019a,CCHPS2019b,CCHPS2020}.
\end{remark}

The 1D fluid dynamical form of these stochastic SW-WCI equations is 
\begin{align}
 \begin{split}
{\color{red}{\rm d}}u + {{\color{red}{\rm d}}x_t}u_x + u \partial_x \Big( \sum_i \xi_i(x) \circ dW_t^i\Big)
= -g \partial_x\big(D-b(x)\big)dt + &\frac{k}{D} \big(N\widetilde{v}_G(k)\big)
+ \frac{N}{D} \partial_x\widetilde{\omega}
\,,\\
{\color{red}{\rm d}}D + \partial_x(D{{\color{red}{\rm d}}x_t}) &= 0
\,,\\
{\color{red}{\rm d}}\phi + {{\color{red}{\rm d}}x_t}\phi_x + \widetilde{\omega}(k) &= 0
\,,\\
{\color{red}{\rm d}}N + \partial_x\big(N({{\color{red}{\rm d}}x_t}+\widetilde{v}_G)\big) &=0
\,.
 \end{split}
\label{SW-WCI-stoch}
\end{align}

In 2D, these stochastic Hamiltonian equations would be written in fluid dynamical form as
\begin{align}
\begin{split}
\big({\color{red}{\rm d}} + \mathcal{L}_ {{\color{red}{\rm d}}x_t}\big) (\bu\cdot d\bx) 
&= - d\pi\,dt + \frac1D \Big( {\rm div}(N\widetilde{\bv}_G)\,d\phi + N d \widetilde{\omega} \Big)
\,,\\
\big({\color{red}{\rm d}} + \mathcal{L}_ {{\color{red}{\rm d}}x_t}\big)(Dd^2x) &= 0
\,,\\
\big({\color{red}{\rm d}} + \mathcal{L}_ {{\color{red}{\rm d}}x_t}\big)\phi  &= -\,\widetilde{\omega}(k)
\,,\\
\big({\color{red}{\rm d}} + \mathcal{L}_ {{\color{red}{\rm d}}x_t}\big)(Nd^2x)&= 0
\,.
\end{split}
\label{2DSW-stoch}
\end{align}

Note that the non-acceleration result in corollary \ref{nonaccel-result1} for incompressible GLM flow \emph{does not hold} when $D$ is a dynamical variable. This is clear from the following Kelvin circulation theorem for  SW-WCI in 2D.

\begin{theorem}
The stochastic Kelvin circulation theorem corresponding to the stochastic SW-WCI motion equation in 2D is given by
\begin{align}
{\color{red}{\rm d}} \oint_{c({\color{red}{\rm d}}x_t)}\hspace{-2mm}\bu\cdot d\bx
=
\oint_{c({\color{red}{\rm d}}x_t)}
\hspace{-4mm}- \,d\pi\,dt + \frac1D \Big( {\rm div}(N\widetilde{\bv}_G)\,d\phi + N d \widetilde{\omega} \Big)
\,,
\label{2DSW-stoch-Kel}
\end{align}
in which the wave sources of flow circulation are evident and the two sources of circulation cannot be separated. 
\end{theorem}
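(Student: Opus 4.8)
\emph{Proof strategy.} The plan is to read the result off the geometric (Lie-derivative) form of the 2D stochastic SW-WCI motion equation, which is the first line of \eqref{2DSW-stoch},
\[
\big({\color{red}{\rm d}} + \mathcal{L}_{{\color{red}{\rm d}}x_t}\big)(\bu\cdot d\bx)
= -\,d\pi\,dt + \frac1D\Big({\rm div}(N\widetilde{\bv}_G)\,d\phi + N\,d\widetilde{\omega}\Big)\,,
\]
together with the stochastic Kelvin--Stokes transport identity that already underlies the SALT and SNWP circulation theorems (e.g. \eqref{SALT-SNWP-Kelvin-x}). Concretely, I would fix a reference loop $c_0$, write the circulation around the material loop as a pullback under the stochastic flow $\phi_t$ generated by the semimartingale vector field ${\color{red}{\rm d}}x_t$ of \eqref{Lag-traj-stoch}, namely $\oint_{c({\color{red}{\rm d}}x_t)}\bu\cdot d\bx = \oint_{c_0}\phi_t^*(\bu\cdot d\bx)$, then take the stochastic differential ${\color{red}{\rm d}}$, commute it past the now fixed-domain loop integral, and invoke the Kunita-It\^o-Wentzell pullback relation \eqref{SVP2..2}, ${\color{red}{\rm d}}(\phi_t^*\alpha) = \phi_t^*\big({\color{red}{\rm d}}\alpha + \mathcal{L}_{{\color{red}{\rm d}}x_t}\alpha\big)$, for the time-dependent, semimartingale-valued 1-form $\alpha = \bu\cdot d\bx$. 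Pushing the outcome forward to $c({\color{red}{\rm d}}x_t)$ and substituting the displayed motion equation gives \eqref{2DSW-stoch-Kel} immediately.

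Next I would draw out the structural content the theorem emphasises. The term $-\,d\pi\,dt$ is exact, hence $\oint_{c({\color{red}{\rm d}}x_t)} d\pi = 0$ for single-valued $\pi$ and it does not contribute. The two wave one-forms $\tfrac1D{\rm div}(N\widetilde{\bv}_G)\,d\phi$ and $\tfrac{N}{D}\,d\widetilde{\omega}$, however, are not closed once $D$ is a genuine dynamical variable, so they create circulation of $\bu\cdot d\bx$ directly. This contrasts with the incompressible GLM case of Corollary \ref{nonaccel-result1}, where $D\equiv 1$ allows the circulation theorem for $\bm/D$ to split into a current part carrying no wave forcing and a separate wave part; here the non-constant factor $D^{-1}$ prevents the wave forcing from being absorbed into an advected momentum one-form, so the wave-driven and current-driven contributions to the circulation of $\bu$ cannot be disentangled --- the non-acceleration property is broken, as anticipated in Remark \ref{breaking-nnonaccel}. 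I would also note that the deterministic and stochastic parts of the wave forcing enter jointly through the semimartingales $\widetilde{\omega}$ and $\widetilde{\bv}_G = \partial\widetilde{\omega}/\partial\bk$ of \eqref{grpv-freq-stoch}, so \eqref{2DSW-stoch-Kel} carries both a $dt$ drift and a $\circ\,dB_t^i$ martingale part, while the loop itself is transported by the independent noise $\circ\,dW_t^i$ in ${\color{red}{\rm d}}x_t$.

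The one genuinely delicate ingredient is the rigorous justification of the transport step: one needs that ${\color{red}{\rm d}}x_t = u\,dt + \sum_i\xi_i(x)\circ dW_t^i$ integrates to a stochastic flow of diffeomorphisms $\phi_t$, and that the pullback identity \eqref{SVP2..2} remains valid when applied to a 1-form whose coefficients are themselves semimartingales solving the coupled SPDE system \eqref{2DSW-stoch}, so that the Stratonovich chain rule closes with no surviving It\^o correction in the Stratonovich form. This is exactly the Kunita-It\^o-Wentzell machinery imported from \cite{BdLHLT2020}; granting it, the theorem reduces to the single substitution described above, precisely as in the proofs of the earlier SALT and SNWP Kelvin theorems. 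An alternative route would be to specialise the general Kelvin theorem of Corollary \ref{SALT-SNWP-KelThm-x} to the shallow-water Lagrangian \eqref{SWLag-det} and then change variables from the particle momentum $m$ to the kinematic momentum $M=Du$ via \eqref{M+D-eqns-stoch}; this is equivalent but longer, so I would prefer the direct argument from \eqref{2DSW-stoch}.
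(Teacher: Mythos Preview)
Your approach is correct and is essentially the same as the paper's: the theorem is presented there without an explicit proof, as it follows immediately from the first line of \eqref{2DSW-stoch} by the standard Kunita--It\^o--Wentzell pullback/transport identity \eqref{SVP2..2} used throughout (cf.\ equations \eqref{SALT-Kelvin} and \eqref{SALT-SNWP-Kelvin}). Your additional commentary on why the $D^{-1}$ factor obstructs the non-acceleration splitting is exactly the point the paper wishes to make in stating the theorem.
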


\begin{remark}
For a contrasting approach to deriving stochastic shallow water models, which combines asymptotic expansions and vertical averaging with the stochastic variational framework discussed here for the formulation of new stochastic parametrisation schemes for the nonlinear wave fields, see \cite{HolmLuesink2019}.
\end{remark}

\section{Deterministic comparison of the Craik-Leibovich model with GLM}\label{sec:det-CL}

Among the many processes which occur in the oceanic surface boundary layer, Langmuir circulations (LCs) attract much of the  attention because they are believed to affect the air-sea exchanges of heat and gases through an enhancement of turbulent mixing \cite{Thorpe2004}. In the formation of LCs, the interaction between surface waves and the mean flow is believed to play a central role. Craik and Leibovich \cite{CraikLeibovich1976} derived an expression for the wave-current interaction for wind-driven waves in the oceanic mix layer called the \emph{Stokes vortex force} (SVF) and showed that the SVF induces roll structures similar to the observed LCs. Today, the SVF representation of the wave-current interaction in the momentum equation is in general use for numerically modelling the effects of LCs in mixed layer turbulence by using large-eddy simulations (LES), although the theoretical issues are by no means settled \cite{Fujiwara-etal2018,Fujiwara-MellorReply2019,Mellor-Fujiwara2019,Tejada-Martinez2020}.

\paragraph{A quick derivation of the CL motion equation obtained by time averaging Kelvin's theorem.}
One may derive the CL model by considering how averaging applies to the Kelvin circulation theorem for the EB model, 
\begin{equation}
\frac{d}{dt}\oint_{c(u)} \big( \bu(\bx,t) + \bR(\bx) \big)\cdot d\bx = \oint_{c(u)} \Big( \dots \Big)\cdot d\bx 
\label{Lag-det-CL}
\end{equation}
In Kelvin's theorem, the loop moves with the flow, so the loop is a Lagrangian quantity. The integrand is fixed in space, so the integrand is Eulerian. Thus, after taking averages, the loop velocity will be the Lagrangian mean velocity, $\bu^L$, and the integrand velocity will be the Eulerian mean velocity, $\bu=\bu^L-\bu^S$, when defined in terms of the Stokes mean drift velocity, $\bu^S(\bx)$.%
\footnote{For convenience,  the Stokes mean drift velocity, $\bu^S(\bx)$, is usually taken to be time independent and divergence-free. However, these two assumptions remain controversial in the CL literature. }
 (For the sake of simplicity, we drop the bar notation for mean quantities.) Thus, the mean Kelvin theorem will read
\begin{equation}
\frac{d}{dt}\oint_{c(u^L)} \big(  \bu^L(\bx,t) - \bu^S(\bx) + \bR(\bx) \big)\cdot d\bx = \oint_{c(u^L)} \Big( \dots \Big)\cdot d\bx 
\label{Lag-det-CL}
\end{equation}
Taking the time derivative of the loop integral then yields the motion equation in the loop-integral form,
\begin{equation}
\oint_{c(u^L)} \big( \partial_t + \mathcal{L}_{u^L} \big)\Big(\big( \bu^L(\bx,t) - \bu^S(\bx) + \bR(\bx) \big)\cdot d\bx\Big)
 = \oint_{c(u^L)} \Big( \dots \Big)\cdot d\bx \,,
\label{Lag-det-CL}
\end{equation}
where the coordinate notation for the Lie derivative $\mathcal{L}_{u^L} (\bv \cdot  d\bx)$ for a 1-form $\bv \cdot  d\bx$ may be written out conveniently in two equivalent vector forms which are familiar in fluid dynamics, 
\begin{equation}
\mathcal{L}_{u^L} (\bv \cdot  d\bx) = \big( \bu^L\cdot\nabla)\bv + v_j\nabla u^{L\,j} \big)\cdot d\bx
=  \big( - \bu^L\times {\rm curl}\bv + \nabla(\bu^L\cdot\bv) \big)\cdot d\bx\,.
\label{Lie-det-CL}
\end{equation}
These familiar vector forms of the Lie derivative of a 1-form in \eqref{Lie-det-CL} then express the CL SVF in the motion equation for Euler-Boussinesq (EB) flow in its standard vector form in equation \eqref{CL-mot-eqn} below. 

\paragraph{Another derivation of the CL motion equation using Hamilton's principle.}
The ideal CL equations arise from stationarity of a constrained
Hamilton's principle $\delta S = 0$, under variations of the fluid variables 
at constant Eulerian position. The constrained Hamilton's principle for the implementation 
of the CL model in the EB equations is given in terms of the action \cite{Holm1996}, 
\begin{equation}
S = \int \int \,dt \left[ \frac12 D |\bu^L|^2 
- b D g z 
- D \bu^L \cdot \bu^S(\bx)
+ D \bu^L \cdot {\bf R}(\bx) - p(D-1)\right]\,d^3 x\,dt\,.
\label{Lag-det-CL}
\end{equation}
Here $\bu^L(\bx,t)$ is the Lagrangian mean fluid velocity, as before, and
the ``Stokes drift velocity'' $\bu^S(\bx)$ is a prescribed time-independent function of position,
which represents the mean drift velocity caused by oscillating winds near the surface \cite{Thorpe2004}. 
The action integral \eqref{Lag-det-CL} contains the difference of the
kinetic and potential energies, plus a ``${\bf J} \cdot {\bf A}$" coupling of the mass current ${\bf J}=D \bu^L(\bx,t)$ to 
two \emph{constant,  spatially-dependent velocity fields} ${\bf A}_1(\bx)$ and ${\bf A}_2(\bx)$. The first of these is ${\bf A}_1(\bx)=-\bu^S(\bx)$, representing the spatially-dependent boost of the inertial frame into a frame moving with minus the Stokes drift velocity in WCI, from which $-D\bu^L\times {\rm curl}\bu^S(\bx)$ arises as the CL \emph{vortex force} in the fluid motion equation.
The other constant velocity field is ${\bf A}_2(\bx)=\bR(\bx)$, representing the rotation velocity relative to the inertial frame, from which $D\bu^L\times{\rm curl}\bR=D\bu^L\times (2\sym{\Omega})$ arises as the Coriolis force in the fluid motion equation. The action integral \eqref{Lag-det-CL} also contains the incompressibility constraint $D=1$ imposed by the pressure $p$ as a Lagrange multiplier.

\paragraph{Passing to the Hamiltonian side.}
The variation of the Lagrangian in \eqref{Lag-det-CL} with respect to the transport velocity $\bu^L(\bx,t)$ produces the total Eulerian momentum density for GLM in the presence of the Stokes drift, cf. equation \eqref{Lag-var},
\begin{align}
\bm(\bx,t) := \frac{\delta \ell}{\delta \bu^L} = D\big( \bu^L -\bu^S(\bx) + \bR(\bx)\big)\,,
\label{Lag-var-CL}
\end{align}
in which $\bu^S(\bx)$ is the prescribed Stokes drift velocity. 
One may also compare the momentum density shifts in equation \eqref{Lag-var-CL} with the angular momentum shift due to fixed rotation of the reference frame for a rigid body in equation \eqref{Rot-mom-shift}.

Next, we will show that the Hamiltonian dynamics for the momentum density in equation \eqref{Lag-var-CL} recovers the CL motion equation for the Lagrangian mean transport velocity, $\bu^L(\bx,t) $.   

The Hamiltonian corresponding to the Lagrangian in \eqref{Lag-det-CL} is given by the Legendre transform, cf. equation \eqref{Ham-det},
\begin{align}
\begin{split}
H(\bm,D,b)) &= \!\!\int_\mathcal{D}\bm\cdot\bu^L \,d^3x - \ell(\bu^L,D,b,N,\phi:p)
\\&= \!\!\int_\mathcal{D} \bigg[
\frac{1}{2D}\big| \bm + D\bu^S(\bx) - D\bR \big|^2 +  gDbz + p(D-1)  \bigg]d^3x\,.
\end{split}
\label{Ham-det-CL}
\end{align}
The Hamiltonian in \eqref{Ham-det-CL} is the sum of the kinetic and potential energies of the fluid. The variational derivatives are given by
\begin{align}
\delta H(\bm,D,b)) &= \!\!\int_\mathcal{D} 
\bu^L\cdot \delta\bm + \delta D\Big(gbz +p - \frac12|\bu^L|^2 +\bu^L\cdot\bu^S(\bx) - \bu^L\cdot\bR(\bx)\Big) + (gDz)\delta b
\,d^3x\,.
\label{Ham-var-CL}
\end{align}

We may now write the CL equations  in Hamiltonian form by 
using a block-diagonal Poisson matrix operator, cf. equation \eqref{m+D+b-eqns},
 \begin{align}
\partial_t
\begin{bmatrix}
m_i \\ D \\ b 
\end{bmatrix}
= -
\begin{bmatrix}
\partial_j m_i + m_j \partial_i & D\partial_i & -b_{,i} 
\\
\partial_jD & 0  & 0 
\\
b_{,j} & 0  & 0 
\end{bmatrix}
\begin{bmatrix}
\delta H / \delta m_j 
\\
\delta H / \delta D  
\\
\delta H / \delta b 
\end{bmatrix}.
\label{m+D+b-eqns-CL}
\end{align}

\paragraph{Deterministic CL equations for EB fluid flow.}
Upon expanding out the Hamiltonian equations in \eqref{m+D+b-eqns-CL}, the dynamics of the EB fluid with these additional wave variables is found to obey the following system of equations, cf. equation set \eqref{m+D+b-eqns},
%\footnote{We will derive the stochastic versions of equations \eqref{SVP3-det} in proving Theorem \ref{SALT-SNWP-Thm}. There, equations \eqref{SVP3-det} will re-emerge when the noise is absent. }
%
\begin{align}
\begin{split}
&\partial_t  \bm + (\bu^L\cdot \nabla)\, \bm + (\nabla \bu^L)^T \cdot \bm + \bm \,{\rm div}\bu^L
= 
D \nabla \pi_{CL} + Dgz\nabla b
\,,\\&
\partial_t  D + {\rm div}(D\bu^L)  = 0
\,,\qquad
D=1
\,,\qquad
\partial_t  b + \bu^L\cdot \nabla  b = 0
\,.\end{split}
\label{SVP3-det-CL}
\end{align}
The Eulerian momentum density, $\bm$, and the Bernoulli function, $\pi$, in these equations are defined by the following variational derivatives of the CL Lagrangian in \eqref{Lag-det-CL}, cf. equation \eqref{m&pi-defs},
\begin{align}
\bm := \frac{\delta  \ell}{\delta  \bu^L}  = D (\bu^L - \bu^S(\bx) + \bR(\bx)) 
\,,\qquad
\pi_{CL} :=  \frac{\delta  \ell}{\delta  D} = \frac12 |\bu^L|^2 - \bu^L\cdot \bu^S +  \bu^L\cdot \bR - gbz - p
\,.
\label{m&pi-defs-CL}
\end{align}
The motion equation for WCI in equation \eqref{SVP3-det-CL} implies the following Kelvin circulation dynamics for the Eulerian momentum per unit mass,  compared with the GLM equation \eqref{Kelvin-GLM},
\begin{align}
\begin{split}
\frac{d}{dt} 
\oint_{c(u^L)} \frac1D\frac{\delta  \ell}{\delta  \bu^L}\cdot d\bx 
&=
\oint_{c(u^L)} (\p_t+\mathcal{L}_{u^L})\bigg(\Big(\bu^L - \bu^S(\bx) + \bR(\bx) \Big)\cdot d\bx \bigg)
\\&= 
\oint_{c(u^L)} \nabla\pi \cdot d\bx
+
\oint_{c(u^L)} \underbrace{\
 gz \nabla b \cdot d\bx \
 }_{\hbox{Buoyancy}}
\,.
\end{split}
\label{Kelvin-CL}
\end{align}
Equation \eqref{Kelvin-CL} is Newton's $2^{\rm nd}$ Law for the time rate of change of the total Eulerian mean momentum per unit mass $\bm/D$ of a body whose mass is distributed  on a closed loop $c(u^L)$ moving with the Lagrangian mean velocity $\bu^L$. According to equation \eqref{Kelvin-CL}, the Stokes drift velocity in Newton's Law for this model appears as an addendum to the Coriolis force. The Stokes drift velocity appears in the usual form of the fluid equations as 
\begin{align} 
\begin{split}    
\partial_t \bu^L - \bu^L\times {\rm curl}\big(\bu^L - \bu^S(\bx) + \bR(\bx)\big) 
&= -\,\frac{1}{D}\nabla \Big( p - \frac12 |\bu^L|^2 \Big) - gb \mathbf{\hat{z}}  
\, ,   \\
\partial_t D + {\rm div}(D\bu^L) &= 0\, ,   \quad \hbox{with}\quad D=1
\, ,   \\
\partial_t b + \bu^L\cdot \nabla b &= 0\, 
\end{split}   
\label{CL-eqns}   
\end{align}
Upon defining the Coriolis parameter as $2\sym{\Omega}={\rm curl} \bR(\bx))$, the motion equation becomes
\begin{align} 
\partial_t \bu^L - \bu^L\times {\rm curl}\bu^L - \bu^L\times 2\sym{\Omega}
= -\,\frac{1}{D}\nabla \Big( p - \frac12 |\bu^L|^2 \Big) 
\underbrace{\
- \, \bu^L\times {\rm curl}\bu^S(\bx) \
}_{\hbox{CL Stokes force}}
 - gb \mathbf{\hat{z}}   \,.
\label{CL-mot-eqn}   
\end{align}
To see the potential vorticity (PV) conservation for CL, we rewrite the dissipative CL motion equation \eqref{CL-mot-eqn} 
in terms of the \emph{Eulerian mean velocity} defined to be ${\bu}:= \bu^L - \bu^S(\bx)$,
\begin{align} 
\partial_t {\bu} - \bu^L \times \sym{\varpi} + \nabla (p - \frac12 |\bu^L|^2) =  - \,gb \mathbf{\hat{z}}
 \,.
\label{CL-mot-eqn2}
\end{align}
where we have set $D=1$ and defined $\sym{\varpi} = {\rm curl} \,({\bu} + \bR(\bx))$ as the total Eulerian mean vorticity. It follows that
\begin{align} 
\partial_t \sym{\varpi} - {\rm curl} (\bu^L \times \sym{\varpi} )  
=  -\,g \mathbf{\hat{z}} \times\nabla b \,.
\label{CL-mot-eqn3}
\end{align}
Consequently, the Craik-Leibovich theory conserves Eulerian mean potential vorticity (PV) on Lagrangian particles. Namely,
\begin{align} 
\partial_t Q + \bu^L\cdot \nabla Q = 0\,,
\label{CL-PV}
\end{align}
where PV is defined as $Q := \nabla b \cdot \sym{\varpi}$.

\begin{remark}
Apparently, the difference between CL and GLM fluid dynamics resides in how the modelling  choice between Eulerian and Lagrangian velocity averaging affects the Kelvin circulation theorem. Eulerian averaging affects the Eulerian velocity 1-form in the \emph{circulation integrand} in Kelvin's theorem, while Lagrangian averaging affects the material velocity of the \emph{circulation loop}. This means that the implementation of stochasticity for the CL equations will differ in the same way. In fact, there may be a related modelling choice to be made between It\^o and Stratonovich stochasticity, in choosing between Eulerian and Lagrangian implementations of stochasticity, for example, in the pursuit of uncertainty quantification \cite{Holm2020}. 
\end{remark}

%%%%%%%%%%%%%%%%%%%%%%%%%%%%%%%%%%%%%%%%%
\begin{remark}[Comparing the GLM and CL models for 3D EB flow]

Compared to the action integral for the corresponding GLM theory in \eqref{Lag-det}, the Lagrangian in the action integral \eqref{Lag-det-CL} for the CL theory replaces the current-boosted wave dynamics in the phase-space Lagrangian in the second line of \eqref{Lag-det} by the time-independent, prescribed boost of velocity of the inertial frame by ${\bf A}_1(\bx)=-\bu^S(\bx)$. 
That is, the action integral \eqref{Lag-det-CL} for the CL theory places the Stokes drift velocity $-\bu^S(\bx)$ and the velocity of the rotating frame  $\bR(\bx)$ onto the same footing. Namely, these two quantities are both regarded as velocity boosts into a moving reference frame relative to which the velocity of the current will be defined. As we have seen, in the GLM formulation the waves propagate in the frame of the Lagrangian mean current velocity, which itself flows relative to the rotating frame. However, in the Craik-Leibovich (CL) model, the current flows in the reference frame of the sum of the rotation velocity minus the prescribed Stokes drift velocity which is designed to model the effects of the generation of wave fluctuations on the sea surface due to the external wind. Thus, to compare the deterministic GLM and CL models one should modify the GLM action integral in \eqref{Lag-det} to include the Stokes drift velocity $-\bu^S(\bx)$ boost. In this case, the action integral for the deterministic Stokes-boosted GLM model of the 3D flow of EB fluid will be 
\begin{align}
\begin{split}
S = \int_{t_1}^{t_2}\ell(\bu^L,D,b,N,\phi:p)\,dt
&= \int_{t_1}^{t_2}\!\!\int_\mathcal{D} \bigg[
\frac{D}{2}\big| \bu^L \big|^2 + D\bu^L\cdot \bR(\bx) 
- D \bu^L \cdot \bu^S(\bx) - gDbz - p(D-1) \bigg]d^3x
\\&\hspace{2cm}
- \int_{t_1}^{t_2}\!\!\int_\mathcal{D}  N(\partial_t\phi  + \bu^L\cdot\nabla \phi)\,d^3x + \int_{t_1}^{t_2} H_W(N,\bk) \,.
\end{split}
\label{Lag-det-GLMCL}
\end{align}
Then, because of the non-acceleration result in corollary \ref{nonaccel-result} for GLM with incompressible flow, one can expect that the results of the CL model and the GLM model when Stokes drift velocity is included will largely coincide for the 3D flow of EB fluid. 

\end{remark}
%%%%%%%%%%%%%%%%%%%%%%%%%%%%%%%%%%%%%%%%%

\section{Deriving the OU Craik--Leibovich (OU CL) equations}\label{sec:stoch-CL}

We have seen that the CL model obtains the Eulerian mean fluid momentum density in the rotating frame by subtracting the prescribed Stokes velocity $\bu^S(\bx)$ from the Lagrangian mean transport velocity $\bu$ and adding the velocity of the rotating frame, $\bR(\bx)$. This defines the relative Eulerian momentum density of the fluid as%
\begin{align}
\bm(\bx,t) := \frac{\delta \ell}{\delta \bu} = D\big( \bu -\bu^S(\bx) + \bR(\bx)\big)\,.
\label{CL-rel-momap}
\end{align}
There are several likely sources of uncertainty in the CL model. First is the effect of the unsteady wind forcing typical of natural conditions. Second is the delay of the drift velocity in response to changes in the wind conditions \cite{Thorpe2004}. Yet another another likely source of uncertainty in the CL model lies in errors in the observational determination of the Stokes drift velocity, $\bu^S(\bx)$, \cite{Klein-OceanData2019,vdBBstokes2018}. The goal of this section is to introduce a theoretical framework for quantifying the uncertainty in the solution of the CL equations due to the uncertainty in the Stokes drift, $\bu^S$. For this, we introduce  a probabilistic aspect into the Stokes drift frame velocity in the action integral for Hamilton's principle for fluid dynamics in \eqref{Lag-det-CL}. In exploring this probabilistic aspect, we will neglect the effects of rotation, $\bR(\bx)$; so, we can focus on the effects of uncertainty in the Stokes drift velocity. We will also generally ignore the effects of stochasticity in the transport velocity $\bu$ (SALT) except for taking one passing opportunity to include it in remark \ref{op-SALT}. 

In particular, we consider ideal incompressible 3D fluid motion in the frame of motion with velocity $-\,\bu^S(\bx)N_t $, where $\bu^S(\bx)$ is the prescribed deterministic divergence-free Stokes mean drift velocity and $N_t$ is obtained as the solution path of the Ornstein-Uhlenbeck (OU) stochastic process \cite{OU-ref}
\begin{align}
{\rm \textcolor{red}d}N_t = \theta(\mean{N} - N_t)dt + \sigma dW_t\,,
\label{OU-1}
\end{align}
with long-term mean $\overline{N}$, and real-valued constants $\theta$ and $\sigma$. 
The solution path of the stationary Gaussian-Markov OU process is known to be an ordinary scalar function of time $N_t$ defined by
\begin{equation}
N_t = e^{-\theta t}N_0 + (1 - e^{-\theta t})\overline{N}
+ e^{-\theta t} \sigma \int_0^t e^{\theta s}dW_s
\,,
\label{OUsoln-1}
\end{equation}
in which one may assume an initially normal distribution, 
$N(0)\approx {\cal N}(\overline{N},\sigma^2/(2\theta))$, with mean $\overline{N}$ and variance $\sigma^2/(2\theta)$. 
Thus, we model uncertainty in the prescribed mean drift velocity by multiplying  $-\bu^S(\bx)$ by the OU process in $-\,\bu^S(\bx)N_t $. 

The corresponding extension of the CL model for the 3D flow of EB fluid is obtained as an Euler-Poincar\'e equation for Hamilton's principle $\delta S=0$ with action integral given by, cf. equation \eqref{Lag-det-CL},
\begin{equation}
S = \int_{t_1}^{t_2} \int  \left[ \frac12 D |\bu|^2 
 - D \bu \cdot \bu^S(\bx)N_t - g b D  z
 - p(D-1)\right]\,d^3 x\,dt\,.
\label{Lag-OU-CL}
\end{equation}
Here, the Euler-Poincar\'e equations are obtained from varying the action as 
\begin{align}
\begin{split}
0 = \delta S 
&=
 \int \bigg[ \left\langle D(\mathbf{u} - \bu^S(\bx) N_t,\delta \mathbf{u}\right\rangle + \left\langle \frac{1}{2}|\mathbf{u}|^2 
 - \mathbf{u}\cdot \bu^S(\bx) N_t -gbz - p,\delta D\right\rangle 
\\& \hspace{3cm}
-gDz\,\delta b + \langle 1 - D, \delta p\rangle \bigg]\,dt\,,
\end{split}
\label{HP-OUCL}
\end{align}
where the variations are given in terms of a smooth vector field $w$ by \cite{HMR1998}
\begin{align}
\delta u = \partial_t w - {\rm ad}_u w
\,,\quad
\delta D = - \mathcal{L}_wD
\,,\quad
\delta b = - \mathcal{L}_wb
 \,.
  \label{def-var}
\end{align}

Now, the momentum 1-form density is defined by
\begin{align}
m:= \frac{\delta\ell}{\delta \mathbf{u}} 
=\mathbf{m}\cdot d\mathbf{x}\otimes d^3x 
:= D\big(\mathbf{u} {-}  \bu^S(\bx) N_t \big)\cdot d\mathbf{x}\otimes d^3x
 =: D \mathbf{v}\cdot d\mathbf{x}\otimes d^3x
 \,.
  \label{def-m}
\end{align}
Thus, we have $\mathbf{m} = D (\mathbf{u} {-}  \bu^S(\bx) N_t )$, so the term involving the Stokes velocity $\bu^S(\bx) $ is to be regarded as a component of the total Eulerian momentum, $\mathbf{m}(\bx,t)$. 
The last term in \eqref{def-m} defines a counterpart to the deterministic Craik-Leibovich notation, as follows,
\begin{align}
\mathbf{v} = \mathbf{u} - \bu^S(\bx)N_t 
\iff 
\bu^E(\bx,t) = \bu^L(\bx,t) - \bu^S(\bx)
\,.
\label{def-v}
\end{align}
The Euler-Poincar\'e motion equation resulting from Hamilton's principle \eqref {HP-OUCL} with variations given in  \eqref{def-var} is
\begin{align}
{\sf \textcolor{red}d}m + \pounds_{u}m\,dt +
d\left(p + \mathbf{u}\cdot\bu^S(\bx)N_t 
 - \frac{1}{2}|\mathbf{u}|^2 \right)\otimes  Dd^3x\,dt - gbd z \otimes  Dd^3x\,dt &=0\,.
 \label{SCL-MotEqn}
\end{align}
Thus, by applying the definitions of $\bm$ in equation \eqref {def-m} and of the OU process  in \eqref{OU-1}, the co-vector quantity $\bv:=\bm/D$ in \eqref{def-v} is found from \eqref{SCL-MotEqn} to satisfy 
\begin{align}
\big({\sf \textcolor{red}d} + \pounds_{u}\,dt \big) (\bv\cdot d\bx)
+ d\left(p + \mathbf{u}\cdot\bu^S(\bx)N_t 
 - \frac{1}{2}|\mathbf{u}|^2 \right) \,dt - gbd z\,dt &=0\,.
 \label{SCL-MotEqn-v}
\end{align}
Equation \eqref{SCL-MotEqn-v} may also be written equivalently as the following OU PDE, 
\begin{align}
{\sf \textcolor{red}d}\bv + \bigg( - \bu\times {\rm curl} \bv 
+ \nabla\Big(  p + \frac{1}{2}|\mathbf{u}|^2\Big)  + gb\nabla z\bigg)dt  = 0
\,,
 \label{SCL-MotEqn-vec}
\end{align}
where we have used the continuity equation for the volume element, $D$, 
\begin{align} 
{\sf \textcolor{red}d} D + {\rm div}(D \mathbf{u})dt = 0\,,
 \label{Contin-Eqn}
\end{align}
which implies ${\rm div}\bu=0$ when the constraint $D-1=0$ is enforced by the variation of 
the Lagrange multiplier $p$, the pressure, in \eqref{HP-OUCL}.
Now from \eqref{def-v} we have 
\begin{align}
{\sf \textcolor{red}d}\bv = {\sf \textcolor{red}d} \bu  - \bu^S(\bx) {\sf \textcolor{red}d}N_t 
\,. \label{def-dm}
\end{align}
Consequently, for ${\rm div}\bu^S(\bx)=0$, the the pressure, $p$, may be found by imposing ${\sf  {\rm div} (\textcolor{red}d}\mathbf{u}) = 0$ at each time step, which by equations \eqref{SCL-MotEqn-vec} and \eqref{def-dm} implies
\begin{align}
{\rm div}\Big(- \bu\times {\rm curl} \bv 
+ \nabla\Big(p + \frac{1}{2}|\mathbf{u}|^2\Big) + gb\nabla z \Big)= 0
\,. \label{p-eqn}
\end{align}
These calculations may be summarised in the following theorem. 
\begin{theorem}[OU CL wave dynamics via Hamilton's principle $\delta S = 0$ for action integral  \eqref{Lag-det-CL}]
\label{SALT-OU-CL-Thm}$\,$\rm

The probabilistic CL model with OU wave dynamics is governed by the following Euler-Poincar\'e motion equation obtained from Hamilton's principle with the action integral \eqref{Lag-det-CL}, 
\begin{align}
{\sf \textcolor{red}d}\bv + \bigg( - \bu\times {\rm curl} \bv 
+ \nabla\Big(  p + \frac{1}{2}|\mathbf{u}|^2\Big)  + gb\nabla z\bigg)dt  = 0
\,,
 \label{SCL-MotEqn-vec-thm}
\end{align}
for $\mathbf{v} = \mathbf{u} {-} \bu^S(\bx)N_t $ defined in \eqref{def-v}, solution $N_t$ of the OU process \eqref{OU-1}, and divergence-free velocities of Lagrangian mean transport velocity $\bu$ and Stokes mean drift $ \bu^S(\bx)$. The auxiliary advection equations for the volume element, $D$, and buoyancy $b$ are
\begin{align} 
{\sf \textcolor{red}d} D + {\rm div}(D \mathbf{u})\,dt = 0
\quad\hbox{and}\quad
{\sf \textcolor{red}d} b + \mathbf{u}\cdot\nabla b\,dt = 0
\,. 
 \label{SCL-aux-eqns-thm}
 \end{align}
\end{theorem}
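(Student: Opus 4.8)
The plan is to treat Theorem~\ref{SALT-OU-CL-Thm} as a pathwise instance of the Euler--Poincar\'e reduction of \cite{HMR1998}, applied to the action integral \eqref{Lag-OU-CL} in which the scalar OU process $N_t$ enters only as an exogenous, prescribed time-dependent coefficient multiplying the fixed divergence-free Stokes vector field $\bu^S(\bx)$. First I would record the four elementary variational derivatives of the reduced Lagrangian $\ell$ in \eqref{Lag-OU-CL}:
\[
\frac{\delta\ell}{\delta\bu}=D\big(\bu-\bu^S(\bx)N_t\big)=:\bm,\qquad
\frac{\delta\ell}{\delta D}=\tfrac12|\bu|^2-\bu\cdot\bu^S(\bx)N_t-gbz-p,\qquad
\frac{\delta\ell}{\delta b}=-gDz,\qquad
\frac{\delta\ell}{\delta p}=1-D.
\]
The variation $\delta p$ immediately gives the constraint $D=1$, hence ${\rm div}\,\bu=0$; the variations $\delta D$ and $\delta b$ with the constrained forms \eqref{def-var} reproduce the continuity and buoyancy advection laws \eqref{SCL-aux-eqns-thm}.

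The core step is to substitute $\delta\bu=\partial_t w-{\rm ad}_\bu w$ into $\langle \delta\ell/\delta\bu,\delta\bu\rangle$ and integrate by parts in space and time. Integration by parts in time transfers $\partial_t$ onto $\bm$; since $\bm=D\bu-D\bu^S(\bx)N_t$ and $N_t$ is a semimartingale, this is read as the stochastic differential ${\sf d}\bm$, with ${\sf d}\big(D\bu^S(\bx)N_t\big)$ expanded by the Leibniz rule (the choice between It\^o and Stratonovich is immaterial here, since $N_t$ solves its own autonomous SDE and is not a transport velocity). Integration by parts in space, together with the identity $\langle \bm,-{\rm ad}_\bu w\rangle=\langle \mathcal{L}_\bu\bm,w\rangle$ and the diamond notation of Definition~\ref{diamond-def}, collects the remaining contributions as $(\delta\ell/\delta D)\diamond D+(\delta\ell/\delta b)\diamond b$. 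Arbitrariness of $w$ then yields the Euler--Poincar\'e motion equation \eqref{SCL-MotEqn} for the $1$-form density $\bm\cdot d\bx\otimes d^3x$.

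To finish, I would divide \eqref{SCL-MotEqn} by $D$, use the continuity equation \eqref{Contin-Eqn} to pass $1/D$ through the Lie derivative, and rewrite $\mathcal{L}_\bu(\bv\cdot d\bx)$ in the vector-calculus form $\big(-\bu\times{\rm curl}\,\bv+\nabla(\bu\cdot\bv)\big)\cdot d\bx$ as in \eqref{Lie-det-CL}, absorbing the exact parts $\nabla(\bu\cdot\bv)$ and $\nabla\big(\tfrac12|\bu|^2-\bu\cdot\bu^S(\bx)N_t\big)$ into a single modified pressure; this produces \eqref{SCL-MotEqn-vec-thm}. Setting $D=1$ and taking the divergence of the motion equation gives the Poisson--Neumann problem \eqref{p-eqn} that determines $p$ instantaneously, after noting via ${\sf d}\bv={\sf d}\bu-\bu^S(\bx){\sf d}N_t$ that ${\rm div}\,{\sf d}\bu=0$ is equivalent to ${\rm div}\,{\sf d}\bv=0$ once ${\rm div}\,\bu^S=0$.

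The main obstacle is conceptual rather than computational: justifying that the deterministic Euler--Poincar\'e reduction of \cite{HMR1998} applies verbatim to a Lagrangian carrying the random time-dependent coefficient $N_t$. The point to make carefully is that $N_t$ is \emph{not} a configuration variable and is \emph{not} varied in Hamilton's principle; it is frozen along each sample path, so the reduction is purely deterministic for each realisation, and the only place the OU noise enters the resulting SPDE is through the stochastic Leibniz expansion of ${\sf d}\bm$. Since $\bu^S(\bx)$ multiplies a \emph{pure-time} process and there is no stochastic transport, no It\^o--Stratonovich correction arises in the spatial part, and the pairing identities used in the reduction are unaffected. A secondary, routine check is that imposing $D=1$ is consistent with \eqref{Contin-Eqn} under the stochastic evolution, i.e.\ that ${\rm div}\,\bu=0$ is propagated, which holds because the randomness appears only in the momentum shift and not in the transport velocity $\bu$.
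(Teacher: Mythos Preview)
Your proposal is correct and follows essentially the same route as the paper: take the Euler--Poincar\'e variations \eqref{def-var} in the action \eqref{Lag-OU-CL}, obtain the motion equation \eqref{SCL-MotEqn} for $\bm=D\bv$, divide through by $D$ using the continuity equation, and rewrite the Lie derivative via the vector identity \eqref{Lie-det-CL} so that the $\nabla(\bu\cdot\bv)$ term combines with $\nabla\big(\tfrac12|\bu|^2-\bu\cdot\bu^S N_t\big)$ to leave $\nabla\big(p+\tfrac12|\bu|^2\big)$. Your explicit remark that $N_t$ is frozen (not varied) in Hamilton's principle, so the reduction is pathwise deterministic with stochasticity entering only through ${\sf d}\bm$, is a helpful clarification that the paper leaves implicit.
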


\begin{remark}[Determining the pressure, $p$]
In the motion equation \eqref{SCL-MotEqn-vec-thm},  $N_t$ is the solution \eqref{OUsoln-1} of the OU process \eqref{OU-1}, and the prescribed Stokes mean drift velocity $\bu^S(\bx)$ is divergence-free. Hence, the Lagrangian mean transport velocity $\bu$ remains divergence-free, $\nabla\cdot \mathbf{u} = 0$, as a result of the constraint $D=1$ imposed by the Lagrange multiplier $p$, the pressure, in Hamilton's principle \eqref{Lag-det-CL} in combination with the auxiliary continuity equation in \eqref{SCL-aux-eqns-thm} for the volume element, $D$. The equation for the pressure, $p$, arises from the divergence of the motion equation \eqref{SCL-MotEqn-vec-thm}, as the Poisson equation, 
\begin{align} 
-\,\Delta \Big(p + \frac{1}{2}|\mathbf{u}|^2\Big) 
=
{\rm div} \big( \bu\times {\rm curl} \bv  - gb\nabla z \big)
\,, \label{p-eqn}
 \end{align}
 with Neumann boundary conditions obtained by evaluating the normal component of the motion equation \eqref{SCL-MotEqn-vec-thm} at the boundary of the flow domain. 
Then, since $\mathbf{v} = \mathbf{u} - \bu^S(\bx)N_t $ by \eqref{def-v}, in principle, the pressure should be written as $p=p_0\,dt + p_1 N_t$ and the quantities $p_0$ and $p_1$ should be determined separately in the Poisson equation, \eqref{p-eqn}. However, we shall forgo this technical feature in favour of keeping the notation transparent. For a full explanation of semimartingale-driven variational principles, see \cite{SC2020}.
\end{remark}

\paragraph{Three equivalent forms of the OU CL equations.}
We will write the motion equation \eqref{SCL-MotEqn} in three equivalent vector forms and 
discuss each form separately to extract the information it presents most conveniently. 
A different parsing of the information in the \emph{deterministic} version of the CL motion equation \eqref{SCL-MotEqn} has been 
proposed in \cite{SuzF-K-StokesForces2016} by regarding the Stokes terms as components of \emph{forces} 
rather than contributions to the \emph{momentum} and the circulation, as done here. 

(1) The first of these three equivalent forms of the motion equation \eqref{SCL-MotEqn} is already in equation \eqref{SCL-MotEqn-vec-thm} in Theorem \ref{SALT-OU-CL-Thm}. This form is
\begin{align} 
{\sf \textcolor{red}d}\mathbf{v} - \mathbf{u}\times\curl\mathbf{v}  \,dt
+ \nabla\left(p + \frac{1}{2}|\mathbf{u}|^2\right) \,dt = -\,gb\nabla z \,dt
\,,
\label{SCL-MotEqn1}
\end{align}
written in terms of the pressure $p$ and the Lagrangian mean transport velocity $\mathbf{u}$ and the Eulerian mean velocity $\mathbf{v} = \mathbf{u} {-} \bu^S(\bx)N_t $ in a reference frame moving with velocity $-\,\bu^S(\bx)N_t $. The form \eqref{SCL-MotEqn1} expresses the Kelvin circulation theorem for OU CL wave dynamics as
\begin{align}
{\sf \textcolor{red}d} \oint_{c(\mathbf{u})} \mathbf{v}\cdot d\mathbf{x} = -\oint_{c(\mathbf{u})} gbd z\,dt \,,  
\label{Kel-thm-OU}
\end{align}
with transport (Lagrangian) velocity $\mathbf{u}$ and transported (Eulerian) velocity $\mathbf{v}= \mathbf{u} - \bu^S(\bx)N_t $. Equation \eqref{Kel-thm-OU} recovers the Kelvin circulation theorem for the Craik-Leibovich theory, upon identifying $\mathbf{u}=\mean{\mathbf{u}}_L$ as the Lagrangian mean velocity and $\mathbf{v}=\mean{\mathbf{u}}_E$ as the Eulerian mean velocity. 

Alternatively, one may take the curl of equation \eqref{SCL-MotEqn1}, to write the pathwise equation for total vorticity $\sym{\omega}:={\rm curl}\bv$ as
\begin{align}
{\sf \textcolor{red}d} \sym{\omega} + (\bu\cdot\nabla \sym{\omega} - \sym{\omega}\cdot\nabla \bu)\,dt
= -\,g\nabla b\times\nabla z\,dt \,, 
\label{vort-eqn-OU}
\end{align}
as well as the conservation of potential vorticity (PV) on Lagrangian particles. Namely,
\begin{align} 
{\sf \textcolor{red}d} Q + \bu \cdot \nabla Q\,dt = 0\,,
\label{CL-PV}
\end{align}
where PV is defined as $Q := \nabla b \cdot \sym{\omega}$.

\begin{remark}[OU CL with SALT]\label{op-SALT}
This first form of the OU CL equations \eqref{SCL-MotEqn1} admits stochastic advection by Lie transport (SALT), as well as the OU process. Namely, by following \cite{Holm2015} we find a modification of the Kelvin circulation theorem in \eqref{Kel-thm-OU} given by
\begin{align}
{\rm \textcolor{red}d} \oint_{c(\mathbf{\widetilde{u}})} \mathbf{v}\cdot d\mathbf{x} =  -\oint_{c(\mathbf{\widetilde{u}})} gbd z\,dt \,,  
\label{CL-circ}
\end{align}
with stochastic transport velocity in the SALT form \cite{Holm2015},
\begin{align*}
\mathbf{\widetilde{u}}:= \mathbf{u}(\bx,t)dt + \sum {\boldsymbol \xi}(\mathbf{x})\circ dW_t
\,.
\end{align*}
The corresponding SALT CL motion equation is given by
\begin{align} 
{\rm \textcolor{red}d}\mathbf{v} - \mathbf{\widetilde{u}}\times\curl\mathbf{v} 
+ \nabla\left( {\rm \textcolor{red}d}p {-} \frac{1}{2}|\mathbf{u}|^2dt + \mathbf{u}\cdot \mathbf{\widetilde{u}} \right) 
&= -\,g\nabla b\times\nabla z\,dt\,,
\label{SCL-MotEqn+SALT}
\end{align}
in which now the pressure ${\rm \textcolor{red}d}p$ is a semimartingale, see \cite{SC2020}. 
The corresponding vorticity equation keeps its form, as in \eqref{vort-eqn-OU},
\begin{align}
{\rm \textcolor{red}d}\sym{\omega} + \mathbf{\widetilde{u}}\cdot\nabla \sym{\omega}
 - \sym{\omega}\cdot\nabla \mathbf{\widetilde{u}}= 0= -\,g\nabla b\times\nabla z\,dt\,,
\label{vort-eqn-SALT}
\end{align}
as well as the conservation of potential vorticity (PV) on Lagrangian particles. Namely,
\begin{align} 
{\rm \textcolor{red}d} Q + \mathbf{\widetilde{u}}\cdot \nabla Q\,dt = 0\,,
\label{CL-PV}
\end{align}
where PV is still defined as $Q := \nabla b \cdot \sym{\omega}$.

The introduction of SALT in \eqref{SCL-MotEqn+SALT} yields the same equations as for the Richardson triple discussed in \cite{{Holm-RichTriple2019}}. Consequently, we may refer to \cite{{Holm-RichTriple2019}} for more discussion and further analysis of OU CL dynamics with SALT.   
\end{remark}

(2) The second of the three equivalent forms of the motion equation \eqref{SCL-MotEqn} we consider is reminiscent of an OU version of the electromagnetic Lorentz force on a fluid plasma, in the ``hydrodynamic'' gauge, $\phi + \bu\cdot \mathbf{A}=0$, 
which is the Coulomb gauge in the frame comoving with the fluid. Namely, 
\begin{align}
\big(\mathbf{E} + \mathbf{u}\times \mathbf{B} \big)\cdot d\bx
= \Big(-\,{\sf \textcolor{red}d}\mathbf{A} + \mathbf{u}\times\curl \mathbf{A} - \nabla (\bu\cdot \mathbf{A}) \Big)\cdot d\bx
= - \Big({\sf \textcolor{red}d} + \mathcal{L}_u \Big)(\mathbf{A}\cdot d\bx)
\label{SCL-MotEqn2}
\end{align}
with $\mathbf{A} = -\,\bu^S(\bx)N_t $ and  ${\sf \textcolor{red}d}\mathbf{A} = -\,\bu^S(\bx) {\sf \textcolor{red}d} N_t $.
Consequently, 
\begin{align} 
\Big({\sf \textcolor{red}d} + \mathcal{L}_u \,dt \Big)(\bu\cdot d\bx)
+  \big(\nabla p + g b\nabla z\big)\cdot d\bx \,dt
&=  
\Big({\sf \textcolor{red}d} + \mathcal{L}_u \Big)\big(\bu^S(\bx)N_t\cdot d\bx\big)
\,,
\label{SCL-MotEqn2}
\end{align}
which is written only in terms of velocity $\mathbf{u}$ and the OU frame velocity $\bu^S(\bx)N_t =\mathbf{u} - \mathbf{v} $. 
Here, it is not necessary for the pressure to be a semimartingale, because the semimartingale term on the right side of equation \eqref{SCL-MotEqn2} vanishes when the divergence is taken, since ${\rm div}\bu^S(\bx)=0$.

\begin{remark}[OU CL \emph{non-acceleration} theorem]
Vanishing of the right-hand side of the OU CL motion equation \eqref{SCL-MotEqn2} would correspond to the non-acceleration theorem  \eqref{SALT-SNWP-GLM-total}  for GLM. The condition corresponding to equation \eqref{non-accel} for GLM for non-acceleration to occur in the case of OU CL,  is that
\begin{equation}
\Big({\sf \textcolor{red}d} + \mathcal{L}_u \Big)\big(\bu^S(\bx)N_t\cdot d\bx\big) = 0
\,.
\label{frozen-in-uS}
\end{equation}
Thus, enforcing the non-acceleration condition \eqref{frozen-in-uS} on the Craik-Leibovich model would impose conservation of circulation of the Stokes mean drift velocity around a material loop moving  the flow of the Lagrangian mean velocity, $\bu$. That is, 
\[
{\sf \textcolor{red}d} \oint_{c(u)} \bu^S(\bx)N_t\cdot d\bx = 0\,.
\]
On the other hand, the CL model has been derived as an \emph{external} ponderomotive force exerted on the flow as a result of averaging over rapid oscillations imposed near the upper boundary. In contrast, as discussed at the beginning of section \ref{det-GLM-bkgrnd}, the GLM model has been derived by seeking an \emph{internal} ponderomotive force force, which is generated by a fluctuating component of the Lagrangian trajectory in equation \eqref{LM-def-rel}.  Thus, the non-acceleration result for GLM would not be expected to apply either to the CL model, or to its probabilistic counterpart, the OU CL model. 
\end{remark}

(3) The third of the three equivalent forms of the motion equation \eqref{SCL-MotEqn} discussed here introduces an OU version of the usual expression for the Craik--Leibovich `vortex force' given by
\begin{align}
{\sf \textcolor{red}d} \mathbf{v} - \mathbf{v}\times\curl \mathbf{v}\,dt 
+ \nabla\left( p + \frac{1}{2}|\mathbf{v}-\bu^S(\bx)N_t|^2\right) \,dt 
&=- \, \bu^S(\bx)N_t\times\curl\mathbf{v}\,dt
\,,
\label{SCL-MotEqn3}
\end{align}
written only in terms of the \emph{Eulerian} mean velocity $\mathbf{v}$ and the OU frame velocity, $-\,\bu^S(\bx)N_t$.

\begin{remark}[Potential caveat]
The square of the Ornstein-Uhlenbeck process in \eqref{SCL-MotEqn3} is the solution to the following pathwise differential equation
\begin{align}
\begin{split}
\frac12{\sf \textcolor{red}d}N_t^2 &= N_t {\sf \textcolor{red}d} N_t + \sigma^2 \,dt\\
&= \big(\theta N_t(\mean{N} - N_t) + \sigma^2\big)dt + \sigma N_t \, dW_t\,.
\end{split}
\label{OUprocess-squared}
\end{align}
The Stokes drift velocity $\bu^S(\bx)N_t$ in the third equivalent form of the OU CL equations \eqref {SCL-MotEqn3} satisfies a Bernouilli type ODE, which has finite time blow up, when either $\mean{N}$ is negative, or if $\mean{N}$ is positive, and the initial datum is sufficiently negative.
To avoid this issue, one can consider replacing the OU process by a time-integrated OU process. This option will be investigated elsewhere.
\end{remark}

\begin{remark}[Physical interpretation]
From the viewpoint of Kelvin's theorem, fluid parcels are transported by a Lagrangian mean velocity, which is an average of the fluid parcel velocity taken at fixed Lagrangian label. Kelvin's theorem in \eqref{Kel-thm-OU} for the OU CL model states that the circulation integral of the total Eulerian mean velocity $\mathbf{v} = \mathbf{u} - \bu^S(\bx)N_t $ around material loops moving with the Lagrangian mean velocity is generated only by non-vertical buoyancy gradients.  

In the proposed formulation of the OU CL model considered here for modelling uncertainty in the Stokes mean drift velocity on the CL solution, we have replaced the standard CL Stokes mean drift velocity $\bu^S(\bx)$ by an OU process in the integrand of the Kelvin circulation.  In future work, we will explore this direction farther, since uncertainty in the Stokes mean drift velocity is bound to be an issue in the calibration and assimilation of ocean data observed from space \cite{Klein-OceanData2019}. 

\end{remark}

%%%%%%%%%%%%%%%%%%%%%%%%%%%%%%%%%%%%%%%%%
\section{Conclusion}\label{concl-sec}

In this paper, we have modelled multi-scale, multi-physics uncertainty in wave-current interaction (WCI), by introducing stochasticity into the wave dynamics of two classic models of WCI; namely,  the Generalised Lagrangian Mean (GLM) model and the Craik--Leibovich (CL) model.  The two models acquire different types of stochasticity through their respective derivations from Hamilton's principle for different types of Lagrangians.

One main result of the present work is the derivation via Hamilton's principle of a closed dynamical model of wave-current interaction (WCI) which applies to GLM and can be extended into stochastic wave-current dynamics. The closure is governed by the choice of wave Hamiltonian in the phase-space Lagrangian. The wave Hamiltonian is chosen to match the WKB dynamics of the phase and wave action density in the local reference frame of the moving fluid. The model is flexible enough to include a variety of different wave fields, and for the waves and currents to be made stochastic in different ways for testing various causes of uncertainty. The model would apply, for example, as an efficient way of adding nonlinear wave physics which may not have been considered, needed or resolved in a previous model, or in a different regime of operation. 

Many open questions for future research have arisen in developing the stochastic Hamilton's principle framework, which was created primarily for uncertainty quantification in the hybrid wave-current interaction.  
On the Hamiltonian side, for example, the framework developed here for GLM leads to a type of non-canonical Lie--Poisson bracket discovered for superfluid $^4He$ and $^3He$ in \cite{HK1982} which was also observed by Krishnaprasad and Marsden in \cite{KM1987} for the motion of a rigid body with a flexible attachment.   The KM87 theory formulated a Lie group structure which had already been the basis for several useful theories of hybrid plasma-fluid interaction dynamics on the Hamiltonian side \cite{Tronci2010} and its formulation on the Hamilton's principle side has been accomplished in \cite{Close2019,CBT2018,HolmTronci2012}.  Thus, KM87 is a natural Hamiltonian partner for the stochastic WCI hybrid theory which has been developed here on the Lagrangian, or Hamilton's principle, side. Conversely, one may consider passing from the known KM87 Hamiltonian descriptions of hybrid kinetic theory and fluid plasma systems, either to the corresponding derivation on the Hamilton's principle side for additional modelling purposes, or directly to a stochastic Hamiltonian model as in section \ref{SW-WCI-sec}. 

Regarding specific topics for further research, one may consider testing the effectiveness of the model in different situations by investigating other types of WCI for a variety different types of wave physics.  For example, one could develop a self-consistent WCI theory for Kelvin waves propagating on superfluid vortices which are being transported by the surrounding flow. A WCI theory of Alfv\'en waves propagating on dynamics of magnetic field lines in magnetohydrodynamics (MHD) could also be developed, perhaps by following ideas for time-mean oscillation center dynamics for MHD in \cite{SKH1986}. One may also consider introducing this approach for probing the effects of submesoscale physics in oceanography. 

As we have stressed, the main geometric mechanics ideas for our approach to WCI for GLM (including the phase-space Lagrangian approach, of course) were already developed and in effective use in particle-fluid plasma physics at least forty years ago \cite{Dewar1973, RGL1981, ANK-DH1984,SKH1986}. For additional background in this matter, see, e.g., \cite{Brizzard2009,BurbyRuiz2019}. However, the connection of these mainstream geometric mechanics ideas to stochastic methods for uncertainty quantification in WCI for GLM with potential applications in oceanography, for example, has been waiting until now to be made. 

In contrast, the Hamilton's principle for addressing stochastic WCI for the CL model has no wave contribution. However, a change of frame is possible to model the effects of Stokes drift which can be probabilistic with an OU process. The CL model also admits stochastic advection by Lie transport (SALT). In the case of incompressible 3D EB fluid dynamics, the two theories can be made to converge because of the non-acceleration result for the GLM case. 

\subsection*{Acknowledgements}
I am enormously grateful for years of thoughtful discussions and correspondence about continuum dynamics of waves and currents with my friends and colleagues. Particularly in discussing the present work let me mention, B. Chapron, C. J. Cotter, D. Crisan, T. D. Drivas, F. Gay-Balmaz, M. Ghil, J. D. Gibbon, P. Korn, V. Lucarini, E. Luesink, J. C. McWilliams, E. M\'emin, O. Street, S. Takao and C. Tronci. This work was partially supported by EPSRC Standard grant EP/N023781/1 and by ERC Synergy Grant 856408 - STUOD (Stochastic Transport in Upper Ocean Dynamics).

%%%%%%%%%%%%%%%%%%%%%%%%%%%%

\appendix

\section{Dynamical systems analogues of WCI}\label{appendix-A}

\subsection{Gyrostat: Rigid body with flywheel}

As we will see, the rigid body with flywheel along the intermediate principle axis in the body seems to be a closer  analogue to deterministic WCI than the swinging spring does.
Just as for the isolated rigid body, the energy is purely kinetic; so one may define the kinetic energy Lagrangian for this system $L:\,TSO(3)/SO(3)\times TS^1\to \mathbb{R}^3$ as
\begin{eqnarray}
L(\boldsymbol{\Omega},\,\dot{\phi})
=
\frac{1}{2}\lambda_1\Omega_1^2 
+
\frac{1}{2}I_2\Omega_2^2 
+
\frac{1}{2}\lambda_3\Omega_3^2 
+
\frac{1}{2}J_2(\dot{\phi}+\Omega_2)^2 
\,,
\end{eqnarray}
where 
${\boldsymbol{\Omega}}=(\Omega_1,\Omega_2,\Omega_3)$ is the angular velocity vector of the rigid body, $\dot{\phi}$ is the rotational frequency of the flywheel about the  intermediate principal axis of the rigid body,, and 
$\lambda_1$, $I_2$, $J_2$, $\lambda_3$ are positive constants corresponding to the principal moments of inertia, including the presence of the flywheel. Because the Lagrangian is independent of the angle $\phi$, its canonically conjugate angular momentum 
$
{N}:=\partial L/\partial \dot{\phi}
$ will be conserved. 
This suggests a move into the Hamiltonian picture, where the conserved ${N}$ will become a constant parameter. 
\begin{itemize}
\item
If we perform a partial Legendre transform in the flywheel variables $(\phi,\dot{\phi})\in TSO(2)$, we will obtain
\begin{eqnarray}
L(\boldsymbol{\Omega},\,\dot{\phi})
=
\frac{1}{2}\lambda_1\Omega_1^2 
+
\frac{1}{2}I_2\Omega_2^2 
+
\frac{1}{2}\lambda_3\Omega_3^2 
+
N\bigg(\dot{\phi}+\Omega_2 -\frac{N}{2J_2}\bigg) 
\,,
\end{eqnarray}
which is analogous to the phase space Lagrangians in  equations \eqref{Lag-det} and \eqref{Lag-det1}.
\item
Legendre-transforming this Lagrangian allows us to express its  Hamiltonian in terms of the angular momenta 
${\boldsymbol{\Pi}}=\pa L/\pa \boldsymbol{\Omega}\in\mathbb{R}^3$ and ${N}=\pa L/\pa \dot{\phi}  \in\mathbb{R}^1$ of the rigid body and flywheel, respectively,
\begin{eqnarray}
H({\boldsymbol{\Pi}},{N})
&=&
{\boldsymbol{\Pi}}\cdot{\boldsymbol{\Omega}}
+
{N}\dot{\phi}
- L(\boldsymbol{\Omega},\,\dot{\phi})
\label{offset-erg}\\[3pt]
&=&
\frac{\Pi_1^2}{2\lambda_1}
+
\frac{\Pi_3^2}{2\lambda_3}
+
\underbrace{\
\frac{1}{2I_2} \big(\Pi_2 - {N} \big)^2
}_{\hbox{offset along $\Pi_2$}}
 +\ 
\frac{{N}^2}{2}\Big(\frac{1}{I_2}+\frac{1}{J_2}\Big)
\,.
\nonumber
\end{eqnarray}
This Hamiltonian is an ellipsoid in coordinates $\sym{\Pi}\in\mathbb{R}^3$, whose centre is \emph{offset} in the $\Pi_2$-direction by an amount equal to the conserved angular momentum ${N}$ of the flywheel. 

The offset of the energy ellipsoid by ${N}$ along the $\Pi_2$-axis radically alters its intersections with the angular momentum sphere $|\sym{\Pi}|=const$. Its dynamical behaviour, given by motion along these altered intersections is quite different from that of the rigid body, which has no offset of its energy ellipsoid. In particular, the offset due to presence of the flywheel induces an intricate sequence of bifurcations of the equilibrium solutions which do not occur for the rigid body, for $N=0$  \cite{Elipe1997}.

%\todo[inline]{We should use the energy-Casimir method to look at the effects of the wave variables on the equilibria of the EB equations! }

\item
The Poisson bracket in the variables $\sym{\Pi},{N},\phi\in so(3)^*\times T^*S^1$ is a direct sum of the 
rigid-body bracket for $\sym{\Pi}\in so(3)^*\simeq\mathbb{R}^3$ and the canonical bracket for the flywheel phase-space coordinates $({N},\phi)\in T^*S^1$:  
\begin{eqnarray}
\{F\,,\,H\}
&=&
 -\,{\boldsymbol  \Pi} \cdot 
\bigg(\frac{\partial  F }{\partial\boldsymbol{\Pi}}
 \times
\frac{\partial H }{\partial\boldsymbol{\Pi}}\bigg)
+
\frac{\partial F }{\partial \phi}\frac{\partial H }{\partial {N}}
-
\frac{\partial H }{\partial \phi}\frac{\partial F }{\partial {N}}
\,.
\end{eqnarray}
The corresponding Hamiltonian equations may be written in a block-diagonal Poisson matrix form 
which is similar to that in equation \eqref{m+D+b-eqns} for WCI in Euler-Boussinesq equations and in \eqref{m+D-eqns} for WCI in 1D shallow water,
\begin{align}
\frac{d}{dt}
\begin{bmatrix}
\sym{\Pi} \\ \phi \\ N
\end{bmatrix}
= -
\begin{bmatrix}
\sym{\Pi} \times  & 0 & 0
\\
 0  & 0 & -1
 \\
  0  & 1 & 0
\end{bmatrix}
\begin{bmatrix}
\partial H / \partial \sym{\Pi} = \sym{\Omega}
\\
\partial H / \partial \phi =   0
\\
\partial H / \partial N = {J_2}^{-1}N - {I_2}^{-1} \big(\Pi_2 - {N} \big)
\end{bmatrix}.
\label{gyrostat-eqns}
\end{align}
%\todo[inline]{Consider a stochastic axial gyrostat? ${\rm d}h = Hdt + \sym{\Pi}\cdot \sym{\xi}\circ dW_t$ will make $\sym{\Omega}$ become stochastic. The angle $\phi$ will become stochastic by taking ${\rm d}h_W = H_Wdt + N\sigma\circ dB_t$
%\[
%H_W(\phi,N) = N \dot{\phi} - \frac{1}{2}J_2(\dot{\phi}+\Omega_2)^2 
%= \frac{N^2}{2J_2} - N\Omega_2 
%\,,\quad \Omega_2=(\Pi_2-N)/I_2
%\]
%Might be interesting to see what this does to homoclinic orbits. However, my guess is that the invariant measure will again go to a constant and be boring. 
%}

\end{itemize}

\subsection{The deterministic swinging spring} 

A dynamical systems analogue of WCI arises in the oscillation-rotation interaction (ORI) seen in the elastic spherical pendulum, or swinging spring \cite{HolmLynch2002}. In this system, one may see regular exchanges between springing motion (oscillation) and swinging motion (rotation). The Lagrangian for the swinging spring is 
\begin{align}
L(\bx,{\bf{\dot{x}}}; {\bf{\hat{e}}}_3) = \frac{m}{2} |{\bf{\dot{x}}}|^2 - mg\,{\bf{\hat{e}}}_3\cdot {\bx} -  \frac{k}{2} \big( |\bx|^2-|\bx_0|^2\big)
\,,
\label{Lag-ESP1}
\end{align}
with notation  $(\bx,{\bf{\dot{x}}})\in T\mbb{R}^3$, vertical unit vector ${\bf{\hat{e}}}_3$ and constants of gravity $(g)$, mass of the bob $(m)$, isotropic spring constant $(k)$ and initial position $\bx_0\in \mbb{R}^3$.

The time dependent solution path for the Euler--Lagrange equations which follow from Hamilton's principle for the Lagrangian \eqref{Lag-ESP1} is denoted as $\bx(t)\in \mbb{R}^3$. One may lift the solution path $\bx(t)\in \mbb{R}^3$ into the Lie group $\mbb{R}_+\times SO(3)$ of scaling and rotation of vectors in $\mbb{R}^3$ by specifying its direct-product  action on an initial position $\bx_0\in\mbb{R}^3$ as 
\begin{align}
\bx(t) = R(t)O(t)\bx_0
\quad\hbox{for}\quad
\big(R(t),O(t)\big)\in \mbb{R}_+\times SO(3)
\,.
\label{RXO-act}
\end{align}
Under the scaling and rotation action in \eqref{RXO-act}, the terms in the Lagrangian \eqref{Lag-ESP1} transform as
\begin{align}
\begin{split}
|\bx|^2 &= |O(R\bx_0)|^2 = |R\bx_0|^2
\\
|{\bf{\dot{x}}}|^2 &= |\dot{R}\bx_0 + \mb{\Omega}\times R\bx_0|^2 
= |\dot{R}\bx_0|^2 + |\mb{\Omega}\times R\bx_0|^2
\\
{\bf{\hat{e}}}_3\cdot {\bx} &= (O^{-1} (t){\bf{\hat{e}}}_3)\cdot R(t)\bx_0 =: \mb{\Gamma}(t)\cdot R\bx_0
\,.
\end{split}
\label{RXO-action-TR3}
\end{align}
Here $O^{-1}\dot{O} =: \widehat{\Omega} =: \mb{\Omega}\times$, where $\widehat{\Omega}_{ij}=-\epsilon_{ijk}\Omega^k$ is the \emph{hat map} isomorphism which represents the angular frequency of rotation as induced by either the skew symmetric $3\times3$ matrix Lie algebra $\mathfrak{so}(3)$, or the cross product of vectors in 3D Euclidean space $\mbb{R}^3$. From its definition $\mb{\Gamma}(t):=O^{-1} {\bf{\hat{e}}}_3 \in \mbb{R}^3$, one finds the evolution equation
\begin{align}
\mb{\dot{\Gamma}} + \mb{\Omega}\times \mb{\Gamma} = 0\,,
\label{Gamma-eqn}
\end{align}
and one notes that $|\mb{\Gamma}|^2=|{\bf{\hat{e}}}_3|^2 =1$.

Under the scaling and rotation action in \eqref{RXO-act} the Lagrangian \eqref{Lag-ESP1} with $ \mb{S}:=R(t)\bx_0$ transforms as
\begin{align}
L(\mb{\Omega}, \mb{\Gamma}; \mb{S},\mb{\dot{S}}) 
=&\, \frac{m}{2} |\mb{\dot{S}} + \mb{\Omega}\times \mb{S}|^2 - mg\,\mb{\Gamma}\cdot \mb{S} 
-  \frac{k}{2} \big(|\mb{S}|^2 - |\bx_0|^2\big)
\nonumber\\
=&\, \frac{m}{2} |\mb{\dot{S}}|^2 + \frac{m}{2} |\mb{\Omega}\times \mb{S}|^2 - mg\,\mb{\Gamma}\cdot \mb{S} 
-  \frac{k}{2} \big(|\mb{S}|^2 - |\bx_0|^2\big)\label{Lag-ESP2}
%\\
%=& \frac{m}{2} |\mb{\Omega}\times \mb{S}|^2 - mg\,\mb{\Gamma}\cdot \mb{S} 
%+  \mb{P}\cdot \Big( \mb{\dot{S}} + \mb{\Omega}\times \mb{S} \Big)
%- \Big(\frac{1}{2m} |\mb{P}|^2 
%+  \frac{k}{2} \big(|\mb{S}|^2 - |\bx_0|^2\big)\Big)
\\
L(\mb{\Omega}, \mb{\Gamma}; \mb{S},\mb{P}) 
=&\underbrace{ \frac{m}{2} |\mb{\Omega}\times \mb{S}|^2 - mg\,\mb{\Gamma}\cdot \mb{S} }
_{\hbox{Rotations \& Gravity}}
+ \underbrace{ \Big(\mb{\Omega}\cdot \mb{S}\times \mb{P}\Big)
}_{\hbox{Coupling term}}
+ \underbrace{ 
 \mb{P}\cdot \mb{\dot{S}} 
- \Big(\frac{1}{2m} |\mb{P}|^2 
+  \frac{k}{2} \Big( |\mb{S}|^2 - |\bx_0|^2 \Big)\Big)
}_{\hbox{Oscillation phase-space Lagrangian}}
\,,\nonumber
\end{align}
where $\mb{P}:=\partial L/ \partial \mb{\dot{S}}=m\mb{\dot{S}}$ from the second line. The cross term in the square of the total velocity in the first line has vanished, because the swinging velocity $\mb{\Omega}\times \mb{S}$ and springing velocity $\mb{\dot{S}} $ are orthogonal. That is, $2\mb{\dot{S}} \cdot \mb{\Omega}\times \mb{S}=2\dot{R}\bx_0\cdot\mb{\Omega}\times R\bx_0=0$.
We see that the coupling term which boosts the spatial oscillations into the rotating frame also vanishes, i.e., the total angular momentum is given by
\begin{align}
\mb{\Pi}:=\frac{\partial L}{ \partial \mb{\Omega}} 
= m\mb{S}\times(\mb{\Omega}\times\mb{S}) + \mb{S}\times \mb{P}
= m\mb{S}\times(\mb{\Omega}\times\mb{S})
\quad\hbox{since}\quad
\mb{S}\times \mb{P} = 0
\,.
\label{Lag-Pi-mom}
\end{align}
With the vanishing of the coupling term, these manipulations have separated the original Lagrangian into an $SO(3)$-reduced Lagrangian for rotations ($\mb{\Omega}$) and translations ($\mb{\Gamma}$) in the \emph{body} frame, plus an $(\mb{S},\mb{P})$ phase-space Lagrangian for oscillations in the \emph{spatial} frame. This is consistent with our intuition that the springing motion could occur even if the spherical pendulum were not swinging. In this system, the springing oscillations are the analogues of the waves in WCI dynamics. Likewise, the swinging motion due to exchange of kinetic and gravitational energies in the rotating frame are the analogues of the currents interacting via exchanges in physics and energetics with their advected quantities in WCI. 

From their definitions, $O^{-1}\dot{O} =: \widehat{\Omega} =: \mb{\Omega}\times$ and $\mb{\Gamma}(t):=O^{-1} {\bf{\hat{e}}}_3$, a manipulation using the hat map delivers the variations of $\mb{\Omega}$ and $\mb{\Gamma}$ arising from variations of $O(t)\in SO(3)$ in $\mbb{R}^3$ vector form as
\begin{align}
\delta \mb{\Omega} = \mb{\dot{\Sigma}} +  \mb{\Omega}\times  \mb{\Sigma}
\quad\hbox{and}\quad
\delta \mb{\Gamma} = - \, \mb{\Sigma} \times \mb{\Gamma}
\quad\hbox{for}\quad
\mb{\Sigma} \times = O^{-1}\delta O
\,.
\label{Var-forms}
\end{align}
Upon substituting these variational formulas into Hamilton's principle, $\delta S = 0$, with action integral $S=\int_a^b L(\mb{\Omega}, \mb{\Gamma}; \mb{S},\mb{P}) \,dt$,  we have
\begin{align}
\begin{split}
0 = \delta S=\int_a^b & \big(-\mb{\dot{\Pi}} - \mb{\Omega}\times\mb{\Pi} + mg \mb{\Gamma}\times \mb{S}\big)\cdot  \mb{\Sigma}
+ \delta\mb{P} \cdot \big(\mb{\dot{S}} - \mb{P}/m\big)
\\& + \big(- \mb{\dot{P}} - k \mb{S} - mg \mb{\Gamma} - m \mb{\Omega}\times(\mb{\Omega}\times\mb{S})\big)\cdot \delta \mb{S}\,dt + \big[\mb{\Pi}\cdot \mb{\Sigma} \big]_a^b + \big[\mb{P} \cdot \delta\mb{S}\big]_a^b
\,.
\end{split}
\label{Var-forms}
\end{align}
Hamilton's principle now implies the dynamics for the elastic spherical pendulum, provided the variations $\mb{\Sigma}$ and $\delta\mb{S}$ vanish at the endpoints in time. These dynamics comprise two Euler--Poincar\'e equations for the rotations and librations,
\begin{align}
\begin{split}
\mb{\dot{\Pi}} + \mb{\Omega}\times\mb{\Pi} &= mg \mb{\Gamma}\times \mb{S}
\quad\hbox{and}\quad
\mb{\dot{\Gamma}} + \mb{\Omega}\times\mb{\Gamma} = 0
\,,\quad\hbox{for}\quad
\mb{\Pi}:=  m\mb{S}\times(\mb{\Omega}\times\mb{S})
\,,\end{split}
\label{Ham-princ-eqns1}
\end{align}
and two canonical Hamiltonian equations for the springing degree of freedom,
\begin{align}
\mb{\dot{S}} = \mb{P}/m
\quad\hbox{and}\quad
 \mb{\dot{P}} =  m\mb{\ddot{S}} = - k \mb{S} - mg \mb{\Gamma} - m \mb{\Omega}\times(\mb{\Omega}\times\mb{S})
\,.\label{Ham-princ-eqns2}
\end{align}
The first set of these equations has the form of a \emph{heavy top} whose vector $\mb{S}$ from the support to the centre of mass has its own dynamics. The second set reveals the $\mb{S}$ dynamics to be Newtonian with a sum of three forces, the spring restoring force, gravity and the centrifugal force.  Clearly, the oscillations in $\mb{S}$ will drive rotational motion in $\mb{\Pi}$ and $\mb{\Gamma}$, which will feed back to $\mb{S}$, provided the initial condition is not oriented vertically.\footnote{A vertical initial condition $\bx_0$ would make the initial gravitational torque vanish ($\mb{\Gamma}(0)\times \mb{S}(0)=0$), since $\mb{\Gamma}(0)={\bf{\hat{e}}}_3$. This would allow purely vertical oscillations which would not induce rotation starting from a stationary initial condition.} The discussion here of the swinging spring dynamics in which oscillations can drive rotations supports the analogous conclusions in the text (such as Corollary \ref{SALT-SNWP-KelThm}) that waves could drive currents.

\subsection{The stochastic swinging spring}

We might hope to use the same methods as in the text for stochastic WCI to include SALT noise in the swinging rotations of the elastic spherical pendulum. However, the analogy is not complete. In fact, the condition $\mb{S}\times \mb{P} = 0$ precludes introducing the analog of SNWP noise into the springing motions of the elastic spherical pendulum in the same way as we have done for the wave propagation in the EB fluid case in the text. The lesser task of including SALT noise only in the swinging rotations of the elastic spherical pendulum will not be pursued here, though, because the results would be too similar to the case of SALT noise for the rigid heavy top which has already been investigated in \cite{ACH-JNLS2018}.

\subsection{Gyroscopic analogy of non-inertial reference frames}

The Lagrangian for the free rotation of a rigid body at body angular frequency $\sym{\Omega}$
relative to a frame which is \emph{already} rotating about the same origin at a fixed angular frequency $\sym{\Upsilon}$ is given by
\[
\ell(\sym{\Omega};\sym{\Upsilon}) = \frac12\sym{\Omega}\cdot I\sym{\Omega} + \sym{\Omega}\cdot I\sym{\Upsilon}
\,,\]
where $I$ is the moment of inertia of the body. 
Hence, the total body angular momentum in the rotating frame is given by
\begin{equation}
\frac{\partial \ell( \sym{\Omega}; \sym{\Upsilon} ) } { \partial \sym{\Omega} } 
=: I(\sym{\Omega} +  \sym{\Upsilon})
\,.
\label{Rot-mom-shift}
\end{equation}

The corresponding Hamilton's principle
\[
0 = \delta S(\sym{\Omega};\sym{\Upsilon}) = \delta \int_{t_1}^{t_2}\frac12\sym{\Omega}\cdot I\sym{\Omega} + \sym{\Omega}\cdot I\sym{\Upsilon}\,dt
\,,\]
yields
\begin{equation}
\frac{d \sym{\Pi}}{dt} + \sym{\Omega}\times\sym{\Pi} = 0
\quad\hbox{with}\quad \sym{\Pi} := I(\sym{\Omega} +  \sym{\Upsilon})\,.
\label{Mom-shifted-eqn}
\end{equation}
Thus, moving into a rotating frame preserves the form of the rigid body equations in \eqref{Mom-shifted-eqn}. Consequently, this frame change preserves the conservation of $|\sym{\Pi}|^2= |I(\sym{\Omega} +  \sym{\Upsilon})|^2$. However, transforming into a rotating frame changes the definition of the angular momentum  $\sym{\Pi}$ to include the momentum associated with the additional angular velocity of the rotating frame, $\sym{\Upsilon}$. 

%As we will see in Appendix \ref{appendix-B}, this example provides a finite dimensional analogue for the deterministic Craik-Leibovich model. 

%%%%%%%%%%%%%%%%%%%%%%%%%%%%

\end{document}